\documentclass[11pt,a4paper]{article}
\usepackage[margin=1in]{geometry}
\usepackage{mathtools}
\usepackage{amsmath}
\usepackage{amssymb}
\usepackage{amsthm}
\usepackage[hidelinks]{hyperref}
\usepackage{mathrsfs}
\usepackage{amsbsy}
\usepackage{upgreek}
\usepackage[table]{xcolor}
\usepackage{enumerate}
\usepackage{todonotes}
\usepackage[normalem]{ulem}
\usepackage{graphicx}
\usepackage{caption}
\usepackage{subcaption}
\def\A{{\cal A}}

\def\OO{{\cal O}} 

\usepackage{authblk}
\usepackage{mathtools}

\DeclarePairedDelimiter\abs{\lvert}{\rvert}%
\DeclarePairedDelimiter\norm{\lVert}{\rVert}%

\makeatletter
\let\oldabs\abs
\def\abs{\@ifstar{\oldabs}{\oldabs*}}
\let\oldnorm\norm
\def\norm{\@ifstar{\oldnorm}{\oldnorm*}}

 \newtheorem{claim}{Claim}
 \newtheorem{remark}{Remark}
 \newtheorem{definition}{Definition}
\newtheorem{theorem}{Theorem}
 \newtheorem{lemma}{Lemma}
 \newtheorem{problem}{Problem}
 
\newtheorem{observation}{Observation}

\newcommand{\IR}{\mathbb{R}}

\usepackage{color}
\definecolor{Gray}{gray}{0.15}

\newcommand*{\email}[1]{\href{mailto:#1}{\nolinkurl{#1}} }

\title{Online  Dominating Set and Independent Set}
\author{ Minati De}  \author{Sambhav Khurana} \author{Satyam Singh}
\affil{Dept. of Mathematics,  
Indian Institute of Technology Delhi, New Delhi,
India-110016\\
\email{{minati,satyam.singh}@maths.iitd.ac.in, sambhav.khurana@alumni.iitd.ac.in}
}

\begin{document}


\date{\today}

\maketitle

\begin{abstract}



Finding minimum dominating set and maximum  independent set for graphs in the classical online setup are notorious  due to their disastrous $\Omega(n)$ lower bound of the competitive ratio that  even holds for interval graphs, where $n$ is the number of vertices. In this paper, inspired by Newton number, first, we introduce the independent kissing number~$\zeta$ of a graph.    We prove that the well known online greedy algorithm for dominating set achieves optimal competitive ratio~$\zeta$ for any graph.  We show that the same greedy algorithm achieves optimal competitive ratio~$\zeta$ for  online maximum independent set of a class of graphs with independent kissing number $\zeta$.  For minimum connected dominating set problem, we prove that  online greedy algorithm achieves an asymptotic  competitive ratio of~$2(\zeta-1)$, whereas for a family of translated convex objects the  lower bound is~$\frac{2\zeta-1}{3}$.  Finally, we study the value of $\zeta$ for some specific families of geometric objects: fixed and arbitrary oriented unit hyper-cubes in $\IR^d$,  congruent balls in $\IR^3$, fixed oriented unit triangles, fixed and arbitrary oriented regular polygons in $\IR^2$. For each of these families, we also present lower bounds of the minimum connected dominating set problem.



\noindent
\textbf{Keywords:}  independent kissing number, competitive ratio, dominating set, geometric intersection graph,   independent set, online algorithm.
\end{abstract}

\section{Introduction}
In this paper, we have considered online versions of some well known NP-hard problems on graphs. For a graph $G=(V,E)$,  a subset $D$ of $V$ is 
a \emph{{dominating set}} (DS)   if each vertex $v \in V$ is either in $D$ or it is adjacent to some vertex in $D$. The \emph{{minimum dominating set}} (MDS) problem involves finding  a dominating set of $G$ that has minimum number of vertices. A dominating set $D$ is said to be a \emph{{connected dominating set}} (CDS) if  the induced subgraph $G[D]$ is connected (if $G$ is not connected then $G[D]$ must be connected for each connected component of $G$). A dominating set $D$ is  said to be an \emph{{independent dominating set}} (IDS) if the induced subgraph $G[D]$ is an independent set. Note that an \emph{{independent set}} (IS) of a graph $G(V,E)$ is a subset $I\subseteq V$ such that no two  vertices of $I$ are adjacent to each other.  The \emph{{maximum independent set}} (MIS) problem  is to find an independent set with maximum number of vertices.

Apart from theoretical implications, there are several  applications of  these problems. For example, a DS can be thought of as transmitting stations that can transmit messages to all stations in the network~\cite{liu9introduction}. A CDS is  often used as the virtual backbone in a network where nodes in the CDS are responsible to relay messages~\cite{eriksson1994mbone, GuhaK98}.
A relevant issue arises regarding the maintenance of the DS or CDS as the
network topology changes with time. On the other hand, the IS plays a vital role in  resource scheduling~\cite{HalldorssonIMT02}. Requests for resources or sets of resources arrive online, and two requests can only be serviced simultaneously if they do not involve the same resource.

Online computation models  the real world phenomena: irreversibility of time (or decisions) and uncertainty of future (or  order of input). As a result, online models of computation get a considerable amount of attention in recent time. In the classical online setting for graph problems, the graph is revealed vertex by vertex, together with edges to all previously revealed vertices. One needs to maintain a feasible solution for the revealed vertices, and need to  make a decision whether to include the newly revealed vertex into the solution set.  The decision to include a vertex in the solution set can not be changed  later. Unfortunately,  in this classical online model, for  both MDS and MIS problems, the competitive ratio has a lower bound of $\Omega(n)$ that holds even for interval graphs, where $n$ is the number of vertices revealed to the algorithm~\cite{KingT97,LiptonT94}. Due to this hopeless lower bound,   the literature on  these problems in the classical online model is rather scarce in comparison with scheduling, routing, and packing problems. In order to obtain meaningful results, one often needs to restrict inputs to  a special class of graphs or change the model of computation.

In this paper, we have considered geometric intersection graphs. They have various practical applications in wireless sensors and network routing~\cite{ClarksonV07,Har-PeledQ15}. For a family ${\cal S}$ of geometric  objects in $\IR^d$, the geometric intersection graph $G$ of ${\cal S}$ is an undirected graph with set of vertices same as ${\cal S}$ and the set of edges is defined as $E=\{(u,v)| u,v\in {\cal S}, u\cap v\neq \emptyset \}$. 
For example, if $\cal S$ is a family of disks then the corresponding graph is known to be a disk graph.  


\subsection{Model of computation}
 Similar to Boyar et al.~\cite{BoyarEFKL19}, we consider vertex arrival model. In the context of geometric intersection graphs,  the input objects are not known apriori, rather they are revealed one by one in discrete time steps. An online algorithm needs to take irrevocable decision after each time step without knowing the future part of the input, it needs to maintain a valid dominating set. As in~\cite{Eidenbenz}, in case of CDS, not only the newly revealed object, the online algorithm can add any previously seen objects to the existing solution set to make the graph connected. We analyze the quality of our online algorithm by competitive analysis~\cite{BorodinE}, where 
competitive ratio is the measure of comparison. In the following we have defined competitive ratio with respect to a minimization problem. For a maximization problem, one can define it in a similar way.  An online algorithm $ALG$ for a minimization problem is said to be \emph{{$\alpha$-competitive}}, if there exists a constant $\beta$ such that for any input sequence $\sigma$, we have 
$| ALG_\sigma|  \leq \alpha \times | OPT_\sigma|  +\beta$, where $ALG_\sigma$ and $OPT_\sigma$ are  the cost of the solution produced by the algorithm $ALG$ and  an optimal offline algorithm, respectively. The smallest $\alpha$ for which $ALG$ is $\alpha$-competitive is known as \emph{{asymptotic competitive ratio}} of $ALG$. The smallest $\alpha$ for which $ALG$ is $\alpha$-competitive with  $\beta$= $0$ is called the \emph{{absolute competitive ratio}} of $ALG$~\cite{ChrobakDFN20}. Throughout the paper, if not mentioned explicitly, by competitive ratio we mean absolute competitive ratio.

\subsection{Related Works}

The dominating set and its variants are well studied in the offline setup for graphs as well as geometric objects.
Finding MDS  is already known to be NP-hard~\cite{10.6/574848,LenzenW10} in the offline setup even for unit disks and unit squares. Polynomial time approximation scheme (PTAS) is known when the objects are homothets of a convex objects~\cite{DeL16}. 
King and Tzeng~\cite{KingT97} initiated the study in the context of online version for MDS. They showed that for general graph, the greedy algorithm achieves a competitive ratio of $n-1$, which is also tight bound  achievable by any online algorithm for the MDS problem, where $n$ is the number of vertices of graph. Even for interval graph, the lower bound of the competitive ratio is $n-1$ and   the simple greedy algorithm  achieves this bound~\cite{KingT97}. Eidenbenz~\cite{Eidenbenz} and Boyar et al.~\cite{BoyarEFKL19} studied DS, CDS and IDS for specific graph classes: trees, bipartite graphs, bounded degree graphs, planar graphs. Their results are summarized in Table 2 of~\cite{BoyarEFKL19}. For tree, Eidenbenz~\cite{Eidenbenz} and Boyar et al.~\cite{BoyarEFKL19} gave an upper bound on the competitive ratio as three for MDS. Later, Kobayashi~\cite{Kobayashi17} proved that three is also the lower bound for tree. Eidenbenz~\cite{Eidenbenz} proved that greedy algorithm achieves the tight bound of five for MDS of unit disk graph. For CDS, they showed that greedy algorithm achieves competitive ratio of $8+\epsilon$ whereas the lower bound is $10/3$. In this paper, we have studied the performance of the greedy algorithm for  other geometric intersection graphs. 

The maximum independent set problem is also an NP-hard problem even for  graphs with bounded degree 3~\cite{BermanF99}.   In the geometric setup, when the objects are pseudo disks PTAS is known  due to Chan and Har{-}Peled~{\cite{ChanH12}}.
The study of online versions of IS was initiated due to scheduling of intervals~\cite{LiptonT94}. 
In the classical online setup, even for interval graphs, it is known that no algorithm can achieve a competitive ratio  better that $n-1$, where $n$ is the number of nodes~\cite{OliverHKSV14, LiptonT94}. As a result, researchers have studied this problem in several other variants of online models.  Halldorsson et al.~\cite{HalldorssonIMT02}  studied the MIS problem for  graphs in  two different models. In the first model, the algorithm can maintain a multiple number
$r(n)$ of solutions and choose the largest one as the final one, where $n$ is the number of vertices. They proved that the best competitive ratio for this model is $\theta(\frac{n}{log n})$ when $r(n)$ is a polynomial, and $\theta(n)$ when $r(n)$ is a constant. Then they introduced a more powerful model where the algorithm can copy
intermediate solutions and extend the copied solutions in different ways.
In this model, they showed that the problem attains an upper
bound $O(\frac{n}{log n})$ and a lower bound $\Omega(\frac{n}{log^3 n})$ when $r(n)$ is a polynomial, and it achieves a tight bound of $\theta(n)$ when $r(n)$ is a constant. 
Due to the devastating lower bound of $\Omega(n)$ in the worst case analysis,  Göbel et al.~\cite{OliverHKSV14} 
 studied the online independent set via stochastic analysis.
 With stochastic analysis, they proved that graphs, with bounded inductive independence number $\rho$,  has competitive ratio $4c^3\rho^2$, where $c\geq 1$. Note that the inductive independence number for  geometric intersection graph of translates of a convex object in $\IR^2$ is $3$~\cite{YeB12}. Ye and Borodin~\cite{YeB12} conjectured that  the inductive independence number is $5$ for geometric intersection graph of translates of a convex object  in $\IR^3$. Therefore, the result in~\cite{OliverHKSV14} is at least $36$ and $100$ competitive for translates of a convex object in $\IR^2$ and $\IR^3$, respectively.
Göbel et al.~\cite{OliverHKSV14} also studied the weighted version of the MIS problem and proved using stochastic analysis that 
one can achieve $O(\rho^2\log n)$ competitive ratio, whereas   they have proved that the lower bound of the problem is $\Omega(\frac{\log^2 \log n}{\log n})$ even for weighted interval graphs. In this paper, we have studied the worst case analysis of the MIS problem for geometric intersection graphs in the classical online model.
\subsection{Notation and Preliminaries}
We use $[n]$ to denote the set $\{1,2,\ldots,n\}$. We use $\mathbb{Z}_n$ to denote the set $\{0,1,\ldots,n-1\}$ of non-negative integers less than $n$, where the  arithmetic operations such as addition, subtraction and multiplication are modulo $n$. By a \emph{geometric object}, we refer to a compact convex set  in $\IR^d$ with non-empty interior.  The \emph{center} of a geometric object  is defined as the centre of the smallest radius ball inscribing that geometric object. The \emph{neighbourhood} $N(u)$ of a geometric object $u$, belonging to a family $\cal S$ of geometric objects, is defined as the region that contains all the centers of objects  of $\cal S$ that  touch or intersect $u$. A geometric object whose center lies in $N(u)$ is referred as a \emph{{neighbour }} of $u$. Two objects are \emph{{congruent}} if one can be transformed into the other by  a combination of rigid motions, namely a translation, a rotation, and a reflection.
Two geometric objects are said to be \emph{{non-overlapping}} if they have no common interior, whereas we call them \emph{{non-touching}} if their intersection is empty. For a convex object $u \subset \IR^d$, \emph{{Newton number}} (a.k.a. \emph{{Kissing number}}), denoted as $K(u)$, is the maximum number of non-overlapping congruent copies of $u$ that can be arranged around $u$ so that each of them touches $u$~\cite{DBLP:journals/tcs/DumitrescuGT20,BrassMP}.
Some known bounds are as follows: $K(P_3)=12, K(P_4)=8$ and 
  $K(P_n)=5$, where $P_n$ is a regular polygon with  $n \geq 5$~\cite{article};
 $K(B_2)=6$, $K(B_3)=12$, where $B_2$ and $B_3$ are balls in $\IR^2$ and $\IR^3$, respectively~\cite{BrassMP,DBLP:journals/tcs/DumitrescuGT20, Leech,Schutte}.
 The kissing number for hyper-cubes is at least $3^d-1$, and whether this bound is sharp or not is still not settled. Throughout the paper, we have used $opt_{cds}$ to denote the size of the optimal connected dominating set in the offline setup. 

\begin{table}[!h]
\tiny
\centering
\setlength{\arrayrulewidth}{0.25mm}
\setlength{\tabcolsep}{6pt}
\renewcommand{\arraystretch}{2}
\begin{tabular}{|p{3.25cm}|p{1.25cm}|p{2.25cm}|p{1.75cm}|p{1.65cm}|p{1.45cm}|p{1.15cm}|}
\hline
  \multicolumn{1}{|c|}{} &  \multicolumn{1}{c|}{$K(U)$} & \multicolumn{1}{c|} {} &
 \multicolumn{1}{c|} {$\zeta$, DS, IDS, IS} & \multicolumn{3}{c|}{CDS}\\
\hline \hline
  Types of Geometric Objects &  Bounds & Previously known upper bound for IS $4c^3\rho^{2\$}$~\cite{OliverHKSV14} & Bounds   & Asymptotic competitive ratio of $GCDS$ & Lower Bound when $^*opt_{cds}=1$ & Lower Bound otherwise\\
\hline
\hline
Fixed oriented unit squares  &\cellcolor{gray}$8$~\cite{Youngs} & \cellcolor{gray}$36c^3$~\cite{OliverHKSV14,YeB12}   & $4$ & $6$ & $6$ & $\frac{7}{3}$ \\
\hline
Fixed oriented  unit hyper-cubes in $\IR^d$ & \cellcolor{gray}($3^{d}-1$,-)  & \cellcolor{gray}$4c^3(2d-1)^{2\#}$~\cite{OliverHKSV14,YeB12} & $2^d$ &  ($2^{d+1}-2$) & $7.2^{d-2}-1$ & $\frac{7.2^{d-2}}{3}$ \\
\hline
  Arbitrary oriented unit squares &\cellcolor{gray} $8$~\cite{Youngs} & $\_\_$ & $\{6,7\}$ & $12$ & $10$ & $\frac{11}{3}$ \\
  \hline
 Arbitrary oriented  unit hyper-cubes in $\IR^d$ & \cellcolor{gray}$(3^{d}-1$,-) & $\_\_$ & $[3.2^{d-1}$,-) & \_\_  & $11.2^{d-2}-1$ &  $\frac{11.2^{d-2}}{3}$ \\
 \hline
\rowcolor{gray}
\cellcolor{white}Unit disks & $6$~\cite{BrassMP,article} &\cellcolor{gray}$36c^3$~\cite{OliverHKSV14,YeB12}& $5$~\cite{Eidenbenz} & $6.906$~\cite{Eidenbenz,1512873} & $8$ & $3$ ~\cite{Eidenbenz}\\ \hline
Unit balls &\cellcolor{gray} $12$~\cite{BrassMP} &  \cellcolor{gray}$100c^{3\#}$~\cite{OliverHKSV14,YeB12}  & $12$ & $22$ & $17$ & $6$\\ \hline
Fixed oriented  unit triangles & \_\_ & \cellcolor{gray}$36c^3$~\cite{OliverHKSV14,YeB12}  & $\{5,6\}$ & $ 10$ & $9$ & $\frac{10}{3}$\\  \hline
Regular $k$-gons (fixed oriented) &\cellcolor{gray} $6$~\cite{article} & \cellcolor{gray}$36c^3$~\cite{OliverHKSV14,YeB12}  & $\{5,6\}$ & $ 10$ & $8$ & $3$\\ \hline
Arbitrary oriented Regular $k$-gons ($5\leq k \leq 10$) & \cellcolor{gray}$6$~\cite{article} & \_\_ & $6$ & $ 10$ & 10 & $\frac{11}{3}$ \\
\hline
\end{tabular}
  \begin{flushleft}
  \item \tiny{$^\$$ $c\geq 1$ and $\rho  $ is inductive independent number. }
  \item \tiny{*$opt_{cds}$ is the size of an optimal solution for CDS in the offline setup.}
  \item \tiny{$^{\#}$ Ye and Borodin~\cite{YeB12} conjectured that  the inductive independent number $\rho$ of translates of a convex object  in $\IR^d$ is $2d-1$. }
 \end{flushleft}
\caption{ Summary of Results: previously known results are marked in gray.}

\label{table_contri}
\end{table}

\normalsize

\subsection{Our Contributions}

First, 
similar to the kissing number, we introduce independent kissing number~$\zeta$ (see Definition~\ref{def1}) for a family of geometric objects, as well as for  graphs. Note that for a set of congruent objects, the value of independent kissing number is bounded from above by its kissing number. The independent kissing number is a fixed constant for several families of geometric objects: translated and/or rotated copies of a convex object in $\IR^d$, fat objects etc. Of course, the value depends upon $d$.

Next, for any  graph with independent kissing number~$\zeta$, we prove that the well known online greedy algorithm for dominating set achieves a competitive ratio of~$\zeta$, and any deterministic online algorithm can achieve a competitive ratio  at most~$\zeta$. This implies that 
the greedy algorithm for dominating set is optimal for any graph in the classical online setup. Note that the online algorithm need not know the value of~$\zeta$.
Since the greedy algorithm reports independent dominating set, the same result holds for independent dominating set for geometric intersection graphs in the online setup. 

For the maximum independent set problem, we prove that the same greedy algorithm achieves an optimal competitive ratio of~$\zeta$ for  graphs with independent kissing number~$\zeta$.  This result improves the best known upper bound of the problem for several geometric intersection graphs. The comparison is given in Table~\ref{table_contri}.

Next, we prove that,  for any  graph with independent kissing number~$\zeta$, the online greedy algorithm for connected dominating set achieves an asymptotic competitive ratio at most $2(\zeta-1)$. For a geometric intersection  graph of   translated copies of a convex objects in $\IR^2$  with independent kissing number $\zeta$, we prove that CDS has a lower bound of $\frac{2\zeta-1}{3}$.

As a consequence, the independent kissing number~$\zeta$  becomes an interesting  parameter of a graph. To get the idea of the value of~$\zeta$ for specific families of geometric objects,  we consider fixed oriented unit hyper-cubes in $\IR^d$, congruent balls in~$\IR^3$, fixed oriented unit triangles and regular polygons in~$\IR^2$. 
We show that the value of~$\zeta$ for  fixed oriented unit hyper-cubes in $\IR^d$ is $2^d$. 
It happens to be that for congruent balls  in~$\IR^3$, the value of~$\zeta$ is~12 which is also  same as its  kissing number. For both  fixed oriented unit triangles and fixed oriented regular polygons in~$\IR^2$, we  show that $5\leq \zeta\leq 6$. For each of these specific families, we also study the lower bound of CDS problem.  Our results are summarized in Table~\ref{table_contri}.

\subsection{Organization}

In Section~\ref{a:1}, we introduce the independent kissing number. Next,  in Section~\ref{sect:greedy}, we have discussed the performance of well known online greedy algorithms for MDS, MIS and MCDS problem for graphs with independent kissing number $\zeta$. Then in Section~\ref{sec:lb_MCDS}, we propose a lower bound of MCDS problem for translated copies of a convex object in $\IR^2$. After that, in Sections~\ref{3}-\ref{7}, we have discussed the value of $\zeta$ and lower bound of MCDS for specific families of geometric intersection graphs. Finally, in Section~\ref{10}, we give a conclusion.

\section{ Independent Kissing Number}\label{a:1}

First, we introduce \emph{{independent kissing  number}} $\zeta$ for a family  $\cal S$ of geometric objects in $\IR^d$.

\begin{definition}[Independent Kissing Number]\label{def1}
Let $\cal S$ be a family of geometric objects, and 
let $u$ be any object belonging to the family ${\cal S}$.  Let $\zeta_{u}$ be the maximum number of pairwise non-touching objects in $\cal S$ that can be arranged in and around $u$ such that all of them are dominated/intersected by $u$. The independent kissing number  $\zeta$ of  $\cal S$ is defined to be $\max_{u\in {\cal S}}\zeta_u$.
\end{definition}

 A set $K$ of objects belonging to the family $\cal S$ is said to form an \emph{independent kissing configuration} if (i) there exists an object $u\in K$  that intersects all the objects in $K\setminus\{u\}$, and (ii) all the objects in   $K\setminus\{u\}$ are  mutually non-touching to each other. Here $u$ and $K\setminus\{u\}$ are said to be  the \emph{core} and \emph{independent set}, respectively, of the independent kissing configuration. The  configuration is  considered \emph{optimal} if $| K\setminus\{u\}| =\zeta$, where $\zeta$ is the independent kissing number of $\cal S$.
The  configuration is said to be \emph{standard} if all the objects in $K\setminus\{u\}$ are mutually non-overlapping with $u$, i.e., their common interior is empty but intersection is non-empty. For illustration,  see Figure~\ref{fig:Independent kissing}.

\paragraph{Kissing Number vs Independent Kissing Number}
Note that the kissing number is defined mainly for  a set of congruent objects. On the other hand, the  independent kissing number is defined for any set of geometric objects. In kissing configuration, the objects around $u$ are non-overlapping but they are allowed to touch each other; on the other hand, in the independent kissing configuration, the objects around $u$ are non-touching. For more illustration, see Figure~\ref{fig:configuration}. 
For a set of congruent copies of $u$, it is easy to observe that  the value of independent kissing number is at most $K(u)$, the kissing number of $u$.

\begin{figure}[h]
  \centering
     \begin{subfigure}[b]{0.45\textwidth}
         \centering
\includegraphics[scale=0.55]{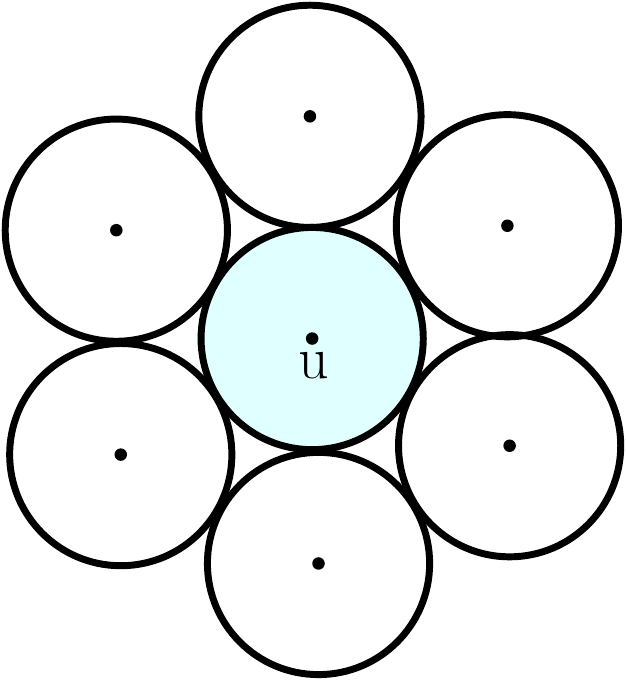} 
\caption{}
\label{fig:pentagon}
     \end{subfigure}
     \hfill
     \begin{subfigure}[b]{0.45\textwidth}
         \centering
\includegraphics[scale=0.55]{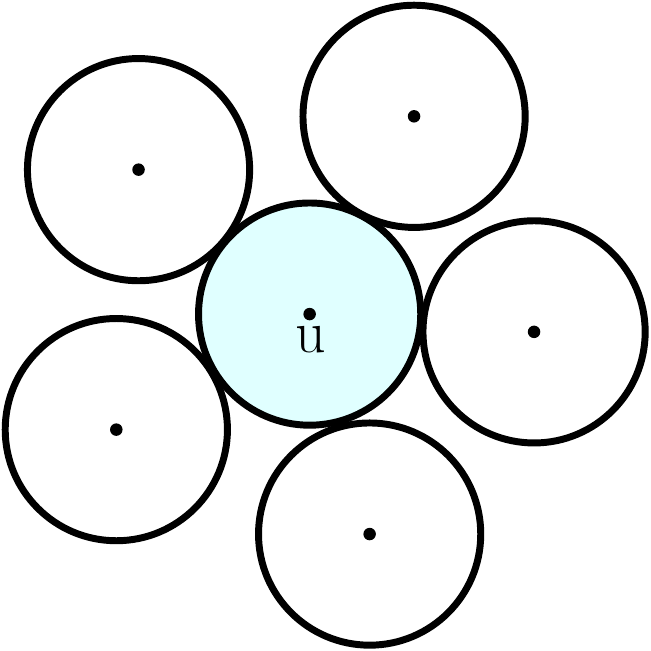}
\caption{}
\label{fig:Independent kissing}
     \end{subfigure}
       \caption{(a) Optimal kissing configuration, and   (b) optimal independent kissing configuration for unit disks.}
       \label{fig:configuration}
\end{figure}

\paragraph{Graph Theoretic View}  
Now, we give an alternative definition of independent kissing number in terms of graph. Let $G=(V,E)$ be a graph.
 For any vertex $v\in V$,  we denote $N(v)=\{u(\neq v) \in V | (u,v) \in E\}$ as the neighbourhood of the vertex $v$. For any subset $V'\subseteq V$, the \emph{induced subgraph} $G[V']$ is the graph whose vertex set is $V'$ and the edge set consists of all of the edges in $E$ that have both endpoints in  $V'$.
 By $\alpha(G)$, we denote  the size of a maximum independent set of a graph $G$. The independent kissing number $\zeta$ of a graph $G$ is defined as $max_{v \in V} \{\alpha(G[N(v)])\}$.


\section{Online Greedy Algorithms For Graphs }
\label{sect:greedy}
In  this section, we discuss popularly known greedy  online algorithms for MDS, MIS and MCDS  for graphs.  We show how their performance depend on the independent kissing number $\zeta$. Note that the algorithms need not know the value of $\zeta$ in advance.
\subsection{Greedy Algorithm for Dominating Set ($GDS$)}\label{2.1}
 Suppose that, after observing first $t-1$ vertices, we have a solution set $\A_{t-1}$.  Let $v_t$ be a new   vertex.
\begin{description}
  \item[$\bullet$ ] If $v_t$ is dominated by $\A_{t-1}$, then we set  $\A_t=\A_{t-1}$.
  \item[$\bullet$ ] Otherwise,  we will add $v_t$ into our solution set by doing $\A_t=\A_{t-1}\cup \{v_t\}$.
\end{description}

\begin{observation}\label{obs1}
The  set of vertices returned by the  {$GDS$ algorithm} are
pairwise non-adjacent. In other words, the solution set is always  an   independent set. 
\end{observation}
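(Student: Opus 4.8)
The plan is to prove this by induction on the number of vertices processed, exploiting the fact that the only mechanism by which a vertex enters the solution set is the ``otherwise'' branch, which fires precisely when the vertex is \emph{not} dominated by the current solution. I would first unpack what ``not dominated'' buys us: by the definition of domination, a vertex $v_t$ is dominated by $\A_{t-1}$ exactly when $v_t \in \A_{t-1}$ or $v_t$ is adjacent to some member of $\A_{t-1}$. Hence the negation---the condition under which $GDS$ actually inserts $v_t$---guarantees simultaneously that $v_t \notin \A_{t-1}$ and that $v_t$ has \emph{no} neighbour in $\A_{t-1}$. That second consequence is exactly the independence condition we need for the new vertex against all earlier ones.

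With that observation in hand, the induction is immediate. For the base case, the initial solution set is empty (or a singleton after the first addition), which is vacuously an independent set. For the inductive step, I would assume $\A_{t-1}$ is an independent set and analyse the two branches of the algorithm. If $v_t$ is dominated by $\A_{t-1}$, then $\A_t = \A_{t-1}$ and the claim is preserved verbatim. If $v_t$ is not dominated, then by the unpacking above $v_t$ is non-adjacent to every vertex of $\A_{t-1}$; combined with the inductive hypothesis that $\A_{t-1}$ is already pairwise non-adjacent, the augmented set $\A_t = \A_{t-1}\cup\{v_t\}$ remains pairwise non-adjacent. This closes the induction and establishes that every $\A_t$, and in particular the final returned set, is an independent set.

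I do not anticipate a genuine obstacle here; the statement is essentially a structural invariant of the greedy rule, and the entire content is the observation that ``added'' and ``undominated'' are synonymous in this algorithm, with ``undominated'' forcing non-adjacency to the current set. The only point requiring a moment of care is to state the definition of domination explicitly so that the passage from ``$v_t$ not dominated by $\A_{t-1}$'' to ``$v_t$ non-adjacent to all of $\A_{t-1}$'' is formally justified rather than merely asserted; once that equivalence is spelled out, the induction is routine.
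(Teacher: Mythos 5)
Your proof is correct and matches the reasoning the paper relies on: the paper states this as an immediate structural invariant of the greedy rule (a vertex is added only when it is undominated by, hence non-adjacent to, the current solution set), which is exactly the equivalence you unpack and then carry through a routine induction.
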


As a result of Observation~\ref{obs1}, the $GDS$ algorithm achieves the same competitive ratio for both MDS and MIDS problems.
\paragraph{Upper Bound of the Algorithm}

Let $G$ be any graph with  independent kissing number $\zeta$.
 Let $\A_t$ be a set of vertices   reported by our $GDS$ {algorithm} after receiving first $t$ vertices $v^t=\{v_1,\ldots,v_t\}$. 
Similarly, let $\OO_t$ be the set of vertices in the optimum set for $v^t$. 
 Since $\OO_t$ also dominates $\A_t$, any vertex $A\in \A_t$  will either belong to the optimal set $\OO_t$ or will be the neighbour of a vertex in $\OO_t$. If $\A_t \cap \OO_t \neq \emptyset$, then remove all the common elements from $\A_t$ and  $\OO_t$ resulting, $\A'_t \cap \OO'_t = \emptyset$ ($\frac{| \A'_t| }{| \OO'_t| } \leq M$ implies $\frac{| \A_t| }{| \OO_t| } \leq M$). Without loss of generality, assume that $\A_t \cap \OO_t = \emptyset$.
 Thus, each vertex $A\in \A_t$ will be in the neighbour of at least a vertex in $\OO_t$. Note that all the vertices in $\A_t$ are pairwise non-adjacent (Due to Observation~\ref{obs1}). From the definition of independent kissing number (Definition~\ref{def1}), we know that each vertex $O\in\OO_t$ can have at most $\zeta$  non-adjacent  vertices of $\A_t$ in its neighbourhood. Therefore, $| \A_t| \leq \zeta| \OO_t| $, and our algorithm achieves a competitive ratio of at most~$\zeta$.

\noindent
\paragraph{Lower Bound of the Problem}
 Let $v_1,v_2,\ldots,v_{\zeta}$ \& $v_{\zeta+1}$ be the vertices coming one by one to the online algorithm. Since $\zeta$ is the independent kissing number of the  graph under consideration,
with an adaptive deterministic adversary, we can choose vertices   $v_1,v_2,\ldots,v_{\zeta}$ such that all of them are pair-wise non-adjacent and can be dominated by a single vertex $v_{\zeta+1}$. Thus, any online algorithm will report first $\zeta$ vertices as dominating set, whereas the  optimum dominating set contains only the last vertex $v_{\zeta+1}$.  Thus the lower bound on the competitive ratio for minimum dominating set is~$\zeta$.
Therefore, we have the following.
 \begin{theorem}\label{th:mds_geometric_objects}
For a graph $G$ with independent kissing number $\zeta$, the  greedy dominating set ($GDS$) algorithm is optimal  and it achieves a competitive ratio of $\zeta$.
\end{theorem}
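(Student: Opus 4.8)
The plan is to establish the theorem in two independent directions: an upper bound showing that $GDS$ is at most $\zeta$-competitive, and a matching lower bound showing that no deterministic online algorithm can beat $\zeta$. Together these certify that $GDS$ is optimal, and since both bounds will hold with additive constant $\beta = 0$, the conclusion is about the absolute competitive ratio, consistent with the paper's convention.

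For the upper bound, I would fix an arbitrary input prefix $v^t = \{v_1,\ldots,v_t\}$ and compare the greedy solution $\A_t$ against an offline optimum $\OO_t$ on the same prefix. The key structural fact is Observation~\ref{obs1}: the set $\A_t$ is always independent, since a vertex is inserted only when it is undominated, and in particular non-adjacent to every previously inserted vertex. After discarding the vertices common to $\A_t$ and $\OO_t$ (which only improves the ratio), I may assume the two sets are disjoint. Because $\OO_t$ dominates every revealed vertex, each $A\in\A_t$ is adjacent to some $O\in\OO_t$, i.e. $A$ lies in the neighbourhood $N(O)$. Charging each algorithm-vertex to such a dominating optimal vertex partitions $\A_t$ among the vertices of $\OO_t$, and the part charged to a fixed $O$ is an independent subset of $N(O)$. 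By the graph-theoretic definition $\zeta = \max_{v\in V}\alpha(G[N(v)])$, each such part has size at most $\zeta$, whence $|\A_t|\le \zeta\,|\OO_t|$.

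For the matching lower bound, I would exhibit an adaptive adversary forcing any deterministic online algorithm to pay $\zeta$ while the offline optimum pays $1$. Since $\zeta = \alpha(G[N(v)])$ for some vertex $v$, there exist $\zeta$ pairwise non-adjacent vertices all lying in $N(v)$; I present these one per round. As no two are adjacent, each arrives undominated by the current partial solution, so the algorithm is forced to include it to keep a valid dominating set, thereby accumulating all $\zeta$. Finally I reveal $v$, which dominates the whole configuration, so the offline optimum is the single vertex $v$, yielding a ratio of $\zeta$.

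The arguments are essentially bookkeeping, and the only real subtlety lies in making the charging in the upper bound airtight: I must verify that the part charged to each $O\in\OO_t$ is genuinely independent (it is, being a subset of the globally independent set $\A_t$) and that it sits entirely inside $N(O)$, so that the bound $\alpha(G[N(O)])\le\zeta$ applies verbatim. A further point worth stating explicitly is that $GDS$ never needs to know $\zeta$; the upper bound holds for whatever value $\zeta$ the graph happens to have, which is precisely what makes the optimality claim meaningful.
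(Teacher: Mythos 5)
Your proposal is correct and follows essentially the same route as the paper's own proof: the upper bound via Observation~\ref{obs1} (independence of the greedy solution), discarding common vertices, and charging each vertex of $\A_t$ to a dominating vertex of $\OO_t$ whose neighbourhood can contain at most $\zeta$ independent vertices; and the lower bound via an adversary that presents $\zeta$ pairwise non-adjacent vertices of $N(v)$ followed by $v$ itself. The only difference is presentational—your explicit partition/charging phrasing makes the counting slightly more airtight than the paper's informal statement, but the underlying argument is identical.
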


\subsection{Maximum Independent Set}\label{9}

Recall that the algorithm $GDS$ defined in Section~\ref{2.1}, produces a dominating set which is also an independent set for a graph $G$. Here, we prove that $GDS$ 
 yields a competitive ratio  $\zeta$ matching the lower bound  for the maximum independent set problem as well. 

\begin{theorem}
There exists a deterministic online algorithm having competitive ratio at most $\zeta$ for maximum independent set of a graph $G$ with  independent kissing number $\zeta$. This result is tight: the competitive ratio of any
deterministic online algorithm for this problem is at least $\zeta$.
\end{theorem}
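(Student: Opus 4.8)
The plan is to prove both directions with the machinery already set up for dominating sets. For the upper bound I would run exactly the greedy algorithm $GDS$ of Section~\ref{2.1} on the instance. The key structural fact is that the set $\A$ it returns is a \emph{maximal} independent set: it is independent by Observation~\ref{obs1}, and it is a dominating set because every arriving vertex is either already dominated by the current solution or is added to it. Let $\OO$ be an offline maximum independent set. Since $\A$ dominates $\OO$, I would define a charging map $f\colon\OO\to\A$ that sends each $O\in\OO\cap\A$ to itself and each remaining $O\in\OO$ to an arbitrarily chosen neighbour of $O$ lying in $\A$ (one exists by domination). The whole proof then reduces to bounding, for each fixed $A\in\A$, the number of vertices of $\OO$ charged to it.

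This is where the independent kissing number enters, through its graph-theoretic form $\zeta=\max_{v}\alpha(G[N(v)])$. Every $O\in f^{-1}(A)$ lies in $\{A\}\cup N(A)$ and belongs to the independent set $\OO$. If $A\in\OO$, then no other vertex of $\OO$ can be adjacent to $A$, so $f^{-1}(A)=\{A\}$; otherwise $f^{-1}(A)\subseteq N(A)\cap\OO$, which is an independent set inside the induced subgraph $G[N(A)]$ and hence has size at most $\alpha(G[N(A)])\le\zeta$. Summing over the partition of $\OO$ into preimages yields $|\OO|=\sum_{A\in\A}|f^{-1}(A)|\le\zeta|\A|$, giving competitive ratio at most $\zeta$. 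The one delicate point is the separate treatment of $A\in\OO$: handling it is exactly what keeps the per-vertex charge at $\zeta$ rather than an inflated $\zeta+1$.

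For the matching lower bound I would use an adaptive adversary built from an optimal independent kissing configuration, namely a core vertex $u$ whose neighbourhood contains pairwise non-adjacent vertices $v_1,\dots,v_\zeta$ (such $u$ exists by the definition of $\zeta$). I expect this to be the main obstacle, because the argument must differ from the dominating-set lower bound: if the adversary revealed the independent vertices first, a deterministic algorithm could simply accept all $\zeta$ of them and match the optimum. So the adversary must reveal the core \emph{first} and branch on the algorithm's irrevocable decision. If the algorithm selects $u$, the adversary reveals $v_1,\dots,v_\zeta$; each is adjacent to $u$, so none can be added, leaving the algorithm with only $\{u\}$ while $\{v_1,\dots,v_\zeta\}$ is feasible, forcing ratio $\zeta$. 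If the algorithm declines $u$, the adversary stops at once, so the algorithm reports the empty set against the optimum $\{u\}$ and its ratio is unbounded, hence at least $\zeta$. The crux is this adaptivity together with the observation that irrevocability makes the commitment to $u$ fatal; once that is in place, every deterministic algorithm is driven to ratio at least $\zeta$, completing the proof.
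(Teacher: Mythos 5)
Your proof is correct and follows essentially the same route as the paper: your upper bound runs the greedy $GDS$ and charges each vertex of the optimum to a vertex of the greedy solution dominating it, bounding each preimage by $\zeta$ via the graph-theoretic definition of the independent kissing number (your explicit treatment of $\A\cap\OO$ merely tightens a point the paper states loosely), and your lower bound is the same adaptive adversary that reveals a core vertex and, only upon acceptance, surrounds it with $\zeta$ pairwise non-adjacent neighbours. The only cosmetic difference is in the decline branch, where the paper's adversary (Alice) keeps presenting far-away vertices instead of halting, but both variants make rejection fatal and yield the same bound of $\zeta$ on the absolute competitive ratio.
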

\begin{figure}[h]
\centering
\includegraphics[width=100mm]{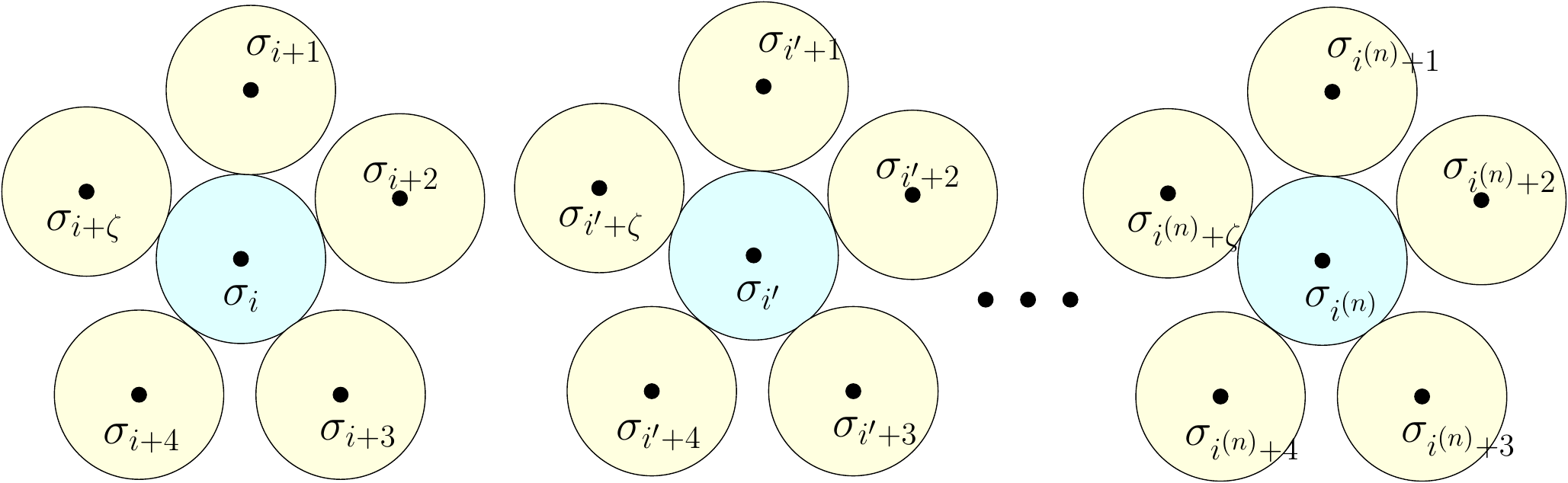}
\caption{Input instance of unit disks for maximum independent set. From each cluster Alice will add Yellow colored objects as the input instance only if Bob chooses Blue colored object.}
\label{fig:IS}
\end{figure}
\begin{proof}
Let  $\A_i$ be a set of vertices   reported by our $GDS$ {algorithm} after receiving first $i$ vertices $v^i=\{v_1,\ldots,v_i\}$. 
Similarly, let $\OO_i$ be the set of vertices in the optimum set for $v^i$. It is easy to observe that $| \A_i|\leq | \OO_i| $.  Let $N_{\OO_i}(a_i)$ be the set of vertices in $\OO_i$ that are adjacent to $a_i \in A_i$. Now, consider any vertex $o_i $ and $a_i $ from $\OO_i$ and $A_i$, respectively. By the definition of $\zeta$ (independent kissing number), we can observe that $| N_{\OO_i}(a_i)|  \leq \zeta$ (For each $a_i$ there are at most $\zeta$ vertices from $\OO_i$ that are adjacent to  $a_i$). We know, $A_i$ is a dominating set for $\OO_i$, so it is easy to observe that $\cup_{a_i \in A_i} N_{\OO_i}(a_i) = \OO_i$.
From above, we can conclude that the upper bound of the proposed $GDS$ {algorithm} for maximum independent set of a graph $G$ has a competitive ratio~$\zeta$.


To proof the lower bound for maximum independent set, we can think of  a game played between two players: Alice and Bob. Here, Alice plays the role of the adversary and Bob plays the role  of the online algorithm. Alice  presents the vertices of a graph $G$ one by one to Bob, and  Bob needs to decide whether to consider that vertex in the independent set or not. Alice have to present vertices in such a way that will force Bob to choose as less new vertices as possible; but Bob needs to choose as many vertices as possible to win the game.  Let  $\zeta$ be the independent kissing number of the graph under consideration and $v_i$ is the $i^{th}$ vertex. Now, Alice presents $v_i$ to Bob;  if Bob ignores it then Alice will present some $v_{i'}$ to Bob where, $v_{i'}$ is far away from all previous vertices.  
So without loss of generality, assume Bob chooses $v_i$ then Alice will present $v_{i+1},\ldots,v_{i+\zeta}$ as the sequence of vertices coming one by one to Bob such that all of them are pair-wise non-adjacent vertices {but each of them are adjacent to $v_i$} (See Figure~\ref{fig:IS}). Since the independent kissing number of the graph under consideration is $\zeta$, it is always possible to construct such a sequence of vertices. Now, any optimal offline algorithm will  choose all these $\zeta$ vertices  $v_{i+1},\ldots,v_{i+\zeta}$ in the independent set and Bob will not be able to choose any of them. Therefore, the lower bound on the competitive ratio for maximum independent set of graph $G$ in the online setup is~$\zeta$.
\end{proof}

\subsection{Greedy Algorithm for Connected Dominating Set ($GCDS$)}\label{2.2}
 Eidenbenz~\cite{Eidenbenz} proposed that the following  greedy algorithm for online connected dominating set problem achieves $8+\epsilon$ competitive ratio for  unit disk graph.
 We generalize this  for    graphs with fixed independent kissing number $\zeta$.
 
 Suppose that, after  observing first $t-1$ vertices $v_1,\ldots, v_{t-1}$,  we have a solution set $\A_{t-1}$. Let $v_t$ be the new vertex.
\begin{itemize}
  \item[$\bullet$ ] If $v_t$ is dominated by  the set $\A_{t-1}$, then we  set  $\A_{t-1}$ as the $\A_t$.
  \item[$\bullet$ ] Otherwise,  we will add $v_t$ and one of its neighbour $v_j$ where $j\in [t-1]$ into our solution set. In other words, we set $\A_t=\A_{t-1} \cup \{v_t,v_j\}$.
  \end{itemize}

\begin{observation}\label{obs2}
Whenever the size of the solution set is increased (excepting for the first time when $v_1$ is inserted) the $GCDS$ algorithm adds two vertices $v_t$ and $v_j$ to the solution set, where $v_t$ is independent with $\A_{t-1}$. So,  instead of $\A_t$, if  we maintain two disjoint sets $I_{t}$ and $J_{t}$ such that we  insert $v_t$ to $I_{t}$ and $v_j$ to $J_{t}$, then  $I_{t}$ is an independent set. Since $| I_{t}| =| J_{t}| +1$, we have  $|\A_t| =2| I_{t}| -1$, where $I_{t}$ is an independent set.
\end{observation}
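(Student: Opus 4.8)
The plan is to proceed by induction on the time step $t$, maintaining throughout the run of $GCDS$ the invariant that the solution set splits as $\A_t = I_t \cup J_t$ with $I_t \cap J_t = \emptyset$, where $I_t$ collects exactly the \emph{core} vertices $v_t$ inserted because they were undominated, and $J_t$ collects the \emph{connector} neighbours $v_j$. Under this bookkeeping I would aim to verify three properties at every step: $I_t$ is independent, $I_t$ and $J_t$ are disjoint, and $|I_t| = |J_t| + 1$. The base case is the arrival of $v_1$, where $I_1 = \{v_1\}$ and $J_1 = \emptyset$ (the index $j$ would have to lie in the empty set $[0]$, so no connector is added); this trivially satisfies all three properties.

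For the inductive step I would split on the algorithm's two branches. If $v_t$ is already dominated by $\A_{t-1}$, nothing is added and the invariant is inherited verbatim. Otherwise, $v_t$ goes into $I_t$ and a neighbour $v_j$ goes into $J_t$. The first key step is to observe that an \emph{undominated} $v_t$ cannot be adjacent to any vertex of $\A_{t-1}$, since adjacency to such a vertex is precisely what ``dominated'' means; as $I_{t-1} \subseteq \A_{t-1}$, this shows $v_t$ is non-adjacent to everything in $I_{t-1}$, so $I_t = I_{t-1} \cup \{v_t\}$ remains independent. This is also exactly the assertion that $v_t$ is independent with $\A_{t-1}$. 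The second key step, needed for disjointness, is to check that the connector $v_j$ is genuinely new, i.e.\ $v_j \notin \A_{t-1}$: since $v_j$ is chosen as a neighbour of $v_t$, if it already lay in $\A_{t-1}$ then $v_t$ would be dominated, contradicting the branch we are in. Hence $v_j \notin I_{t-1} \cup J_{t-1}$, the two new insertions are distinct from each other and from everything seen before, and disjointness is preserved.

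The cardinality relation then follows by simple bookkeeping: the only insertion into $I$ not matched by a simultaneous insertion into $J$ is the first one (the arrival of $v_1$), while every subsequent size increase adds exactly one vertex to each of $I$ and $J$. Thus $|I_t| = |J_t| + 1$ holds at every step, and since $\A_t = I_t \cup J_t$ disjointly we obtain $|\A_t| = |I_t| + |J_t| = 2|I_t| - 1$. The statement is largely a matter of unwinding the algorithm's definition, so there is no deep obstacle; the one point deserving genuine care—and where a careless argument could double-count—is verifying that the connector $v_j$ is never an already-chosen vertex, which is precisely what keeps $I_t$ and $J_t$ disjoint and validates the factor of two in $|\A_t| = 2|I_t| - 1$.
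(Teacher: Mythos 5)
Your proof is correct and follows essentially the same route as the paper, which treats the observation as immediate from unwinding the $GCDS$ algorithm's two branches: undominated arrivals go into $I_t$ and are non-adjacent to all of $\A_{t-1}\supseteq I_{t-1}$, connectors go into $J_t$, and only $v_1$ enters unpaired, giving $|I_t|=|J_t|+1$ and $|\A_t|=2|I_t|-1$. Your explicit check that the connector $v_j$ cannot already lie in $\A_{t-1}$ (else $v_t$ would be dominated) is a detail the paper leaves implicit, and it is exactly the right point to make the disjointness, and hence the counting, airtight.
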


 Let $G(V,E)$ be a  graph with independent kissing number $\zeta$.
Now, we  generalize a result by Wan et al.~\cite[Lemma 9]{WanAF04} that is one of the main ingredients for analysing the upper bound.

\begin{lemma}\label{lm:cds_opt}
Let $\cal I$ and $\OO$ be any independent set and minimum connected dominating set, respectively,  of a  graph  with independent kissing number $\zeta$. Then $| {\cal I}| \leq (\zeta-1){| \OO| } +1$.
\end{lemma}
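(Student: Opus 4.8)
The plan is to bound the size of an independent set $\mathcal{I}$ by charging its vertices to the connected dominating set $\OO$, using the connectivity of $\OO$ to save an additive term over the naive bound. The naive approach would assign each vertex of $\mathcal{I}$ to a dominating vertex of $\OO$ that covers it; since each vertex $O\in\OO$ can dominate at most $\zeta$ pairwise non-adjacent vertices of $\mathcal{I}$ (by Definition~\ref{def1} and the graph-theoretic reformulation $\zeta=\max_{v}\alpha(G[N(v)])$), this gives $|\mathcal{I}|\le \zeta|\OO|$. The content of the lemma is the improvement to $(\zeta-1)|\OO|+1$, and the connectivity of $G[\OO]$ is exactly what should buy the savings.

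The key device I would use is a spanning tree $T$ of the connected induced subgraph $G[\OO]$. I would root $T$ arbitrarily and process the vertices of $\OO$ in an order consistent with $T$ — for instance, order $O_1,\ldots,O_{|\OO|}$ so that each $O_k$ (for $k\ge 2$) is adjacent in $T$ to some earlier vertex $O_{p(k)}$, its parent. Now I assign each vertex of $\mathcal{I}$ to the \emph{first} vertex of $\OO$ (in this order) that dominates it, and I count, for each $O_k$, how many vertices of $\mathcal{I}$ are freshly charged to it. For $O_1$ the count is at most $\zeta$, giving the base contribution. The crucial claim is that for each $k\ge 2$, the number of vertices of $\mathcal{I}$ charged to $O_k$ is at most $\zeta-1$. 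The reason is that $O_k$ is adjacent (in $G$) to its parent $O_{p(k)}$, and the parent is \emph{not} in $\mathcal{I}$ but occupies a ``slot'' in the neighbourhood configuration around $O_k$: the set consisting of $O_{p(k)}$ together with the freshly-charged independent-set vertices must form a collection of pairwise non-adjacent neighbours of $O_k$, so its size is at most $\zeta$, leaving at most $\zeta-1$ for $\mathcal{I}$. Summing gives $|\mathcal{I}|\le \zeta + (\zeta-1)(|\OO|-1) = (\zeta-1)|\OO|+1$, as desired.

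The step I expect to be the main obstacle — and the one requiring genuine care — is justifying that $O_{p(k)}$ genuinely consumes one of the $\zeta$ independent slots in $N(O_k)$, i.e. that $O_{p(k)}$ is non-adjacent to all the newly charged $\mathcal{I}$-vertices at $O_k$. This does \emph{not} hold automatically: a priori an $\mathcal{I}$-vertex charged to $O_k$ could be adjacent to $O_{p(k)}$. The fix is to exploit the ``first-dominator'' rule in the ordering: if an $\mathcal{I}$-vertex $a$ charged to $O_k$ were adjacent to the earlier vertex $O_{p(k)}$, then $O_{p(k)}$ would already dominate $a$, contradicting that $O_k$ is the \emph{first} dominator of $a$. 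Hence every freshly charged $a$ is non-adjacent to $O_{p(k)}$, and together with the pairwise non-adjacency inherited from $\mathcal{I}$, the set $\{O_{p(k)}\}\cup\{\text{charged }a\}$ is an independent set inside $N(O_k)$, so it has at most $\zeta$ elements. I would state this ordering/first-dominator argument carefully, since it is the linchpin that converts connectivity into the additive saving.

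One caveat to flag is the disconnected case: the lemma as phrased assumes $G$ (hence $G[\OO]$) is connected so that a single spanning tree exists. If $G$ has several components, the bound should be applied componentwise, and the ``$+1$'' becomes one per component; I would note that within the paper's conventions $\OO$ is a connected dominating set, so treating $G[\OO]$ as connected is appropriate, and the stated bound with a single $+1$ follows.
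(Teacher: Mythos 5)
Your proposal is correct and takes essentially the same route as the paper: the paper's proof also fixes a spanning tree of $G[\OO]$, enumerates its vertices in preorder (so that each vertex's tree-parent appears earlier), partitions $\mathcal{I}$ into the sets of vertices first dominated at each $v_i$, bounds the first set by $\zeta$, and uses the earlier adjacent parent as the occupied independent ``slot'' in $N(v_i)$ to bound every later set by $\zeta-1$. The only differences are cosmetic (preorder traversal versus any parent-first ordering, and adjacency versus domination in the charging rule), so there is nothing further to compare.
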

\begin{proof}
 Let $T$ be any spanning tree of $\OO$.
 Consider an arbitrary preorder traversal of $T$ given by $v_1, v_2,\ldots, v_{{| \OO| }}$. Let ${\cal I}_1$ be the set of nodes in ${\cal I}$ that are adjacent to $v_1$. For any, $2\leq i\leq {| \OO| }$, let ${\cal I}_i$ be the set of nodes in ${\cal I}$ that are adjacent to $v_i$ but none of $v_1,\ v_2,\ldots,\ v_{i-1}$. Then ${\cal I}_1,\ {\cal I}_2,\ldots,\ {\cal I}_{{| \OO| }}$ forms a partition of ${\cal I}$. As $v_1$ can be adjacent to at most $\zeta$ independent nodes (due to the definition of $\zeta$), $| {\cal I}_1|  \leq \zeta$.  For any, $2\leq i\leq {| \OO| }$, at least one node in $v_1,\ v_2,\ldots,\ v_{i-1}$ is adjacent to $v_i$ but ${\cal I}_i$ is adjacent to none of $v_1,\ v_2,\ldots,\ v_{i-1}$. 
So, $| {\cal I}_i|  \leq (\zeta-1)$ where $2\leq i\leq {| \OO| }$. Thus, we have 
\begin{equation*}
  | {\cal I}| = \sum_{i=1}^{{| \OO| }} | {\cal I}_i| \\
   = {\cal I}_1 +\sum_{i=2}^{{| \OO| }} | {\cal I}_i| \\
   \leq \zeta + ({| \OO| }-1)(\zeta-1)\\
   \leq (\zeta-1){| \OO| } +1.
  \end{equation*}
Hence  the lemma follows.
\end{proof}

\begin{theorem}\label{cds}
For a graph with independent kissing number $\zeta$, the $GCDS$ algorithm achieves the asymptotic competitive ratio $2(\zeta-1)$
and absolute competitive ratio $min\{2\zeta, 2(\zeta-1)(1+\frac{1}{| \A_t| -1})\}$ (i.e., $2(\zeta-1)+\epsilon$), where $\A_t$ is the solution returned by the $GCDS$ algorithm.
\end{theorem}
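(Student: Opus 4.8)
The plan is to reduce everything to the independent set hidden inside the solution by Observation~\ref{obs2}, and then bound its size against the optimal connected dominating set $\OO$ (so $|\OO| = opt_{cds}$) in two incomparable ways, reporting the better of the two. By Observation~\ref{obs2} the solution splits as $\A_t = I_t \cup J_t$ with $I_t$ an independent set and $|\A_t| = 2|I_t| - 1$, so it suffices to control $|I_t|$. The structural fact I would lean on is that $\OO$ dominates every revealed vertex and hence dominates $I_t$; thus $I_t$ can play the role of the set ${\cal I}$ in Lemma~\ref{lm:cds_opt}, with $\OO$ as the dominating set.

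First I would record two upper bounds on $|I_t|$. The coarse one comes straight from Definition~\ref{def1}: since each vertex of $\OO$ is adjacent to at most $\zeta$ pairwise non-adjacent vertices and $I_t$ is independent and dominated by $\OO$, we get $|I_t| \le \zeta\, opt_{cds}$, exactly as in the $GDS$ analysis of Theorem~\ref{th:mds_geometric_objects} (removing common elements of $I_t$ and $\OO$ first, as done there). The refined one is Lemma~\ref{lm:cds_opt} applied to ${\cal I} = I_t$, which uses the connectivity of $\OO$ through a spanning-tree preorder to save one unit per vertex: $|I_t| \le (\zeta-1)\,opt_{cds} + 1$.

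Substituting $|\A_t| = 2|I_t| - 1$ into the refined bound gives $|\A_t| \le 2(\zeta-1)\,opt_{cds} + 1$, which immediately yields the asymptotic competitive ratio $2(\zeta-1)$ with additive constant $\beta = 1$. For the absolute ratio I would invert both bounds into lower bounds on $opt_{cds}$. From the refined bound, $opt_{cds} \ge (|I_t|-1)/(\zeta-1) = (|\A_t|-1)/\bigl(2(\zeta-1)\bigr)$, so
\[
  \frac{|\A_t|}{opt_{cds}} \le \frac{2(\zeta-1)\,|\A_t|}{|\A_t|-1} = 2(\zeta-1)\left(1+\frac{1}{|\A_t|-1}\right);
\]
from the coarse bound, $opt_{cds} \ge |I_t|/\zeta$, whence $|\A_t| = 2|I_t|-1 < 2\zeta\, opt_{cds}$, i.e.\ ratio below $2\zeta$. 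Taking the minimum of the two estimates gives the stated absolute competitive ratio.

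The step I expect to demand the most care is not any single inequality but the bookkeeping around $I_t$: I must argue cleanly that the independent set extracted in Observation~\ref{obs2} is dominated by \emph{every} connected dominating set of the final instance, so that both Definition~\ref{def1} and Lemma~\ref{lm:cds_opt} legitimately apply with $\OO$ in the role of the dominating set, and I must flag the degenerate case $|\A_t| = 1$ (equivalently $|I_t| = 1$), where the $1/(|\A_t|-1)$ term is undefined and the bound is furnished by the $2\zeta$ branch of the minimum. Retaining both bounds is essential because they are genuinely incomparable: inverting shows the refined estimate wins precisely when $|I_t| > \zeta$ and the coarse one when $|I_t| < \zeta$, which is exactly why the theorem is phrased as a minimum rather than a single closed form.
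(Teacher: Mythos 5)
Your proof is correct and takes essentially the same route as the paper: reduce to the independent set $I_t$ via Observation~\ref{obs2}, apply Lemma~\ref{lm:cds_opt} for both the asymptotic bound and the refined absolute bound $2(\zeta-1)\left(1+\frac{1}{|\A_t|-1}\right)$, and keep a coarse $2\zeta$ branch (the paper obtains this branch from Lemma~\ref{lm:cds_opt} together with $|\OO_t|\geq 1$ rather than re-running the GDS-style domination argument, but the content is identical). Your explicit handling of the degenerate case $|\A_t|=1$ is a minor point the paper leaves implicit.
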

\begin{proof}

From Observation~\ref{obs2}, we have $| \A_t| =2| I_t| -1 \leq 2| I_t| $. On the other hand, from Lemma~\ref{lm:cds_opt}, we have $| I_t| \leq (\zeta-1)| \OO_t| +1$. As a result, we get $| \A_t| \leq 2(\zeta-1)| \OO_t| +2$. Therefore, the asymptotic competitive ratio of the $GCDS$ algorithm is $2(\zeta-1)$.

By Observation~\ref{obs2}, we have $| \A_t| =2| I_t| -1\leq 2| I_t| $, where $I_t$ is an independent set. On the other hand, from Lemma~\ref{lm:cds_opt}, we have $| { I_t}| \leq (\zeta-1){| \OO_t| } +1\leq \zeta {| \OO_t| }$ (the second inequality follows from the fact that ${| \OO_t| }\geq 1$).  
Thus, the competitive ratio is:
$\frac{| \A_t| }{{| \OO_t| }}\leq 
     \frac{2| {I_t}| }{\frac{| { I_t}| }{\zeta}} = 2\zeta$.
     Alternative (tighter) analysis is:
\begin{equation*}
      \frac{| \A_t| }{{| \OO_t| }}\leq 
     \frac{2| {I_t}| -1}{\frac{| { I_t}| -1}{\zeta-1}} = 2(\zeta-1)(1+\frac{1}{2| { I_t}| -2})= 2(\zeta-1)(1+\frac{1}{| { \A_t}| -1}). 
\end{equation*}
This completes the proof.
 \end{proof}


\paragraph{Improved Bound for Unit Disk Graph}
For unit disk graphs Funke et~al.~\cite{1512873} presented $| I_t|  \leq 3.453| \OO_t|  + 8.291$. Since $| \A_t|  \leq 2| I_t| $ (from Observation~\ref{obs2}) therefore, $| \A_t|  \leq 6.906| \OO_t| +16.582$. Hence,  we  conclude that the deterministic $GCDS$ algorithm  obtains asymptotic competitive ratio of at most $6.906$ for unit disk graphs. 


 \section{Lower Bound of MCDS Problem for Translated Copies}\label{sec:lb_MCDS}
Eidenbenz~\cite{Eidenbenz} proposed lower bound  of $\frac{10}{3}$ for connected dominating set of unit disks in the online setup by constructing an example. In this section, we have generalized   that idea for any translated convex objects in $\IR^2$.

First,  we prove that for a family of  translated convex objects, we can always have optimal independent kissing configuration in standard form. To prove this,  we need the following definition and observation.

\begin{definition}[Convex Distance $d_C$]
  Let $p_1,p_2$ be any two points in $\IR^2$ and let $C$ be a convex object. Then the convex distance function $d_C(p_1,p_2)$, induced by $C$, is  defined as the smallest $\lambda \geq 0$ such that $p_1, p_2 \in \lambda C$, while the center of $C$ is at $p_1$. Let $O$ be any object, then the point-to-object distance $d_C(p_1,O)$ is defined as $\min_{x\in O} d_C(p_1,x)$.
\end{definition}

\begin{observation}\label{obs:0}
Let $O$ and $L$ be a convex object and a line, respectively. Let $x$ be any point on $L$. If $x$ moves along the  line $L$ from one end to another, then   the distance $d_C(x,O)$ 
 first decreases monotonically until a minimum
   point (or an interval of the same minimum value), and then it increases
   monotonically. (Refer to Figure~\ref{fig:convex_dist}.)
\end{observation}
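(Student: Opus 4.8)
The plan is to reduce the claim to the elementary fact that a convex function of one real variable is unimodal, i.e. it decreases monotonically down to its minimum and then increases. To this end I would first reinterpret the convex distance. Writing $c$ for the center of $C$, the definition of $d_C$ gives $d_C(x,y)=\gamma_C(y-x)$, where $\gamma_C(v)=\inf\{\lambda\ge 0:\ v\in\lambda(C-c)\}$ is the Minkowski gauge of $C$ about its center. The point-to-object distance is then the partial minimization
\[
 g(x):=d_C(x,O)=\min_{y\in O}\gamma_C(y-x).
\]
Parametrizing $L$ by $x(t)=p+t\,w$ for $t\in\IR$ and a fixed direction $w$, the quantity in the statement is $\phi(t)=g(x(t))$, and it suffices to show that $g$ is a convex function on $\IR^2$; its restriction $\phi$ to the line is then convex in one variable, hence exactly of the claimed shape, with the ``flat'' case accounting for the interval of minima.

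Second, I would establish that the gauge $\gamma_C$ is convex. Since $C$ is a compact convex body whose center $c$ lies in its interior, the translated body $C-c$ is a convex neighbourhood of the origin; consequently $\gamma_C$ is finite, nonnegative, positively homogeneous and subadditive, and subadditivity together with positive homogeneity yields convexity. Here one uses only convexity of $C$ and not central symmetry, so the possible asymmetry of $d_C$ causes no trouble.

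Third, I would lift this to joint convexity and then minimize. The map $(x,y)\mapsto\gamma_C(y-x)$ is the composition of the convex function $\gamma_C$ with the affine map $(x,y)\mapsto y-x$, hence jointly convex on $\IR^2\times\IR^2$. Because $O$ is convex (and compact, so the defining minimum is attained and finite), partial minimization over $y\in O$ preserves convexity, and therefore $g(x)=\min_{y\in O}\gamma_C(y-x)$ is a convex function of $x$. Restricting to $L$ finishes the argument: a finite convex function of one variable is nonincreasing up to its set of minimizers (a point or a closed interval) and nondecreasing afterwards, which is precisely the behaviour asserted in the observation.

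The only genuinely delicate points are the two convexity facts invoked: convexity of the gauge and the preservation of convexity under partial minimization of a jointly convex function over a convex set. Both are standard in convex analysis, but I would state them carefully, checking that the center lying in the interior of $C$ guarantees finiteness of $\gamma_C$ and that compactness of $O$ guarantees that the minimum defining $g$ is attained, so that $g$ is a proper convex function rather than identically $-\infty$. A purely geometric alternative---tracking, as $x$ slides along $L$, the first supporting line of $O$ met by the growing scaled copy $\lambda C$ centered at $x$---is also available and matches the figure, but the convex-analytic route is shorter and avoids case analysis.
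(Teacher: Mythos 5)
Your argument is correct, and it is genuinely different from the paper's treatment for a simple reason: the paper gives no proof of this observation at all. It is asserted as self-evident, with Figure~\ref{fig:convex_dist} as the only supporting evidence, and is then used inside the proof of Lemma~\ref{lm:standardIndepndentKC}. Your convex-analysis route supplies the missing argument, and each of its ingredients is sound: (i) the identification $d_C(x,y)=\gamma_C(y-x)$ with the gauge of the centered body is exactly the paper's definition of the convex distance; (ii) gauges of convex sets containing the origin are positively homogeneous and subadditive, hence convex, and as you note no central symmetry of $C$ is needed, so the asymmetry of $d_C$ is harmless; (iii) $(x,y)\mapsto\gamma_C(y-x)$ is jointly convex as a convex function composed with an affine map, and partial minimization over the convex set $O$ preserves convexity, so $x\mapsto d_C(x,O)$ is convex and its restriction to $L$ is a one-variable convex function, which is decreasing before its set of minimizers (a point or a closed interval) and increasing after --- precisely the claimed behaviour, flat case included. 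What this buys beyond the paper's figure: it works in every dimension, for an arbitrary convex $O$ rather than only a translate of $C$, and it isolates the only hypotheses actually used, namely convexity of $C$ and of $O$.

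Two details deserve tightening. First, with the paper's definition of \emph{center} (the center of the smallest enclosing ball), the center of $C$ lies in $C$ but need not lie in its interior (a half-disk is an example), so $\gamma_C$ can take the value $+\infty$ in some directions; this does not break the argument, since the gauge of any convex set containing the origin is still convex as an extended-real-valued function and the partial-minimization step goes through verbatim, but your finiteness claim should either be stated under the interior assumption you invoke or be replaced by the extended-valued formulation. Second, ``decreases until a minimum point'' presupposes that the minimum along $L$ is attained: when $\gamma_C$ is finite-valued one has $\gamma_C(v)\geq \norm{v}/R$ for every $v$, where $R$ is the circumradius of $C$, hence $d_C(x,O)\to\infty$ as $x$ tends to infinity along $L$, and a continuous coercive convex function on $\IR$ attains its minimum on a point or closed interval. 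With these two remarks added, your proof is complete and, unlike the original, fully rigorous.
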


\begin{lemma}\label{lm:standardIndepndentKC}
 There  exists  standard optimal independent kissing configuration for a family of translated copies of a convex object in $\IR^2$.
\end{lemma}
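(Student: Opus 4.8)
My plan is to reduce the statement to a clean fact about points in a normed plane and then settle it by a convexity argument. I would fix the core $u$ to be a copy of the convex object $C$ centred at the origin, so that every other translate is $C+p$ with $p\in\IR^2$ its centre. Let $D := C \oplus (-C)$ be the Minkowski difference body. Since $C$ is compact, convex and has non-empty interior, $D$ is a compact, convex, centrally symmetric body with the origin in its interior, so its gauge $\|\cdot\|_D$ is a genuine norm on $\IR^2$. The standard translate-intersection dictionary then reads: a translate $C+p$ meets $u$ iff $\|p\|_D \le 1$, and \emph{touches} $u$ (non-empty intersection, empty common interior) iff $\|p\|_D = 1$; two translates $C+p$, $C+q$ meet iff $\|p-q\|_D \le 1$, and are non-touching iff $\|p-q\|_D > 1$. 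These equivalences follow from the elementary fact that $(C+p)\cap(C+q)\neq\emptyset$ iff $p-q\in D$, together with the observation that interiors are disjoint exactly when $p-q\notin\operatorname{int}D$.

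Under this dictionary, an optimal independent kissing configuration is precisely a family of centres $p_1,\dots,p_\zeta$ with $\|p_i\|_D \le 1$ for every $i$ (each object is intersected by the core) and $\|p_i - p_j\|_D > 1$ for all $i\neq j$ (pairwise non-touching), and the configuration is \emph{standard} exactly when $\|p_i\|_D = 1$ for every $i$. Note that each $p_i\neq 0$, since an object centred at the origin would coincide with $u$; hence the projection below is well defined. The plan is then: starting from any optimal configuration, radially project each centre to the unit sphere by setting $p_i' := p_i/\|p_i\|_D$. Immediately $\|p_i'\|_D = 1$, so every object now touches the core, which is the standard property; it remains only to verify that this outward motion preserves pairwise non-touching, i.e. that $\|p_i' - p_j'\|_D > 1$.

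This last inequality is the heart of the matter and the step I expect to be the only real obstacle, since pushing two objects outward could a priori bring them closer together. I would prove it by contraposition, using convexity of the norm. Suppose $\|p_i' - p_j'\|_D \le 1$. Writing $a = \|p_i\|_D$ and $b = \|p_j\|_D$, both in $(0,1]$, the point $p_i - p_j = a\,p_i' - b\,p_j'$ lies in the parallelogram $\{\,s\,p_i' - t\,p_j' : s,t\in[0,1]\,\}$, whose four vertices $0,\ p_i',\ -p_j',\ p_i'-p_j'$ have $\|\cdot\|_D$-norms $0,\,1,\,1$ and (by assumption) at most $1$. Because $\|\cdot\|_D$ is a convex function, its maximum over this convex parallelogram is attained at a vertex, so $\|p_i - p_j\|_D \le 1$, contradicting $\|p_i - p_j\|_D > 1$. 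Hence $\|p_i' - p_j'\|_D > 1$, as needed. (Here the central symmetry of $D$ is what gives $\|-p_j'\|_D = 1$.)

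Since radial projection keeps the number of objects equal to $\zeta$, the family $\{u,\,C+p_1',\dots,C+p_\zeta'\}$ is again an optimal independent kissing configuration, and by construction it is standard, which proves the lemma. As an alternative to projecting all objects simultaneously, one could move them out one at a time and invoke Observation~\ref{obs:0} to control the convex distance from the moving centre to each remaining object along the radial line; but the convexity argument above disposes of all pairs at once and avoids the case analysis that the one-at-a-time route would require, so that is the approach I would take. I note in passing that this argument never uses $d=2$, so the same proof establishes the result for translated copies of a convex body in any $\IR^d$.
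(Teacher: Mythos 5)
Your proof is correct, and in fact it lands on the same final configuration as the paper's proof: each object whose interior meets the core is slid outward along the ray from the core's centre through its own centre until it merely touches the core. Where you differ is in how this motion is certified to preserve pairwise non-touching, and your route is genuinely different from the paper's. The paper stays with the asymmetric convex distance $d_C$ induced by $C$ itself, moves the objects one at a time, and uses Observation~\ref{obs:0} (unimodality of $d_C$ along a line) to argue that the moving object never meets the two objects adjacent to it in the cyclic order around the core, $\sigma_{i-1}$ and $\sigma_{i+1}$. You instead pass to the symmetrized difference body $D=C\oplus(-C)$, rewrite intersection and touching as $\|\cdot\|_D$-conditions on the translation vectors, project all centres onto the unit sphere of $\|\cdot\|_D$ simultaneously, and settle every pair at once via the fact that the convex function $\|\cdot\|_D$ attains its maximum over the parallelogram $\{\,s p_i'-t p_j' : s,t\in[0,1]\,\}$ at a vertex. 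Your approach buys three things: it verifies non-touching for \emph{all} pairs rather than only the angularly consecutive ones that the paper checks; it never invokes a cyclic order of objects around the core, so it is not tied to the plane and, as you note, proves the statement in $\IR^d$; and the central symmetry of $D$ is precisely what makes $\|-p_j'\|_D=1$ available, a step the asymmetric $d_C$ would not supply. What the paper's route buys is economy and continuity of machinery: Observation~\ref{obs:0} and the convex-distance picture are set up anyway and reused immediately in the C-block construction of Lemma~\ref{lm:C_block}, so its proof introduces no new notions, at the price of a planar, neighbour-by-neighbour argument.
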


\begin{figure}[!h]
  \centering
     \begin{subfigure}[b]{0.325\textwidth}
         \centering
\includegraphics[scale=0.35]{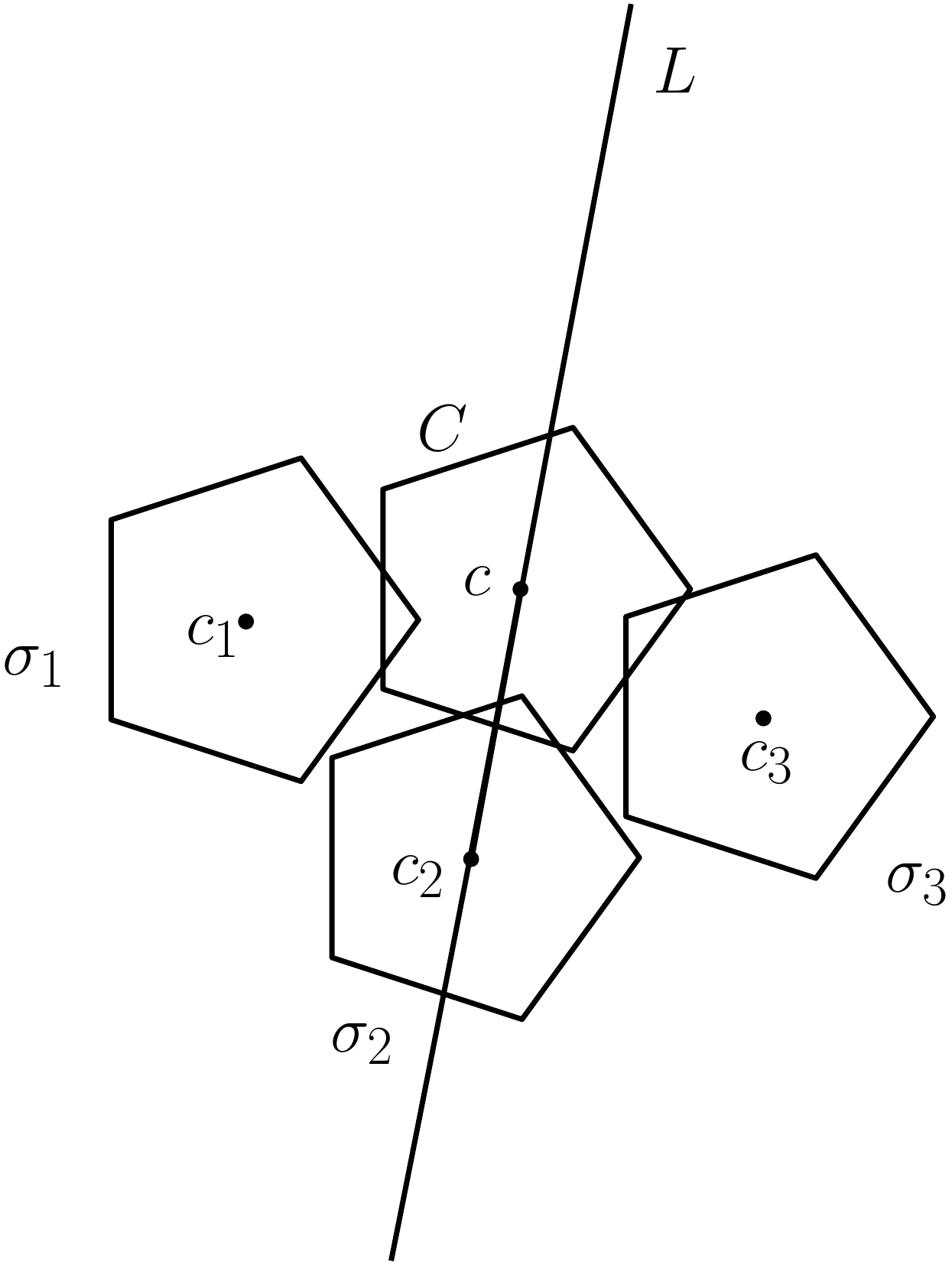} 
\caption{}
\label{fig:ikn_to_sikn}
     \end{subfigure}
     \hfill
      \begin{subfigure}[b]{0.325\textwidth}
         \centering
\includegraphics[scale=0.35]{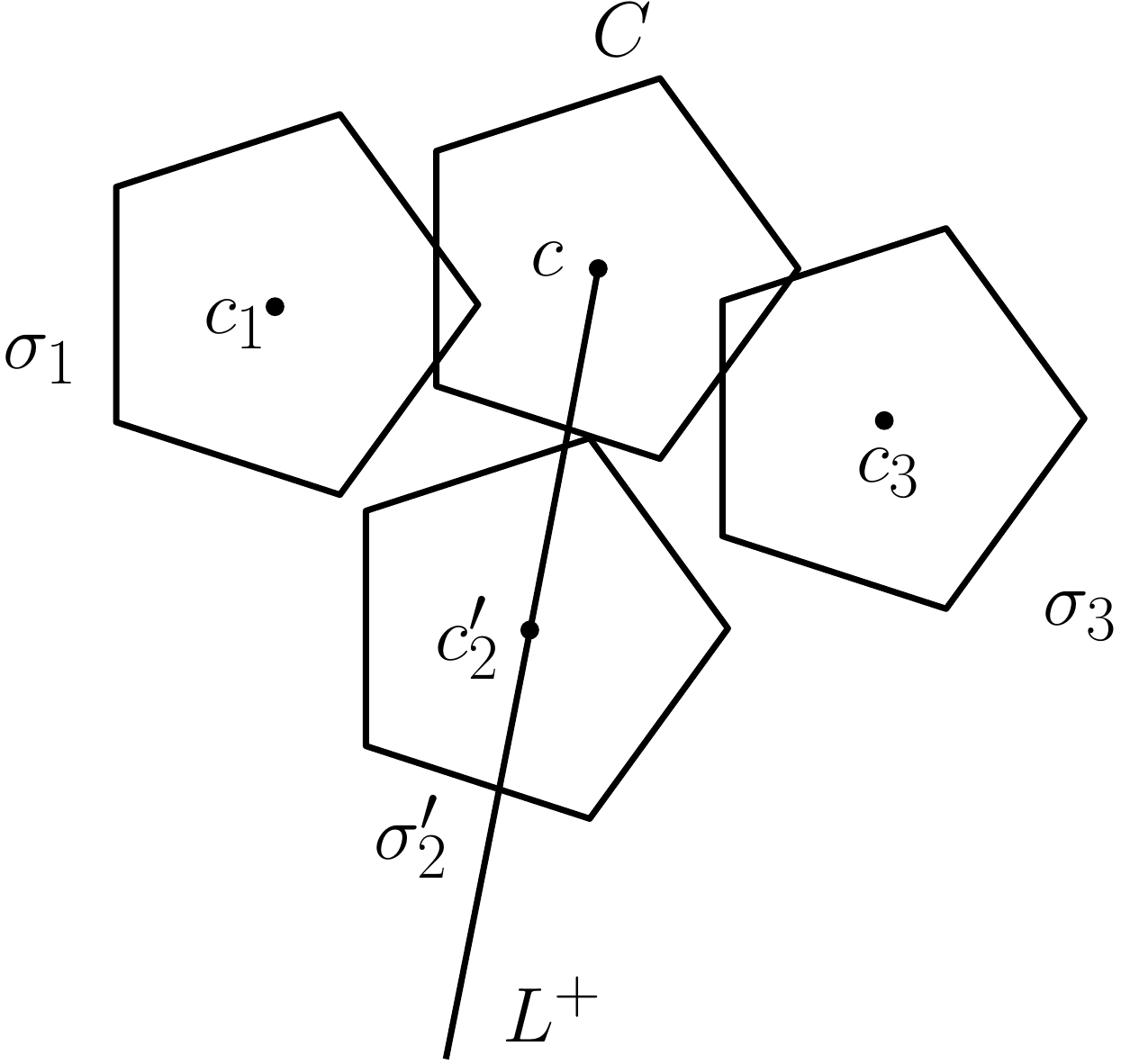}
\caption{}
\label{fig:move}
     \end{subfigure}
     \hfill
     \begin{subfigure}[b]{0.325\textwidth}
         \centering
\includegraphics[scale=0.35]{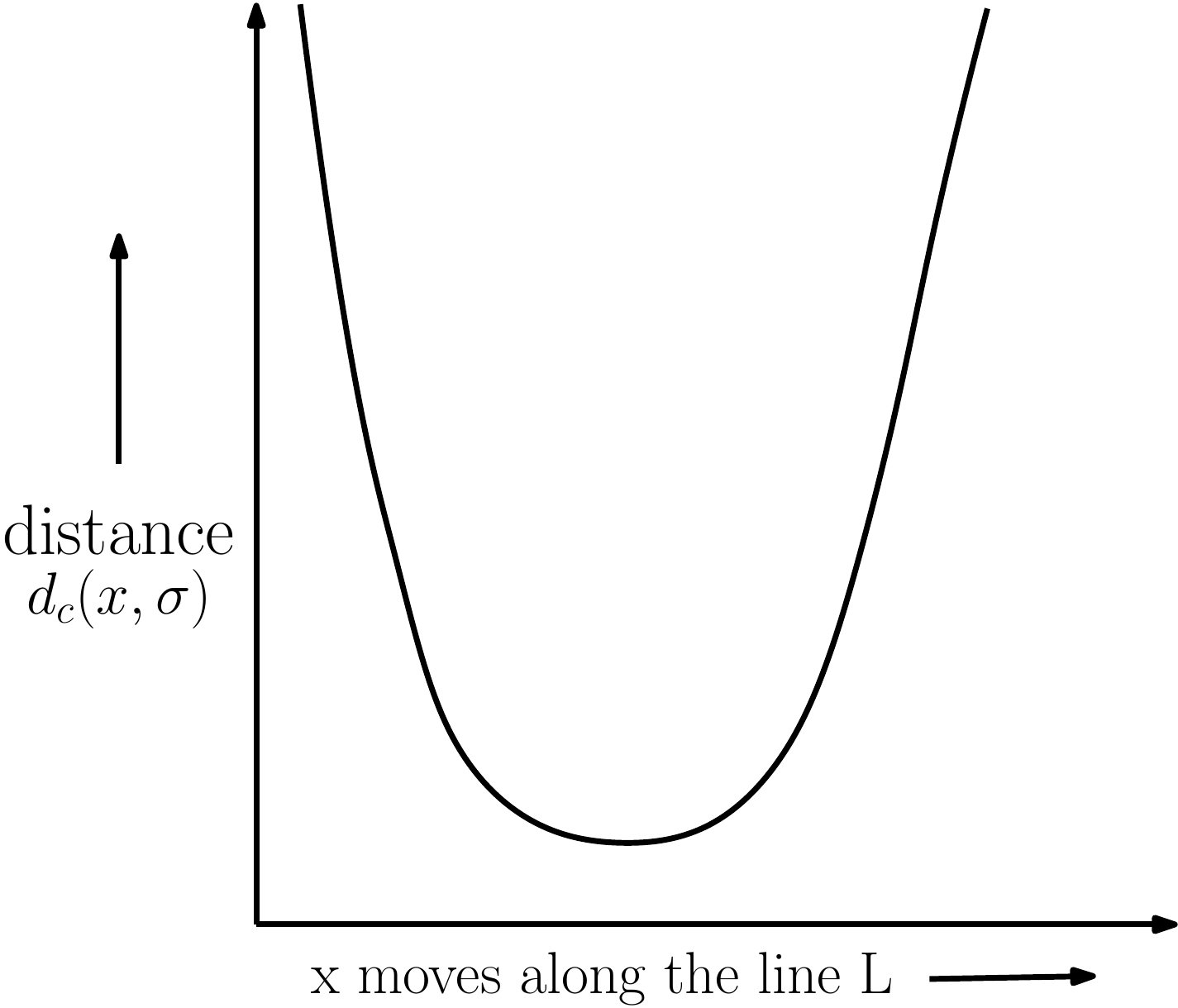}
\caption{}
\label{fig:convex_dist}
     \end{subfigure}
       \caption{(a-b)~Illustration of Lemma~\ref{lm:standardIndepndentKC}:  Moving $\sigma_2$ along $L^{+}$, (c)~Illustration of Observation~\ref{obs:0}.}
       \label{fig:translates}
\end{figure}

\begin{proof}

To prove this, we will show that given any optimal independent kissing configuration, we can transform it into a standard form. Consider an optimal independent kissing configuration where $C$ is the core  and $\cal I$ is the independent set. Let  $({\sigma}_0$, ${\sigma}_1$,\ldots, ${\sigma}_{\zeta-1})$ be  the anti-clockwise  order  of appearance of objects in $\cal I$  around the core object $C$ in the configuration, where ${\sigma}_i\in {\cal I}$, $i\in Z_{\zeta}$. Consider an object ${\sigma}_i\in {\cal I}$  that has non-empty common interior with $C$. Let $c$ and $c_i$  be the center of the object $C$ and ${\sigma}_i$, respectively  (see Figure~\ref{fig:translates}(a-b).). Let $L$ be the line obtained by extending the line segment $cc_i$ in both direction.  Since the objects ${\sigma}_i$ and ${\sigma}_{i-1}$ (resp., ${\sigma}_{i+1}$) are non-touching and ${\sigma}_i$ intersects $C$, we have  $d_C(c_i,{\sigma}_{i-1}) > d_C(c,{\sigma}_{i+1})$ (resp., $d_C(c_i,{\sigma}_{i+1}) > d_C(c,{\sigma}_{i+1})$). Note that the operation on index $i$ is modulo $\zeta$. Consider the half-line $L^{+}$  of $L$  that has one end point at $c_i$ and that does not contain the point $c$. Due to Observation~\ref{obs:0}, if $x$ moves along $L^{+}$  from $c_i$ then the distance $d_C(x, {\sigma}_{i-1})$ (resp., $d_C(x, {\sigma}_{i+1})$) monotonically increases. As a result, if we translate  the object ${\sigma}_i$ by moving the center along $L^{+}$, the translated copy ${\sigma}_i'$ never touches the object ${\sigma}_{i-1}$ (resp., ${\sigma}_{i+1}$). Hence, we will be able to find a point $x$  on $L^{+}$ such that the translated copy ${\sigma}_i'$ centering on $x$ will touch the boundary of $C$, but will not touch ${\sigma}_{i-1}$ and ${\sigma}_{i+1}$.  We replace the object ${\sigma}_i$ with ${\sigma}_i'$ in the independent kissing configuration. For each object $I\in {\cal I}$ that has non-empty common interior with $C$,
we follow the similar approach as above. This ensures that all these $\zeta$ objects are mutually non-overlapping (but touching) with $C$. Hence, the lemma follows.
\end{proof}

Now, we introduce some structural definitions. 

\begin{definition}[Path of cycles]
A graph $G(V,E)$  is said to be  a \emph{path of cycles} of length $m$ and size $k$ if the following three properties are satisfied: 
\begin{itemize}
    \item[1](cardinality) $G$ has exactly $mk$ vertices and $mk+m-1$ edges;
    \item[2](cycle-partition) $V$ can be partitioned into $m$ disjoint sets $V_1, V_2,\ldots V_m$ each having $k$ vertices such that each of the induced subgraph $G[V_i]$ is a cycle, where $i\in [m]$;
    \item[3](head and tail) for each $i \in [m-1]$, there exists exactly one edge $(h,t)\in E$  such that $h\in V_i$ and $t \in V_{i+1}$.  We refer $h$ and $t$ as the \emph{head} and \emph{tail} of $V_i$ and $V_{i+1}$, respectively. For $V_1$ (resp., $V_m$) any arbitrary vertex other than head (resp., tail) can be denoted as tail (resp., head).
\end{itemize}
\end{definition}
Refer to Figure~\ref{fig:new_cyclone} for an example of a path of cycles.
Let $V_1, V_2,\ldots, V_m$ be the cycle-partition of a path of cycles $G(V, E)$ (of length $m$ and size $k$) such that head of $V_i$ is adjacent to tail of $V_{i+1}$, where $1\leq i< m$. 
 Let $v_{0}, v_{1}, \ldots v_{(k-1)}$ be the clock-wise order of vertices  in a cycle $G[{V}_i]$. Let $v_h$ and $v_t$ be the head and tail of the cycle $G[V_i]$. Now, we define the cyclone-order of vertices in a cycle and a path of cycles.

\begin{figure}[h]
\centering
\includegraphics[width=140mm]{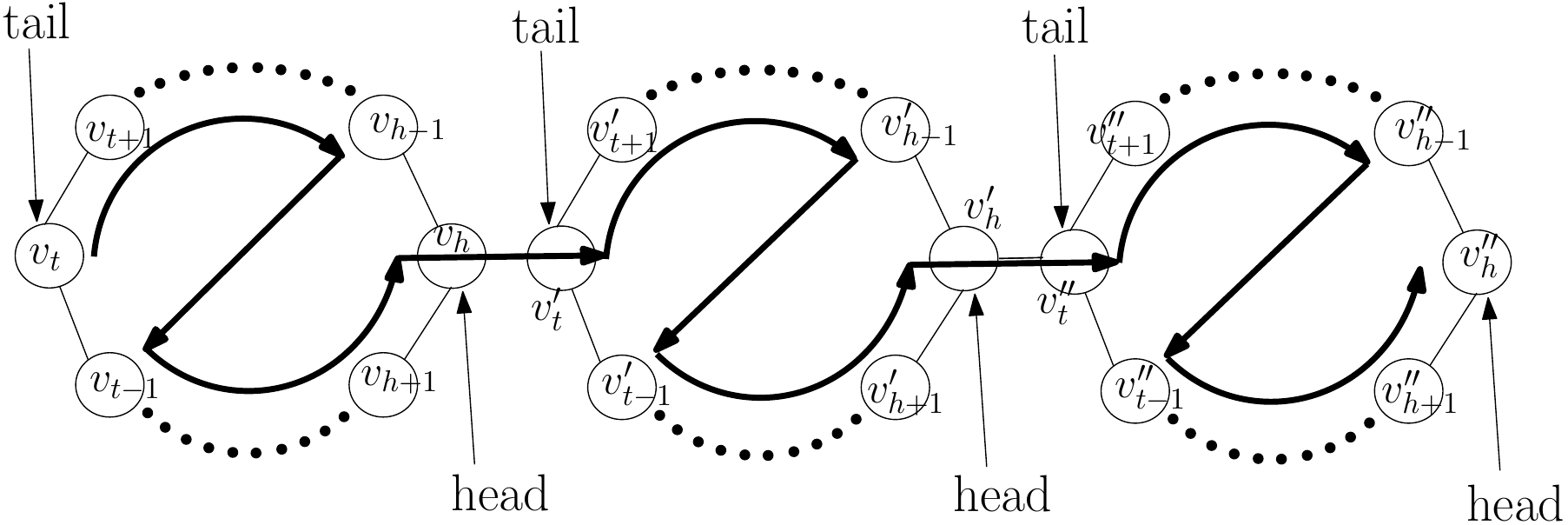}
\caption{Path of cycles and cyclone order: Head and tail of each cycle is marked. Cyclone order is indicated by bold arrow.}
\label{fig:new_cyclone}
\end{figure}

\begin{definition}[Cyclone-order of vertices in a cycle]
 We define the cyclone-order of the vertices of a cycle  $G[V_i]$ as  the clockwise enumeration  of $h-t$ vertices $[v_t, v_h)$: $v_t, v_{t+1}, \ldots, v_{h-1}$ followed by anti-clockwise enumeration of remaining $k-h+t$ vertices $(v_t, v_h]$: $v_{t-1}, v_{t-2},\ldots, v_{h}$.   We denote the first $h-t$ length clock-wise sequence as \emph{cw-part} and the remaining as \emph{acw-part} of a cyclone-order.

 \end{definition}
 
 \begin{definition}[Cyclone-order of vertices in a path of cycles]
 The cyclone-order of the vertices of a path of cycle is defined as $cyclone(V_1)$,  $cyclone(V_2), \ldots, cyclone(V_m)$,  where   $cyclone(V_i)$ represents the cyclone-order of the vertices in $V_i$, $i\in[m]$. 
 \end{definition}
See Figure~\ref{fig:new_cyclone} for an illustration of cyclone order of vertices in a path of cycles. 
Now, we prove an important property of cyclone-order traversal of vertices in a path of cycles. This property plays an important role in the lower bound construction.
Proof of the following Lemma is similar to Lemma~10 in~\cite{Eidenbenz}.
\begin{lemma}\label{lemma:path-of-cycles}
 For a path $\cal P$ of cycles of size $k$ and length $m$, if the vertices are enumerated in a cyclone order,  then any deterministic online algorithm reports  CDS of  size at least  $(k-1)m-1$.  
\end{lemma}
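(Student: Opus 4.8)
The plan is to run an adaptive adversary argument that exploits the cyclone-order of the vertices so that, in each cycle of the path, any deterministic online algorithm is forced to include essentially all but one of the vertices into its connected dominating set. The key structural fact I would use is that the cyclone-order is specifically designed so that, at every step, the newly revealed vertex fails to be dominated by the current solution unless the algorithm has already been greedy about inserting its predecessors. Concretely, for a single cycle $G[V_i]$ of size $k$, I would argue that when the vertices are presented in cyclone-order (cw-part followed by acw-part, both starting adjacent to the tail), each incoming vertex is adjacent only to the immediately preceding vertex in the cyclone sequence and to one previously-unseen neighbour on the cycle; hence it cannot be dominated by the existing solution, forcing the algorithm to either add the vertex itself or leave it undominated. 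Since the algorithm must maintain a valid dominating set at all times, it is compelled to add all but possibly the last vertex.

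First I would set up notation matching the statement: fix the path $\cal P$ of cycles of size $k$ and length $m$, with cycle-partition $V_1,\ldots,V_m$ and the designated heads and tails, and present the vertices in the global cyclone-order $cyclone(V_1),\ldots,cyclone(V_m)$. Then I would analyze a single cycle in isolation. The heart of the argument is a per-cycle claim: when processing the vertices of $V_i$ in cyclone-order, the algorithm is forced to select at least $k-1$ of them into any dominating set (and because we need a \emph{connected} dominating set, the connectivity requirement only adds constraints, never removes vertices). The cyclone-order guarantees that at the moment a vertex $v$ arrives, its two cycle-neighbours are split: one appears much later in the order (so is not yet present) and the other is the immediately preceding cyclone vertex, which the adversary has arranged so that it alone cannot dominate $v$ together with covering the rest — this is exactly the role of the head/tail gluing edge and the clockwise/anticlockwise bracketing around the tail.

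The next step is to sum across the $m$ cycles. A single cycle contributes at least $k-1$ forced vertices, giving $(k-1)m$ as a first count. I would then account for the inter-cycle head--tail edges, which allow the algorithm to amortize or share one vertex at the junctions; this is where the $-1$ correction in the bound $(k-1)m-1$ comes from. I would carefully track how the unique edge $(h,t)$ between consecutive cycles lets the algorithm economize by at most one vertex over the whole path (not once per junction), so the deficit is a single additive constant rather than one per cycle. This bookkeeping is essentially the adaptation of Eidenbenz's Lemma~10, so I would lean on that correspondence and only verify that the path-of-cycles structure preserves the counting.

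The main obstacle I anticipate is making the per-cycle forcing claim fully rigorous against an \emph{arbitrary} deterministic online algorithm rather than just the greedy one: I must show that no clever choice of which vertices to include can dominate an incoming vertex using vertices already committed, given the cyclone presentation order. This requires a precise case analysis of which neighbours of the current vertex are available (already revealed) at each step of the cw-part and acw-part, and in particular verifying that the adversary's ordering ensures each new vertex has \emph{no} already-revealed dominator except possibly one that the connectivity and non-domination constraints rule out. Establishing this adjacency-availability invariant cleanly — ideally as a short inductive claim on the position within the cyclone-order — is the delicate part; once it is in place, the summation and the single $-1$ amortization follow routinely, and the lemma follows.
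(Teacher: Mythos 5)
Your final formula matches the lemma, but you reach it through two compensating errors, so the argument as structured does not establish the bound. First, your per-cycle claim is false: for a single cycle of size $k$ presented in cyclone order, any deterministic online algorithm can be forced to select only $k-2$ vertices, not $k-1$. During the cw-part and the acw-part the revealed graph is a growing path, and the requirement that the current solution be a CDS of the revealed graph forces exactly the first vertex $v_t$ and the internal vertices of that path --- $k-3$ vertices in total just before the head $v_h$ arrives; when $v_h$ arrives, domination plus connectivity force exactly one of $v_{h-1},v_{h+1}$, giving $k-2$. An algorithm that adds only these forced vertices ends with a valid CDS of size $k-2$ (the cycle minus $v_h$ and one neighbour of $v_h$ is a dominating path), so $k-1$ cannot be forced by a standalone cycle --- unsurprisingly, since the offline optimum of a cycle is itself $k-2$. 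A smaller slip: your adjacency invariant (``each incoming vertex is adjacent only to the immediately preceding vertex'') breaks at the switch from the cw-part to the acw-part, where $v_{t-1}$ is adjacent to $v_t$, the \emph{first} revealed vertex, not to the vertex revealed just before it.

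Second, your junction bookkeeping runs in the wrong direction. The head--tail edges do not let the algorithm ``economize'' a vertex; they force an \emph{extra} one. When the tail $t_{i+1}$ of cycle $V_{i+1}$ is revealed, its only revealed neighbour is the head $h_i$ of $V_i$; to dominate $t_{i+1}$ while keeping the solution connected, the algorithm must add $h_i$ (adding $t_{i+1}$ alone would leave the solution disconnected). This is exactly the paper's mechanism: each of the first $m-1$ cycles contributes $(k-2)+1=k-1$ vertices, and the last cycle --- whose head is never forced because no further cycle arrives --- contributes only $k-2$, giving $(k-1)(m-1)+(k-2)=(k-1)m-1$. The ``$-1$'' is the unforced head of the last cycle, not an amortized saving at the junctions. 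To repair your proof, replace the per-cycle count $k-1$ by $k-2$ (proved via the growing-path/internal-vertex argument above) and replace the economization step by the junction-forcing argument.
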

\begin{proof}
Let $V_1, V_2,\ldots, V_m$ be the cycle-partition of a path of cycles $G(V, E)$ (of length $m$ and size $k$) such that head of $V_i$ is adjacent to tail of $V_{i+1}$, where $1\leq i< m$. 
Consider an input sequence of vertices of $V_i$ where the vertices arrive in cyclone-order. First, let us consider the cw-part of the sequence  $[v_t, v_h)$: $v_t, v_{t+1}, \ldots, v_{h-1}$. 
 Any online algorithm will report at least first $h-t-1$ vertices as a CDS for this sequence. Next, if we append the first $k-(h-t)-1$ vertices of acw-part  $(v_t, v_h)$: $v_{t-1}, v_{t-2},\ldots, v_{h+1}$, then  any online algorithm will report at least  $k-3$  objects as CDS in total. Now, if we append $v_h$ to the sequence, then any online algorithm will either add $v_{h-1}$ or $v_{h+1}$ in the CDS.  Therefore, for each individual cycle, any online algorithm reports CDS of size at least  $k-2$.  Now, if we append the vertices of $V_{i+1}$ in cyclone order, then to make a connected dominating set, the algorithm needs to add $v_h$ of $V_i$. As a result, for each $1\leq i <m$, any online algorithm will  report $k-1$ vertices; for $V_m$ it will report $k-2$.  Hence the lemma follows.
\end{proof}

Now, we introduce the concept of C-block which is a basic structure for the construction of lower bound.

\begin{definition}[C-block]\label{def:C-block}
Let $\cal S$ be a family of geometric objects. 
A set $\cal B$ of   objects from $\cal S$ is said to form a \emph{C-block} if following two conditions are satisfied:
\begin{itemize}
    \item the geometric intersection graph of the objects in $\cal B$ is a cycle and
    \item  there exists an object $u(\notin {\cal B})$ in the family $\cal S$ that intersects all the objects in $\cal B$.
\end{itemize}  We refer the object $u$ as the \emph{core} of the  C-block $\cal B$. The \emph{size} of a C-block is defined as the  number of objects in $\cal B$. 
\end{definition}

\begin{lemma} \label{lm:C_block}
There exists a C-block of size $2\zeta$ for a family of translated copies of a convex object.
\end{lemma}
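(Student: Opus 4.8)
The plan is to build the required cycle of $2\zeta$ translates directly from the standard optimal independent kissing configuration guaranteed by Lemma~\ref{lm:standardIndepndentKC}. I would fix such a configuration with core $C$ and independent set $\{\sigma_0,\sigma_1,\ldots,\sigma_{\zeta-1}\}$, listed in anti-clockwise order around $C$, where every $\sigma_i$ touches the boundary of $C$ and the $\sigma_i$ are pairwise non-touching. I take the core of the C-block to be $u=C$ and the cyclic object set to be the interleaving $\sigma_0,\tau_0,\sigma_1,\tau_1,\ldots,\sigma_{\zeta-1},\tau_{\zeta-1}$, where each \emph{bridge} $\tau_i$ is a translate inserted in the angular gap between $\sigma_i$ and $\sigma_{i+1}$ (indices mod $\zeta$). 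The goal is to place each $\tau_i$ so that it touches $C$, touches exactly its two neighbours $\sigma_i$ and $\sigma_{i+1}$, and touches nothing else; then the intersection graph on these $2\zeta$ translates is exactly a cycle on $2\zeta$ vertices, and since each $\sigma_i$ and each $\tau_i$ meets $C$, the object $C=u$ is a valid core, yielding a C-block of size $2\zeta$ in the sense of Definition~\ref{def:C-block}.

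To produce a single bridge $\tau_i$, I would slide a translate $\tau(t)$, $t\in[0,1]$, whose centre moves along the curve $\Gamma$ of centres of translates that touch $C$, starting at the position of $\sigma_i$ (at $t=0$) and ending at the position of $\sigma_{i+1}$ (at $t=1$), traversing the arc of $\Gamma$ bounding the gap. Along this motion $\tau(t)$ keeps touching $C$. Writing $g_i(t)$ and $g_{i+1}(t)$ for the (convex) distances from $\tau(t)$ to $\sigma_i$ and to $\sigma_{i+1}$, each of which is $0$ exactly when the two objects intersect, the monotonicity furnished by Observation~\ref{obs:0} together with convexity of the common shape shows that the contact sets $\{t:g_i(t)=0\}=[0,t_1]$ and $\{t:g_{i+1}(t)=0\}=[t_2,1]$ are single intervals: as the centre leaves $\sigma_i$ the contact with $\sigma_i$ is lost monotonically, and symmetrically the contact with $\sigma_{i+1}$ is gained monotonically as $t\to 1$.

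The crux is to show $t_2\le t_1$, for then any $t^\star\in[t_2,t_1]$ gives a translate $\tau_i:=\tau(t^\star)$ touching $C$, $\sigma_i$ and $\sigma_{i+1}$ simultaneously. Suppose instead $t_1<t_2$. For $t\in(t_1,t_2)$ the translate $\tau(t)$ touches $C$ yet is non-touching to both $\sigma_i$ and $\sigma_{i+1}$; since its centre lies on the sub-arc of $\Gamma$ strictly between the contact points of $\sigma_i$ and $\sigma_{i+1}$ and it fails to reach even its two immediate neighbours, it is non-touching to every other $\sigma_j$ as well, because the farther objects are separated from it by $\sigma_i$ or $\sigma_{i+1}$. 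Then $\{\sigma_0,\ldots,\sigma_{\zeta-1},\tau(t)\}$ would be $\zeta+1$ pairwise non-touching translates all intersecting $C$, contradicting that the independent kissing number of $\cal S$ equals $\zeta$ (Definition~\ref{def1}). Hence $t_2\le t_1$ and the bridge $\tau_i$ exists. I expect this optimality/contradiction step to be the main obstacle, and in particular the sub-claim that the intermediate translate is non-touching to \emph{all} the far objects $\sigma_j$, since this is the only place where maximality of the configuration is genuinely used and where the angular ordering must be argued carefully.

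It then remains to verify that the $2\zeta$ chosen objects carry no intersections beyond the cyclic ones. The $\sigma_i$ are pairwise non-touching by hypothesis; each $\tau_i$ lies in the sector between $\sigma_i$ and $\sigma_{i+1}$ and so cannot reach a far $\sigma_j$; and two consecutive bridges $\tau_i,\tau_{i+1}$ are separated by $\sigma_{i+1}$, which each merely touches from opposite angular sides and at distinct contact points on $\Gamma$, keeping them apart. Choosing $t^\star$ at an endpoint of $[t_2,t_1]$, or perturbing $\tau_i$ slightly, eliminates any accidental tangency. Consequently the intersection graph on $\{\sigma_0,\tau_0,\ldots,\sigma_{\zeta-1},\tau_{\zeta-1}\}$ is exactly a cycle, every object meets the core $C$, and the set forms a C-block of size $2\zeta$, as required.
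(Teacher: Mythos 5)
Your proposal follows essentially the same route as the paper's proof: both start from the standard optimal independent kissing configuration of Lemma~\ref{lm:standardIndepndentKC}, insert one additional translate into each gap by sliding it along the locus of centers touching the core $C$, and invoke optimality of $\zeta$ to conclude that each inserted translate simultaneously touches its two neighbours --- your interval argument $t_2\le t_1$ is exactly the paper's observation that the copy $\sigma_i'$, stopped when it first touches $\sigma_{i+1}$, must still touch $\sigma_i$, since otherwise an extra independent object could be placed in the gap, contradicting optimality. The resulting interleaved cycle of $2\zeta$ translates with core $C$ is the same construction as the paper's ${\cal I}\cup{\cal I}'$, so the proposal is correct and matches the paper's approach.
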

\begin{proof}
The proof is by construction. 
  Due to Lemma~\ref{lm:standardIndepndentKC}, we know that there exists a standard optimal independent kissing configuration for a family of translated copies of a convex object. Let $K$ be a standard independent kissing configuration where $C$ is the core object and $\cal I$ is the independent set. Let  $({\sigma}_0$, ${\sigma}_1$,\ldots, ${\sigma}_{\zeta-1})$ be  the clockwise  order  of appearance of objects in $\cal I$  around the core object $C$ in the configuration, where ${\sigma}_i\in {\cal I}$, $i\in \mathbb{Z}_{\zeta}$. Let us define a locus $\cal L$ that contains all points $x\in \IR^2$ such that $d_C(x, C)=1$. Note that the centers of all the objects $\sigma_i\in {\cal I}$ lies on the locus $\cal L$. Now, for each $i\in \mathbb{Z}_{\zeta}$,  we  make a copy $\sigma_i'$ of $\sigma_i$.  We translate $\sigma_i'$ around $C$ in clockwise direction keeping the center of $\sigma_i'$ lying  on the locus $\cal L$ until $\sigma_i'$ touches the object $\sigma_{i+1}$.
 Note that  $\sigma_i'$  also touches $\sigma_i$, otherwise  $K$ is not an optimal configuration since we can place an extra object in between $\sigma_i$ and $\sigma_{i+1}$. It is easy to observe that, apart from these two objects, $\sigma_i'$ does not intersect any other objects in  ${\cal I}$.
 In this way, we obtain a set ${\cal I}'=\{{\sigma}_0'$, ${\sigma}_1'$,\ldots, ${\sigma}_{\zeta-1}'\}$ of $\zeta$ objects. The method of construction ensures that ${\cal I}'$ is an independent set (and ${\cal I}'$ together with $C$ is a standard optimal independent kissing configuration). Now, consider the clock-wise order
 $({\sigma}_0, {\sigma}_0', {\sigma}_1, {\sigma}_1',\ldots, {\sigma}_{\zeta-1}, {\sigma}_{\zeta-1}')$ of
 appearance of  objects in $\cal I\cup {\cal I}'$  around $C$. Here each object is intersected by exactly two objects: its previous and next object in the sequence. Therefore, $\cal I\cup {\cal I}'$ is a C-block of size $2\zeta$ where $C$ is the core object.
 
\end{proof}

Similar to the path of cycles, we define the  following.
\begin{definition}[Path of C-blocks]\label{def:path-C-block}
A set $\cal P$ of  C-blocks each of size $k$ from a family of geometric objects is said to form a \emph{path of C-blocks} of size $k$ and length $m$ if  the  geometric intersection graph of all the geometric objects in $\cal P$  is a path of cycles of size $k$ and length $m$.
\end{definition}

\begin{lemma}\label{lm:path-of-C-block}
For any positive integer $m$, there exists a path of C-blocks of size $2\zeta$ and length $m$ for a family of translated copies of a convex object.  

\end{lemma}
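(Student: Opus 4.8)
The plan is to build the path of $C$-blocks by taking $m$ translated copies of the single $C$-block of size $2\zeta$ guaranteed by Lemma~\ref{lm:C_block} and chaining them in a row so that each consecutive pair contributes exactly one connecting edge. First I would fix one $C$-block $\mathcal{B}$ with core $C$, and observe that, because the whole family consists of translates of a single convex object, every translate of $\mathcal{B}$ (carried along with the corresponding translate of $C$) is again a $C$-block of size $2\zeta$: translation preserves all pairwise intersections, so both the cycle structure of $\mathcal{B}$ and the property that the core meets every object of $\mathcal{B}$ survive. This yields $m$ candidate blocks $\mathcal{B}_1,\dots,\mathcal{B}_m$, and the only substantive work is to position them so that the geometric intersection graph of $\bigcup_i \mathcal{B}_i$ is a path of cycles of size $2\zeta$ and length $m$.

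For the positioning I would proceed from left to right along a fixed direction. I would choose a \emph{generic} direction $\ell$, avoiding the finitely many directions in which two objects of $\mathcal{B}$ share an extreme coordinate, and declare the \emph{head} of a block to be its unique extreme object in the $+\ell$ direction and the \emph{tail} its unique extreme object in the $-\ell$ direction; these are distinct since $2\zeta \ge 2$. I would place $\mathcal{B}_1$ arbitrarily, and having placed $\mathcal{B}_1,\dots,\mathcal{B}_i$, start $\mathcal{B}_{i+1}$ far out in the $+\ell$ direction and translate it along $\ell$ toward $\mathcal{B}_i$ until its tail first meets the head of $\mathcal{B}_i$. Since the head of $\mathcal{B}_i$ is the block object reaching furthest in $+\ell$ and the tail of $\mathcal{B}_{i+1}$ is the block object reaching furthest in $-\ell$, the convex-distance monotonicity of Observation~\ref{obs:0}, applied to this extreme pair, guarantees that the first and only pair to come into contact at the stopping moment is $(\text{head of }\mathcal{B}_i,\ \text{tail of }\mathcal{B}_{i+1})$.

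It then remains to verify the three defining properties of a path of cycles. Each $\mathcal{B}_i$ is an induced $2\zeta$-cycle by Lemma~\ref{lm:C_block}, giving the cycle-partition $V_i=\mathcal{B}_i$; the construction introduces exactly one inter-block edge between $V_i$ and $V_{i+1}$ for each $i\in[m-1]$ and none between non-consecutive blocks, so the head–tail condition is met. Counting gives $2\zeta m$ vertices and $2\zeta m+(m-1)$ edges, matching the cardinality requirement with $k=2\zeta$.

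The hard part will be the ``exactly one edge'' guarantee, i.e.\ ruling out a second contact when $\mathcal{B}_{i+1}$ slides in (the tail touching a non-head object of $\mathcal{B}_i$, a non-tail object of $\mathcal{B}_{i+1}$ touching $\mathcal{B}_i$, or either core interfering) and ensuring non-consecutive blocks stay disjoint. The key facts to pin down are that the core of each block lies strictly behind its head and tail in the respective extreme directions, so no core makes contact before the designated head–tail pair does, and that the genericity of $\ell$ eliminates simultaneous-contact degeneracies; keeping the initial gaps large before sliding keeps non-consecutive blocks separated. If a clean proper overlap is preferred to a degenerate boundary touching, I would finally push $\mathcal{B}_{i+1}$ an arbitrarily small additional amount, so that the head–tail pair overlaps genuinely while every other cross-block pair, being at strictly positive distance, remains disjoint.
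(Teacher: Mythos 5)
There is a genuine gap at the step you yourself flag as the crux. Your construction rests on the claim that, when $\mathcal{B}_{i+1}$ slides in along $\ell$, the first (and only) contact is between the object of $\mathcal{B}_i$ extreme in $+\ell$ and the object of $\mathcal{B}_{i+1}$ extreme in $-\ell$. This is not justified by Observation~\ref{obs:0}, and it is false in general: which pair of translates touches first under a linear translation is governed not only by their extents along $\ell$ but also by their offsets \emph{perpendicular} to $\ell$. Two translates $C+a$ and $C+b+s\ell$ meet for some $s$ only if the line $\{a-b-s\ell : s\in\IR\}$ actually hits the difference body $C-C$; if the perpendicular component of $a-b$ is larger than the width of $C-C$ in that direction, the designated head--tail pair \emph{never} touches during the slide, and the stopping contact is made by some other pair. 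Concretely, for unit squares the ring of $2\zeta=8$ squares around the core can have its rightmost square near the top of the ring and its leftmost square near the bottom; their vertical offset then exceeds $1$, so under a (near-)horizontal slide these two squares pass each other without meeting, and the first contact is between two non-extreme squares. Genericity of $\ell$ only removes ties; it does not repair this. One could try to salvage the argument by \emph{defining} the head and tail to be whatever pair touches first (translation invariance at least makes this pair the same for every $i$), but then you must additionally rule out that the first contact is between two copies of the \emph{same} object of the block --- in that case all $2\zeta$ such pairs touch simultaneously, destroying the ``exactly one edge'' property --- and prove uniqueness of the contact pair; none of this is in your sketch.

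For comparison, the paper avoids the sliding argument entirely: it takes $H_i$ to be the convex hull of the extreme points of the objects in $\mathcal{B}_i$ and places consecutive hulls $H_i,H_{i+1}$ so that they have empty common interior and intersect in a \emph{single point} that is an extreme point of both hulls. Since the contact is one point and that point lies in exactly one object of each block, the ``exactly one crossing pair'' property is immediate, and choosing the same arrangement for every $i$ (so that all the separating lines through the contact points are parallel) forces non-consecutive blocks to be disjoint and the intersection graph of the blocks to be a path. Your positioning scheme would need an analogous single-point-of-contact mechanism (or a correct identification of the first-contact pair) before the rest of your verification --- which is otherwise sound, including the vertex/edge count and the separation of non-consecutive blocks --- can go through.
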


\begin{proof}
Let ${\cal B}_1, {\cal B}_2, \ldots, {\cal B}_m$ be $m$ C-blocks, as in Lemma~\ref{lm:C_block}. 
Let $H_i$ be  the convex hull  of all extreme points of objects in a C-block ${\cal B}_i$, $i\in[m]$.  
Let us fix  an $i\in [m-1]$.
By translating one of the convex-hulls around the other, we can always place two convex hulls $H_i$ and $H_{i+1}$ such that the point of intersection is an extreme point  from both the hulls and common interior of $H_i$ and $H_{i+1}$ is empty. As a result, we have an arrangement of two C-blocks ${\cal B}_i$ and ${\cal B}_{i+1}$ where exactly one object from each block intersects with the other. Let us denote the  point of intersection of ${\cal B}_i$ and ${\cal B}_{i+1}$ as $x_i$.  Let $L_i$ be a separation line  passing through $x_i$ such that ${\cal B}_i$ and ${\cal B}_{i+1}$ are in two different sides of $L_i$. 
Let $t_1$th object of ${\cal B}_i$ intersects $t_2$th object  ${\cal B}_{i+1}$  in this arrangement, where $t_1,t_2 (t_1\neq t_2) \in [2\zeta]$. If we apply the same arrangement  for all $i \in [m-1]$ such that $t_1$th object of ${\cal B}_i$ intersects $t_2$th object  ${\cal B}_{i+1}$, then  all  separation lines $\{L_i| i \in [m-1]\}$ will be parallel to each other. This ensures that 
the geometric intersection graph of these C-blocks is a path. Therefore, the claim follows.
\end{proof}

Now, we have the main result.

\begin{theorem}\label{thm:mcds_con}
Let $\zeta$ be the independent kissing number of a family $\cal S$ of translated copies of a convex object in $\IR^2$.
Then the  competitive ratio of every deterministic online algorithm  for minimum connected dominating set (MCDS) of $\cal S$ is at least $2(\zeta-1)$, if $opt_{cds}$ is one; otherwise it is at least~$\frac{2\zeta-1}{3}$.
\end{theorem}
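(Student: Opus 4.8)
The plan is to reduce the lower bound entirely to the structural lemmas already established: I would build an adversarial input out of a path of C-blocks, force the online algorithm to keep almost every vertex of every cycle by feeding the vertices in cyclone order (Lemma~\ref{lemma:path-of-cycles}), and then reveal the cores \emph{last} to certify that the offline optimum is tiny. Both claimed bounds will fall out of a single construction by choosing the length $m$ of the path appropriately.

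For the regime $opt_{cds}=1$, I would take a single C-block of size $2\zeta$ guaranteed by Lemma~\ref{lm:C_block}, whose $2\zeta$ objects form a cycle $C_{2\zeta}$ and whose core $u$ intersects every object of the cycle. First present the $2\zeta$ cycle objects in cyclone order; applying Lemma~\ref{lemma:path-of-cycles} with $k=2\zeta$ and $m=1$ forces any deterministic online algorithm to report a CDS of size at least $(k-1)m-1 = 2\zeta-2 = 2(\zeta-1)$. Then reveal $u$. Since $u$ intersects all cycle objects, $\{u\}$ is by itself a connected dominating set, so $opt_{cds}=1$; moreover the algorithm's already-committed cycle objects still dominate $u$ and remain connected, so its cost cannot drop. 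This yields a ratio of at least $2(\zeta-1)$.

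For the general regime I would instead invoke Lemma~\ref{lm:path-of-C-block} to obtain, for an arbitrarily large integer $m$, a path of C-blocks of size $2\zeta$ and length $m$, whose intersection graph is a path of cycles of size $k=2\zeta$ and length $m$. Presenting all cycle vertices in cyclone order and invoking Lemma~\ref{lemma:path-of-cycles} forces the online cost to be at least $(2\zeta-1)m-1$. I then reveal the $m$ cores $u_1,\dots,u_m$ and exhibit an explicit offline CDS: the set $\{u_1,\dots,u_m\}$ together with, for each consecutive pair, the head $h_i\in V_i$ and tail $t_{i+1}\in V_{i+1}$ realizing the unique inter-block edge. Each $u_i$ dominates its whole block, every connector is dominated, and the path $u_i - h_i - t_{i+1} - u_{i+1}$ makes the set connected, so $opt_{cds}\le m+2(m-1)=3m-2$. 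Hence the ratio is at least $\frac{(2\zeta-1)m-1}{3m-2}$, which tends to $\frac{2\zeta-1}{3}$ as $m\to\infty$ (and in fact stays above it for every finite $m$ once $\zeta\ge 2$), giving the asymptotic bound in the ``otherwise'' case.

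The main work, and the only genuinely nontrivial step, is justifying the offline upper bound $opt_{cds}\le 3m-2$: I must check that the cores plus the chosen connectors simultaneously dominate every vertex (all cycle vertices, the cores, and the connectors themselves) and induce a connected subgraph, using that $u_i$ intersects every object of block $i$ and that consecutive blocks meet in exactly one edge. A secondary point to verify carefully is that revealing the cores \emph{after} the cyclone phase cannot help the algorithm: the cores are dominated by cycle vertices the algorithm was already forced to select, and adding vertices can only increase an online solution's size, so the bound from Lemma~\ref{lemma:path-of-cycles} survives intact.
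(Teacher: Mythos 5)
Your proposal is correct and follows essentially the same route as the paper's proof: a single C-block (Lemma~\ref{lm:C_block}) presented in cyclone order for the $opt_{cds}=1$ case, and a path of C-blocks (Lemma~\ref{lm:path-of-C-block}) combined with Lemma~\ref{lemma:path-of-cycles} and a late reveal of the cores for the general case. The only (immaterial) differences are bookkeeping: the paper additionally appends an object $\sigma^*$ touching only the head of the last cycle to push the online cost to $(2\zeta-1)m$ and uses an offline solution of size $3m-1$, whereas you settle for the online cost $(2\zeta-1)m-1$ against your (valid) offline solution of size $3m-2$; both ratios exceed and converge to $\frac{2\zeta-1}{3}$.
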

\begin{proof}
Consider a C-block $\cal B$ of size $2\zeta$ 
 as in Lemma~\ref{lm:C_block}. Note that the geometric intersection graph of objects in $\cal B$ is a a cycle.  Consider the input sequence for  the objects in $\cal B$ in cyclone-order. Due to Lemma~\ref{lemma:path-of-cycles}, any online algorithm reports CDS of size $2\zeta -2$ for this sequence. Whereas, all the objects in   $\cal B$ can be dominated by a core object that we append as the last object in the input sequence. Therefore, competitive ratio is at least $2(\zeta-1)$, whereas  $opt_{cds}$ is one.

For general case, 
consider a path $\cal P$ of C-blocks of size $2\zeta$ and length $m$ 
 as in Lemma~\ref{lm:path-of-C-block}. Here the geometric intersection graph of the objects in $\cal P$ is a path of cycle (follows from Definition).  Consider the input sequence $S$ for  the objects in $\cal P$ in  cyclone-order. Due to Lemma~\ref{lemma:path-of-cycles}, any online algorithm reports CDS of size $(2\zeta -1)m-1$ for this sequence. Let   $\sigma \in{ \cal P}$ be the last object in the sequence $S$. In other words, $\sigma$ is the head of the last cycle in the cyclone-order.   Now, we add an object $\sigma^*$, from the same family of translated copies,  to the input sequence $S$ such  that $\sigma^*$ touches only $\sigma$. As a result, any deterministic algorithm will report at least $(2\zeta -1)$ objects as CDS from the $m$th cycle also.  At the end,  we add $m$ core objects, one from each of the C-block in $\cal P$ to the input sequence.  For this input sequence, any online algorithm reports at least $(2\zeta -1)m$ objects, whereas the size of the optimum is $3m-1$ that consists of head, tail and core objects of each C-block in $\cal P$, excepting tail from the first C-block. Hence, the result follows.  
\end{proof}

\begin{remark}
For any family of arbitrary oriented convex objects in $\IR^2$, if we have  a standard independent kissing configuration where the size of the independent set is $\zeta'$, then  we can apply the above technique to obtain a lower bound result  ($\zeta$ will be replaced by $\zeta'$) of MCDS for that family.
\end{remark}

\section{Fixed Oriented Unit Hyper-cubes in $\IR^d$}\label{3}

Throughout this section, if explicitly not mentioned,  we use hyper-cube (in short) to mean axis-parallel unit hyper-cube. 

\subsection{Independent Kissing Number}\label{3.1}


\begin{lemma}\label{cube:ub}
For a family  of fixed oriented  unit hyper-cubes  in $\IR^d$,  the independent kissing number is at most~$2^d$, where $d$ is any positive integer.
\end{lemma}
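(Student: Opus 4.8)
The plan is to use the graph-theoretic formulation of the independent kissing number, namely $\zeta = \max_{v}\alpha(G[N(v)])$. It therefore suffices to fix one axis-parallel unit hyper-cube $u$ and bound the number of pairwise non-touching unit hyper-cubes that each intersect $u$. First I would normalize by translating so that $u$ is centred at the origin; every axis-parallel unit hyper-cube is then specified by its centre $c\in\IR^d$, and I would record the elementary fact that two such cubes with centres $p,q$ intersect exactly when $\norm{p-q}_\infty\le 1$. Consequently a cube with centre $c$ meets $u$ iff $\norm{c}_\infty\le 1$, so every centre appearing in the configuration lies in $[-1,1]^d$, while two cubes with centres $p,q$ are non-touching iff $\abs{p_i-q_i}>1$ for at least one coordinate $i$.

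The key step is a sign-pattern argument. To each centre $c$ in the configuration I would assign the bit-vector $b(c)\in\{0,1\}^d$ defined by $b_i(c)=0$ if $c_i\le 0$ and $b_i(c)=1$ if $c_i>0$. I claim this assignment is injective on the configuration. Indeed, if two distinct centres $p,q$ satisfied $b(p)=b(q)$, then in each coordinate $i$ the values $p_i,q_i$ would lie in a common half, either both in $[-1,0]$ or both in $(0,1]$; since each of these intervals has length $1$, this would force $\abs{p_i-q_i}\le 1$ for every $i$, i.e.\ $\norm{p-q}_\infty\le 1$, making the two cubes intersect and contradicting that the configuration is non-touching. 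Hence distinct centres receive distinct bit-vectors, and as there are only $2^d$ possible bit-vectors, the configuration contains at most $2^d$ cubes. Taking the maximum over the choice of central cube $u$ gives $\zeta\le 2^d$.

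I do not expect a serious obstacle here; the entire difficulty is concentrated in choosing the right partition of each coordinate axis. The one point requiring care is the boundary handling: I must split $[-1,1]$ into the two pieces $[-1,0]$ and $(0,1]$ (rather than two overlapping closed halves) so that each piece has diameter exactly $1$ and the implication ``same sign pattern $\Rightarrow$ $\ell_\infty$-close'' is genuinely forced. The asymmetric sign convention $\le 0$ versus $>0$ achieves this cleanly, and with it the injectivity argument goes through without any case analysis beyond the two cases above.
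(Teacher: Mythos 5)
Your proof is correct and follows essentially the same route as the paper's: the paper partitions the neighbourhood $N(u)$ (an axis-parallel cube of side $2$) into $2^d$ unit sub-cubes and argues that two centres in one sub-cube force an intersection, which is exactly your sign-pattern partition of $[-1,1]^d$ into the $2^d$ cells determined by $c_i\le 0$ versus $c_i>0$. Your half-open treatment of the cell boundaries is in fact slightly more careful than the paper's, which leaves the boundary-sharing issue implicit, but the underlying idea is identical.
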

\begin{figure}[h]
  \centering
     \begin{subfigure}[b]{0.45\textwidth}
          \centering
        \includegraphics[scale=0.55]{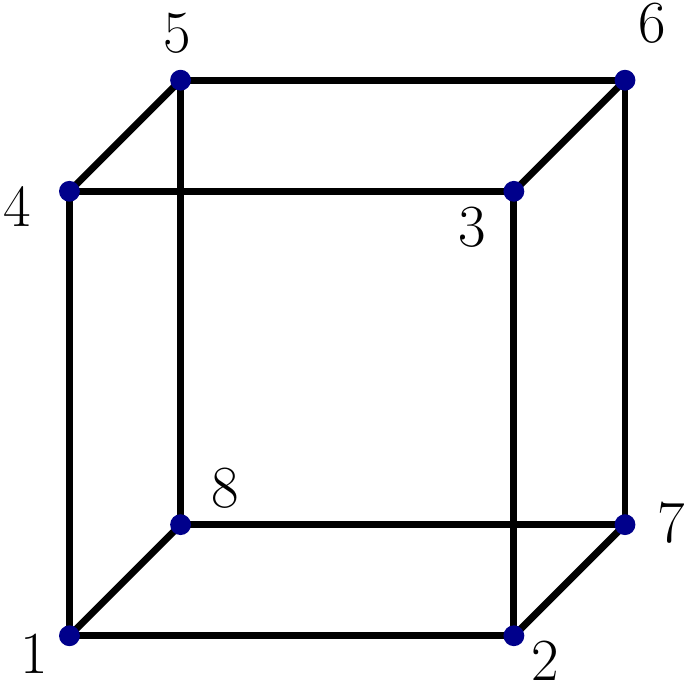}
    \caption{}
    \label{fig:cube_1}
     \end{subfigure}
     \hfill
     \begin{subfigure}[b]{0.45\textwidth}
          \centering
        \includegraphics[scale=0.6]{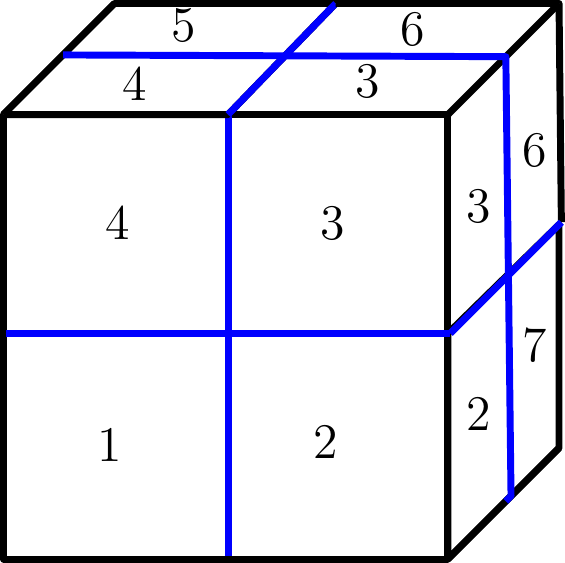}
    \caption{}
    \label{fig:cube_2}
     \end{subfigure}
       \caption{Neighbourhood of a cube in $\IR^3$.}
       \label{cube}
\end{figure}
\begin{proof}
Let $K$ be an optimal independent kissing configuration for axis parallel unit hyper-cubes in $\IR^d$. Let the core of the configuration be $u$. Consider the neighbourhood $N(u)$ that contains all the centers of hypercubes in $K\setminus \{u\}$. 
It is easy to observe that  $N(u)$ is an axis-parallel hyper-cube with side length 2 unit. Let us partition $N(u)$ into $2^{d}$ smaller symmetrical  axis-parallel hyper-cubes each having unit side (refer to Figure~\ref{cube}).
If there exist two axis-parallel hyper-cubes $u_1, u_2 \in K\setminus\{u\}$ such that their centers lie in the same unit sized smaller hyper-cube of $N(u)$, then $u_1$ and $u_2$ will overlap.
So, each smaller hyper-cube can contain at most one centre corresponding to a hyper-cube in $K\setminus \{u\}$. Since $N(u)$ has at most $2^d$ smaller hyper-cubes, $| K\setminus \{u\}|  \leq 2^d$. Therefore, the independent kissing number for axis-parallel unit hyper-cubes in $\IR^d$ is at most $2^{d}$.
\end{proof}


\begin{lemma}\label{hyper-cube}
For a family  of fixed oriented  unit hyper-cubes  in $\IR^d$, the independent kissing number is at least~$2^d$, where $d$ is any positive integer.
\end{lemma}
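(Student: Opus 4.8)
The plan is to construct an explicit independent kissing configuration for axis-parallel unit hyper-cubes in $\IR^d$ that achieves $2^d$ mutually non-touching hyper-cubes around a core, thereby matching the upper bound of Lemma~\ref{cube:ub}. Since Lemma~\ref{cube:ub} already gives $\zeta \le 2^d$, producing one valid configuration of size $2^d$ suffices to conclude $\zeta \ge 2^d$, and hence $\zeta = 2^d$.

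First I would fix the core $u$ to be the unit hyper-cube centered at the origin, say $u=[-\tfrac12,\tfrac12]^d$. The idea is to place $2^d$ copies indexed by the sign vectors $s\in\{-1,+1\}^d$, pushing each copy diagonally outward into a distinct orthant. Concretely, for each $s\in\{-1,+1\}^d$, I would center a unit hyper-cube at the point $c_s=\delta\cdot s$ for a carefully chosen offset $\delta\in(\tfrac12,1)$ (the same scalar $\delta$ in every coordinate). Each such copy is the translate $c_s+[-\tfrac12,\tfrac12]^d$. There are exactly $2^d$ distinct sign vectors, so this gives $2^d$ candidate copies, one per orthant.

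The three things I must verify are: (i) each copy intersects the core $u$; (ii) the copies are pairwise non-touching; and (iii) they are mutually distinct. For (i), the copy centered at $c_s=\delta s$ intersects $u=[-\tfrac12,\tfrac12]^d$ iff the centers differ by less than $1$ in each coordinate, i.e. $\delta<1$; so any $\delta<1$ works (indeed they share a common corner region near the origin). For (ii), two copies with distinct sign vectors $s\neq s'$ differ in at least one coordinate $j$ where $s_j=-s'_j$; in that coordinate the centers are $\delta$ and $-\delta$ apart, a gap of $2\delta$, so the two copies are disjoint (non-touching) in coordinate $j$ precisely when $2\delta>1$, i.e. $\delta>\tfrac12$. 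Hence any $\delta$ with $\tfrac12<\delta<1$ simultaneously satisfies both conditions, and choosing for instance $\delta=\tfrac34$ makes all $2^d$ copies pairwise non-touching while each still meets the core. Distinctness (iii) is immediate since the sign vectors are distinct. Together these show the configuration is a valid independent kissing configuration with independent set of size $2^d$, so $\zeta\ge 2^d$.

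The main obstacle—though a mild one—is the bookkeeping for the non-touching condition: I must argue that a single strict inequality in one coordinate forces disjointness of the full $d$-dimensional boxes, using the fact that axis-parallel boxes are products of intervals and two product boxes are disjoint as soon as they are disjoint in one coordinate. I would state this product-of-intervals observation explicitly and then just check the two scalar inequalities $2\delta>1$ and $\delta<1$. Combining this lower bound with Lemma~\ref{cube:ub} yields that the independent kissing number of fixed oriented unit hyper-cubes in $\IR^d$ is exactly $2^d$.
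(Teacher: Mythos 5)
Your proposal is correct and is essentially the paper's own construction: the paper places the $2^d$ satellite cubes at the corners of the core pushed diagonally outward by $\epsilon$, which after recentering is exactly your family of centers $\delta s$, $s\in\{-1,+1\}^d$, with $\delta=\tfrac12+\epsilon$, and both arguments verify non-touching via separation in a single coordinate and intersection with the core coordinate-wise (your $L^\infty$-style check is even slightly cleaner than the paper's Euclidean bound $\epsilon<\tfrac{1}{2\sqrt{d}}$, which is more restrictive than necessary).
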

\begin{proof}
Here, we give an explicit construction of an independent kissing configuration $K$ where the size of the independent set is $2^d$.
 Let $\sigma_1,\sigma_2,\ldots,\sigma_{2^d}$ \& $\sigma_{2^d+1}$ be the centres of $d$-dimensional  axis parallel unit hyper-cubes of $K$. Let $\sigma_{2^d+1}=(\frac{1}{2},\frac{1}{2},\ldots,\frac{1}{2}$),  and $p_1,p_2,\ldots,p_{2^d}\in \IR^d$ be corner points of the $d$-dimensional unit hyper-cube centred at $\sigma_{2^d+1}$. It is easy to observe that each coordinate of $p_i, i\in [2^d]$ is either 0 or 1.
Let $\epsilon$ be a positive constant satisfying $0<\epsilon <\frac{1}{2\sqrt{d}}$. Let us consider
  \begin{equation}
\sigma_i(x_j) =
    \begin{cases}
      -\epsilon, & \text{ if $p_i(x_j)=0$}\\
      1+\epsilon, & \text{ if $p_i(x_j)=1$,\quad\quad\quad    for $i\in[2^d]$ \& $j\in[d]$},  
    \end{cases}      
\end{equation}
where $\sigma_i(x_j)$ and $p_i(x_j)$ are the $j^{th}$ coordinate value of $\sigma_i$ and $p_i$, respectively.

\begin{claim}
 All the hyper-cubes centred at $\sigma_1,\sigma_2,\ldots,\sigma_{2^d}$ are mutually non-touching and  each of them are intersected by the  hyper-cube  centred at $\sigma_{2^d+1}$.
 \end{claim}
 \begin{proof} 
The distance between $p_i$  and $\sigma_i$ is {$\sqrt{d}\epsilon$}, $\forall$ $i\in [2^d]$. Since $\epsilon <\frac{1}{2\sqrt{d}}$,   the corner point $p_i$ is contained in the unit  hyper-cube centred at $\sigma_i$. Thus,   $\sigma_{2^d+1}$ intersects $\sigma_i$, $\forall$ $i\in [2^d]$.
Consider, any pair $\sigma_i$ and  $\sigma_j$, for $i,j\in [2^d]$ and $i \neq j$. Since $\sigma_i$ and  $\sigma_j$
are distinct,
{the distance between $\sigma_{i}$ and $\sigma_j$ is $(1+2\epsilon)$ under $L^{\infty}$ norm.}
 This implies that all the $d$-dimensional axis-parallel unit hyper-cubes centred at $\sigma_1,\sigma_2,\ldots,\sigma_{2^d}$ are pair-wise non-touching. 
\end{proof}
This completes  the proof.
 \end{proof}

 

\begin{theorem}
 The independent kissing number $\zeta$ for geometric intersection graph of  fixed oriented $d$-dimensional hyper-cubes is $2^d$, where $d$ is any positive integer.
 \end{theorem}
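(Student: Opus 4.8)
The plan is to obtain the exact value by sandwiching $\zeta$ between the two bounds already established, since this theorem is precisely the conjunction of Lemma~\ref{cube:ub} and Lemma~\ref{hyper-cube}. First I would invoke Lemma~\ref{cube:ub} to get $\zeta \le 2^d$: the key mechanism there is that the neighbourhood $N(u)$ of the core cube is itself an axis-parallel cube of side $2$, which partitions cleanly into $2^d$ unit subcubes, and a pigeonhole argument shows that any two independent-set centres falling in the same subcube would force their unit cubes to overlap, contradicting the non-touching requirement. This caps the independent set at one centre per subcube, hence at $2^d$.

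Next I would invoke Lemma~\ref{hyper-cube} for the matching lower bound $\zeta \ge 2^d$. That direction is the constructive one: place a central unit cube at $(\tfrac12,\dots,\tfrac12)$ and, for each of its $2^d$ corner points $p_i$, push a congruent cube slightly outward to the centre $\sigma_i$ obtained by displacing each coordinate of $p_i$ by $\epsilon$ (outward, to $-\epsilon$ or $1+\epsilon$). Choosing $0<\epsilon<\tfrac{1}{2\sqrt d}$ guarantees simultaneously that each outer cube still contains its corner $p_i$, so it meets the central cube, while any two outer centres sit at $L^{\infty}$-distance $1+2\epsilon>1$ apart, so they are mutually non-touching; this yields a valid independent kissing configuration of size exactly $2^d$.

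Combining the two inequalities gives $\zeta = 2^d$, completing the proof. I expect no genuine obstacle here, since the substantive work is already carried out in the two cited lemmas and the theorem is essentially a wrap-up. The only point worth stating explicitly is that the construction of Lemma~\ref{hyper-cube} exhibits an admissible configuration for a particular core $u$, so $\zeta_u = 2^d$ for that $u$, whereas Lemma~\ref{cube:ub} bounds $\zeta_v \le 2^d$ for every core $v$; hence the maximum $\max_{v}\zeta_v$ defining $\zeta$ is attained and equals $2^d$.
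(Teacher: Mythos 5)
Your proposal is correct and matches the paper exactly: the theorem is stated there as the immediate conjunction of Lemma~\ref{cube:ub} (the pigeonhole upper bound via partitioning $N(u)$ into $2^d$ unit subcubes) and Lemma~\ref{hyper-cube} (the explicit $\epsilon$-perturbed corner construction), with no additional argument. Your summaries of both lemmas' mechanisms, including the choice $0<\epsilon<\frac{1}{2\sqrt{d}}$ and the $L^{\infty}$-distance $1+2\epsilon$ separation, are faithful to the paper's proofs.
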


\subsection{Lower Bound of MCDS}\label{3.2}
\begin{figure}[h]
\centering
\includegraphics[width=155mm]{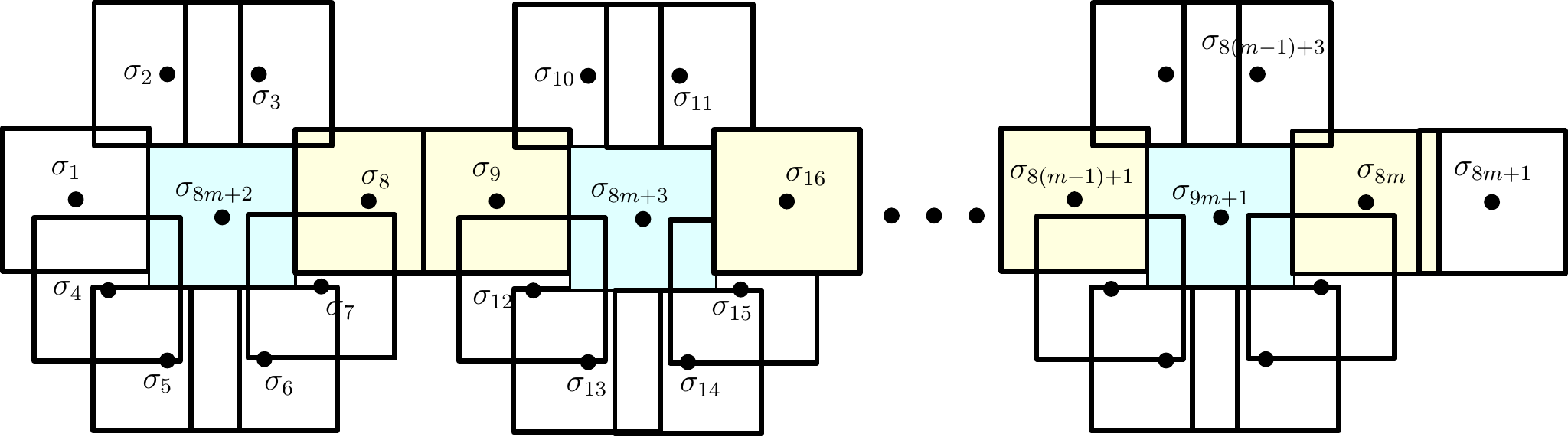}
\caption{Path of PC-blocks  for unit squares.}
\label{fig:CDS_Square}
\end{figure}
 Throughout this subsection, we use $T(d)$ to denote   $2^{d-2}$.
Now, we consider the MCDS problem.
Similar to C-block, here, we define PC-block.
\begin{definition}[PC-block]\label{def:C-block}
Let $\cal S$ be a family of geometric objects. 
A set $\cal B$ of   objects from $\cal S$ is said to form a \emph{PC-block} of length $m$ and size $k$ if following two conditions are satisfied:
\begin{itemize}
    \item the geometric intersection graph of the objects in $\cal B$ is a path  of cycles of length $m$ and size $k$, and
    \item  there exists an object $u(\notin {\cal B})$ in the family $\cal S$ that intersects all the objects in $\cal B$.
\end{itemize}  We refer the object $u$ as the \emph{core} of the  PC-block $\cal B$.
\end{definition}

\begin{lemma}\label{lm:single_group}
For a family of axis-parallel unit hyper-cubes in $\IR^d$, there exists a PC-block of  length $2^{d-2}$ and size $8$, where $d\geq 2$ is an integer.
\end{lemma}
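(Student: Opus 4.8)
The plan is to give an explicit construction, building the PC-block one dimension at a time: the two-dimensional octagon of unit squares serves as the basic size-$8$ cycle, and the remaining $d-2$ coordinates are used to stack $2^{d-2}=T(d)$ copies of it and chain them into a path. \textbf{Base case.} First I would realize a single size-$8$ cycle. Place the core square with centre at the origin and put eight unit squares with centres at the eight boundary points $(\pm 1,\pm\frac12)$ and $(\pm\frac12,\pm1)$ of $[-1,1]^2$. Read in their cyclic order around the boundary, consecutive centres are at $L^{\infty}$-distance $\le 1$ (so the squares intersect) while centres two or more apart are at $L^{\infty}$-distance $>1$ (so the squares are disjoint); since every centre lies in $[-1,1]^2$, all eight squares are intersected by the core. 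Thus the intersection graph is an induced $8$-cycle, i.e.\ a PC-block of length $1$ and size $8$, which settles $d=2$.

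\textbf{Stacking.} For general $d$ I would keep this octagonal footprint in the first two coordinates and use coordinates $3,\dots,d$ to index copies. For each $w\in\{0,1\}^{d-2}$ I place one copy of the base octagon whose last $d-2$ coordinates are fixed, each coordinate taking one of two values differing by more than $1$, exactly as in the pushed-out construction of Lemma~\ref{hyper-cube}. Any two copies with $w\neq w'$ then differ by more than $1$ in some coordinate $\ge 3$, so every cube of one is disjoint from every cube of the other; hence the copies form $2^{d-2}$ pairwise non-adjacent induced $8$-cycles, $2^{d+1}$ cubes in all. Centering the core at $(0,0,\frac12,\dots,\frac12)$, each cube is within $L^{\infty}$-distance $1$ of the core in every coordinate, so a single core cube intersects all of them.

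\textbf{Linking into a path.} To turn these disjoint cycles into a path of cycles I would order the copies by a Gray code $w^{(1)},\dots,w^{(2^{d-2})}$ so that consecutive copies differ in exactly one coordinate $j\in\{3,\dots,d\}$, and for each consecutive pair introduce exactly one edge between a designated head of cycle $i$ and a tail of cycle $i{+}1$ by perturbing only these two cubes in coordinate $j$. I would then check the bookkeeping of the path-of-cycles definition: $2^{d+1}$ vertices, each $G[V_i]$ an induced $8$-cycle, exactly one edge between $V_i$ and $V_{i+1}$ and none between non-consecutive copies, for a total of $2^{d+1}+2^{d-2}-1$ edges, with the core still meeting every cube.

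\textbf{Main obstacle.} The linking step is the genuinely delicate one. Because all $2^{d-2}$ copies share the \emph{same} footprint in the first two coordinates, naively dragging a head cube across the coordinate-$j$ gap brings it within reach of not one but three cubes of the neighbouring cycle (the aligned one and its two cyclic neighbours), creating chords and extra inter-cycle edges; moreover no point of $[-1,1]^2$ is within $L^\infty$-distance $1$ of only a single octagon centre, so a single translation in $x_j$ cannot isolate one contact. The crux is therefore to place the heads and tails so that precisely one head–tail pair touches: for instance by offsetting or reflecting consecutive copies in the first two coordinates (using a slightly shrunk octagon to create the necessary slack) and perturbing \emph{both} the head and the matching tail rather than one cube, so that coordinate $j$ separates the head from the bulk of the next cycle while still admitting the single contact, all while keeping every cube within reach of the common core. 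Once this local placement is pinned down, the global path-of-cycles structure and the single-core property follow from the coordinate-separation argument above.
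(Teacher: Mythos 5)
Your construction is correct in substance, and—despite the different packaging—it is essentially the paper's construction, so the crux you flag as "to be pinned down" is exactly the point the paper settles with explicit coordinates. The paper argues by induction on $d$: it takes two copies $B_1,B_2$ of the $(d-1)$-dimensional PC-block, places their bulk at $d$-th coordinate $f-(\frac12+2\epsilon)$ and $f+(\frac12+2\epsilon)$ (layer gap $1+4\epsilon>1$), and moves \emph{one} designated cube of each copy to $f-(\frac12-\epsilon)$ and $f+(\frac12-\epsilon)$ respectively; the linking pair (aligned in the first $d-1$ coordinates) then has gap $1-2\epsilon<1$, while each moved cube stays at distance $1+\epsilon>1$ from the bulk of the opposite layer, and the $3\epsilon$ intra-layer perturbation leaves each copy's internal cycle structure untouched. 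This is precisely your "perturb both the head and the matching tail" fix, and it shows your worry about the three aligned neighbours is resolved without any offsetting, reflecting, or shrinking of the octagon—those devices in your sketch are unnecessary. Moreover, the paper's interleaving of the cyclone order (first half $S_{d-1}$ in order, second half reversed) is exactly the reflected binary Gray code, so your flat indexing of $2^{d-2}$ copies by $w\in\{0,1\}^{d-2}$ in Gray-code order is the unrolled form of the paper's induction. What each organization buys: the induction confines all checking to a single pair of layers per step, whereas your flat version must additionally verify that non-consecutive copies stay disjoint after perturbation—which does follow, since any two such copies still differ by more than $1$ in some coordinate in which neither of their perturbed cubes was moved, but it is a check your write-up should state. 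With the numbers above (and the core kept at coordinate $f$, hence within $\frac12+2\epsilon\le 1$ of every layer), your argument is complete.
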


 
   
    
    

\begin{proof}
We prove this by induction. For $d=2$, the base case of the induction follows from the construction in Figure~\ref{fig:CDS_Square}. Here, the hyper-cubes centered at $\sigma_{1}, \sigma_{2},\ldots,\sigma_{8}$  forms a PC-block of length 1 and size 8. The hyper-cube centered at $\sigma_{8m+2}$ is the core object of this PC-block. Here, the sequence $\sigma_{1}, \sigma_{2},\ldots,\sigma_{8}$ is a cyclone-order of the hyper-cubes in the PC-block.


Let us assume that the induction hypothesis holds for $d-1$ dimension. Let ${\cal B}_{d-1}$ be a PC-block of length $2^{d-3}$ and size $8$ of  axis parallel unit hyper-cubes in $\IR^{d-1}$. To construct a PC-block ${\cal B}_d$ in $\IR^d$,  we consider two copies $B_1$ and $B_2$ of ${\cal B}_{d-1}$ and append  $d$th coordinate to each of their hyper-cubes. 
 Let  $S_{d-1}:\sigma_1^{d-1}, \sigma_2^{d-1},\ldots, \sigma_{T(d-1)}^{d-1}$ be a cyclone order  of centers of unit hyper-cubes in ${\cal B}_{d-1}$. 
Now, we define the cyclone-order of center $S_d: \sigma_1^d,\sigma_2^d,\ldots,\sigma_{T(d)}^d$ for unit hyper-cubes in ${\cal B}_d$ as follows.
The first half of the sequence $S_d$ is made from the cyclone-order  $S_{d-1}$ with original order and the second half  is made from   the reverse order of $S_{d-1}$.
Now, we are going to define the $d$th coordinate. Let $0<\epsilon<\frac{1}{8}$ be a small positive constant and $f$ be any fixed real number.
For the center $\sigma_{T(d-1)}^{d-1}$ of $B_1$ and $B_2$, we append $f-(\frac{1}{2}-\epsilon)$ and $f+(\frac{1}{2}-\epsilon)$, respectively,   as the $d$th coordinate. 
For all other centers in $B_1$ and $B_2$, we append $f-(\frac{1}{2}+2\epsilon)$ and $f+(\frac{1}{2}+2\epsilon)$, respectively, as the $d$th coordinate. 
This ensures that only $\sigma_{T(d-1)}^{d-1}$ of $B_1$ and $B_2$ intersects with each other.
More formally, we have the following.
 
  \begin{equation}\label{eq:mcd_hypercube1}
\sigma_i^d(x_j) =
    \begin{cases}
      f-(\frac{1}{2}+2\epsilon), & \text{ if $j=d$ and $i< T(d-1)$}\\
      f-(\frac{1}{2}-\epsilon),& \text{ if $j=d$ and $i= T(d-1)$}\\
         f+(\frac{1}{2}-\epsilon), & \text{ if $j=d$ and $i=T(d-1)+1$}\\
     f+(\frac{1}{2}+2\epsilon), & \text{ if $j=d$ and $i> T(d-1)+1$}\\
      \sigma_i^{d-1}(x_j), & \text{if $j<d$ and $i\leq T(d-1)$}\\
      \sigma_{2 T(d-1)-i+1}^{d-1}(x_j), & \text{if $j<d$ and $i\geq T(d-1)+1$}.
        
    \end{cases}      
\end{equation}
Here $\sigma_i^d(x_j)$ is the $j^{th}$ coordinate value of $\sigma_i^{d}$,  where $ i \in [T(d)]$ and $j\in [d]$.

 This  appending of the $d$th coordinate   ensures the geometric intersection graph of the hyper-cubes in $B_i$ ($i\in [2]$) remains a path of cycles of length $2^{d-3}$ and size $8$. Additionally, the geometric intersection graph of $B_1\cup B_2$ is a path of cycles of length $2^{d-2}$ and size $8$.

Now, to prove the lemma,  we only need  to prove the existence of a core hyper-cube for ${\cal B}_d$.
Let us assume that a $(d-1)$-dimensional unit hyper-cube centred at $c^{d-1}$ dominates all hyper-cubes of ${\cal B}_{d-1}$.
The center $c^d$ of the core object is obtained by appending $f$  as the  $d$th coordinate to $c^{d-1}$. In other words, we have
\begin{equation}\label{eq:mcd_hypercube2}
c^d(x_j)=
     \begin{cases}
     f,& \text{ if $j=d$}\\
     c^{d-1}(x_j), & \text{ if $j<d$}.
     \end{cases} 
\end{equation}

Here  $c^{d}(x_j)$ is the $j^{th}$ coordinate value of  $c^{d}$, where $ i \in [T(d)]$ and $j\in [d]$.

\begin{figure}[h!]
\centering
\includegraphics[width=90mm]{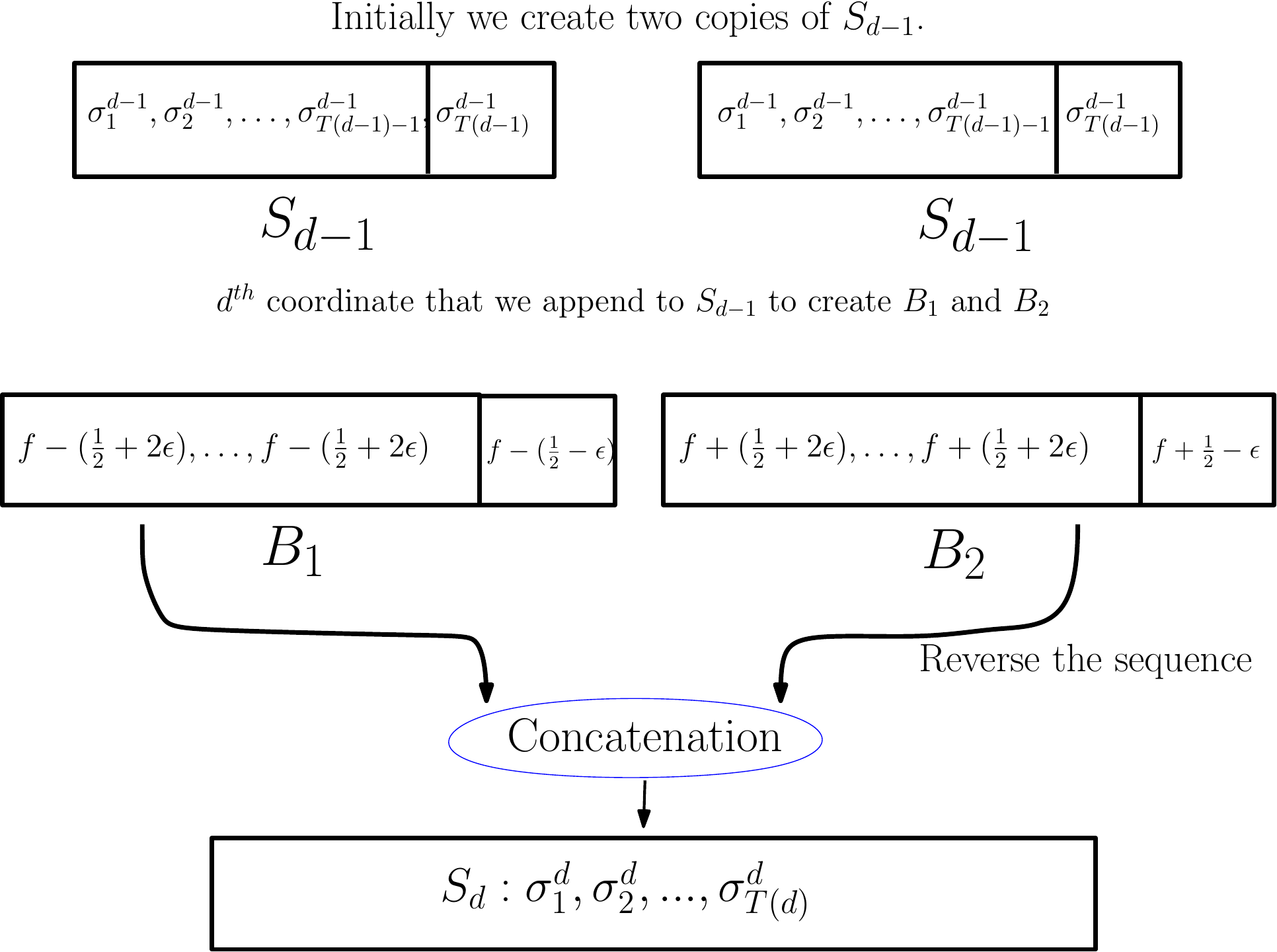}
\caption{Construction of the cyclone-order $S_d$.}
\label{Concat}
\end{figure}

 

 \begin{claim}
 The axis-parallel unit hyper-cube centered at $c^{d}$ dominates all the hyper-cubes  in ${\cal B}_{d}$.
 \end{claim}

 \begin{proof}
 To prove the claim, we need to show that $|{c^{d}(x_j)-\sigma_i^{d}(x_j)}| \leq 1$ for all $i\in [T(d)]$ and $j\in[d]$.
 Since the $(d-1)$-dimensional axis-parallel hyper-cube centred at $c^{d-1}$ dominates all the axis-parallel hyper-cubes centred in the sequence $S_{d-1}$ (due to induction hypothesis), we have $|{c^{d-1}(x_j)-\sigma_i^{d-1}(x_j)}|\leq 1$ for all $i\in [T(d-1)]$ and $j\in[d-1]$.  
From Equation~\ref{eq:mcd_hypercube1} and~\ref{eq:mcd_hypercube2}, we get  $|{c^d(x_d)-\sigma_i^d(x_d)}|\leq\frac{1}{2}+2\epsilon$. As a result, $|{c^{d}(x_j)-\sigma_i^{d}(x_j)}|\leq 1$ for all $i\in [T(d)]$ and $j\in[d-1]$. 
    Therefore, the claim holds.
 \end{proof}
 
 This completes the proof.
 \end{proof}

\begin{definition}[Path of PC-blocks]\label{def:path-PC-block}
A set $\cal P$ of  PC-blocks from a family of geometric objects is said to form a \emph{path} if  following conditions are satisfied:
\begin{itemize}
    \item[(a)] the geometric intersection graph of these PC-blocks is a path, and 
   \item[(b)]  if two PC-blocks are intersecting, then exactly one object from each block intersects with the other.
\end{itemize}
The \emph{length} of a path $\cal P$ of  PC-blocks  is defined as the  number of PC-blocks in $\cal P$.
\end{definition}

Now, we prove the following.

\begin{lemma}\label{lm:path_of_PC_block}
For a family of axis-parallel unit hyper-cubes in $\IR^d$, there exists a path of PC-block of length $m$ where each PC-block is of length $2^{d-2}$ and size $8$, where $m \geq 1, d\geq 2$ are integers.
\end{lemma}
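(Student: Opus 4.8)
The plan is to follow the template of Lemma~\ref{lm:path-of-C-block}, replacing its C-blocks by the PC-block supplied by Lemma~\ref{lm:single_group} and specialising to axis-parallel unit hyper-cubes. By Lemma~\ref{lm:single_group} there is a single PC-block ${\cal B}$ of length $2^{d-2}$ and size $8$ in $\IR^d$. Since any two axis-parallel unit hyper-cubes are translates of one another, rigidly translating ${\cal B}$ by any vector again yields a PC-block of the same length and size: translation preserves the internal path-of-cycles intersection pattern and sends the core to a translate that still dominates the whole block. First I would take $m$ such translated copies ${\cal B}_1,{\cal B}_2,\ldots,{\cal B}_m$, and for each $i\in[m]$ let $H_i$ be the convex hull of the union of the hyper-cubes of ${\cal B}_i$; every $H_i$ is a bounded convex polytope and all of them are congruent translates of a fixed polytope $H$.

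Next I would glue consecutive blocks together. Fix a direction $\mathbf{w}\in\IR^d$ in general position, so that the $\mathbf{w}$-maximal point $p$ and the $(-\mathbf{w})$-maximal point $q$ of $H$ are each a unique vertex. For each $i\in[m-1]$, I would translate ${\cal B}_{i+1}$ so that the $(-\mathbf{w})$-maximal vertex of $H_{i+1}$ coincides with the $\mathbf{w}$-maximal vertex of $H_i$; as these are supporting hyperplanes of the two hulls in opposite directions, the hulls then touch at exactly this single point and have disjoint interiors. This forces the relative displacement of consecutive blocks to be the constant vector $p-q$, independent of $i$. The single contact point lies on exactly one hyper-cube of ${\cal B}_i$ and exactly one of ${\cal B}_{i+1}$, so precisely one object of each block meets the other, giving condition~(b) of Definition~\ref{def:path-PC-block}. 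Moreover, because the displacement is constant, all the separating hyperplanes (each orthogonal to $\mathbf{w}$ through the respective contact point) are parallel and are spaced by the full $\mathbf{w}$-width of $H$; hence any two blocks ${\cal B}_i,{\cal B}_j$ with $|i-j|\ge 2$ lie strictly on opposite sides of an intermediate hyperplane and are disjoint. Thus the block-level intersection graph is exactly the path ${\cal B}_1-{\cal B}_2-\cdots-{\cal B}_m$, which is condition~(a). Combining (a) and (b), ${\cal P}=\{{\cal B}_1,\ldots,{\cal B}_m\}$ is a path of PC-blocks of length $m$.

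The step I expect to be delicate is condition~(b): for axis-parallel cubes a non-generic sliding direction can bring two facets, and hence several pairs of cubes, into contact simultaneously, producing more than one connecting edge. Taking $\mathbf{w}$ in general position---equivalently, applying an arbitrarily small transverse perturbation once the hulls meet---forces a single-vertex, single-pair contact and rules this out, while the finite diameter of $H$ is exactly what makes the parallel-hyperplane separation of non-consecutive blocks valid. One may also note that this arrangement realises, at the level of individual hyper-cubes, a single path of cycles of length $m\,2^{d-2}$ and size $8$, which is the structure the ensuing MCDS lower bound will feed into Lemma~\ref{lemma:path-of-cycles}.
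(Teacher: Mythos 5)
Your route is genuinely different from the paper's, and it does deliver the literal statement, but it has a gap relative to what the lemma exists to provide. The paper does not glue blocks at hull vertices: it takes the PC-block of Lemma~\ref{lm:single_group}, re-sets the $d$th coordinate of the \emph{first} and \emph{last} hyper-cubes of its cyclone order to $f-(1-\epsilon)$ and $f+(1-\epsilon)$ (so exactly these two designated cubes protrude below and above the block), and then lays down translates ${\cal B}(f+t(3-2\epsilon))$, $0\le t\le m-1$, along the $d$th axis; the spacing makes the protruding last cube of each block meet precisely the protruding first cube of the next block and nothing else. Your hull-gluing argument (adapted from Lemma~\ref{lm:path-of-C-block}) does give conditions (a) and (b) of Definition~\ref{def:path-PC-block}: the supporting-hyperplane argument for $H_i\cap H_{i+1}=\{p\}$ is sound, and generic $\mathbf{w}$ yields a single contact pair, up to one caveat that genericity of $\mathbf{w}$ alone cannot remove --- if the extremal corner $p$ happens to be a point shared by two cubes of the same block, both of them meet the neighbouring block, so you must additionally check (or arrange) that the chosen hull vertex lies on exactly one cube.

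The genuine gap is the final sentence of your proposal. Nothing in your construction controls \emph{which} cube of a block carries the contact: it is whichever cube is $\mathbf{w}$-extremal, and it need not be the head of the last cycle (resp.\ the tail of the first cycle) of the block's internal path of cycles. If the contact cube lies in a middle cycle, the combined cube-level graph is not a path of cycles at all (that cycle becomes adjacent to three other cycles); even if it lies in a terminal cycle, the head/tail pattern needed to present the whole arrangement in cyclone order is not guaranteed. Moreover, this cannot be repaired by choosing $\mathbf{w}$ cleverly: in the block of Lemma~\ref{lm:single_group} the intended connectors $\sigma_1^{d}$ and $\sigma_{T(d)}^{d}$ have identical projections onto the first $d-1$ coordinates, and the top and bottom layers of the block have the same projection set, so for any direction $\mathbf{w}=(u,w_d)$ the requirement that $\sigma_{T(d)}^{d}$ be the strict $\mathbf{w}$-maximum forces $\langle u,\pi_0\rangle>\langle u,\pi\rangle$ over the other layer projections, while the requirement that $\sigma_1^{d}$ be the strict $\mathbf{w}$-minimum forces the opposite inequality --- a contradiction. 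Since Theorem~\ref{lm:MCDS_hyper_cube_lb} uses this lemma precisely by feeding the cube-level path of cycles, presented in cyclone order, into Lemma~\ref{lemma:path-of-cycles}, your construction satisfies the letter of Definition~\ref{def:path-PC-block} but not its intended use; the paper's protruding-cube construction is exactly what secures that extra combinatorial structure.
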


\begin{proof}
For $d=2$,  the example given in Figure~\ref{fig:CDS_Square}  is a path of PC-blocks satisfying the lemma. For $d>2$, we give an explicit construction.

First, we define  each PC-block. Consider a PC-block ${\cal B}_d$ as defined in Lemma~\ref{lm:single_group}. Let $S_d: \sigma_1^{d}, \sigma_2^{d},\ldots, \sigma_{T(d)}^{d}$ be a cyclone-order of the centers of hyper-cubes in ${\cal B}_d$. Let $c^d$ be the center of the core for the PC-block ${\cal B}_d$ with the value of the $d$th coordinate $f$.  Let us  change the value 
of the $d$th coordinate of $\sigma_1^{d}$ and $\sigma_{T(d)}^{d}$ as $f-(1-\epsilon)$ and $f+(1-\epsilon)$, respectively. Note that, after changing this, the modified  hyper-cubes in  ${\cal B}_d$ remains a PC-block. We denote this modified PC-block as ${\cal B}(f)$. 

Now, we define a path of PC-blocks where each PC-block is a translated copy of ${\cal B}(f)$.  Consider translated copies ${\cal B}(f+t(3-2\epsilon))$, where $1\leq t \leq m-1$.  Note that only $\sigma_1^{d}$ of ${\cal B}(f+t(3-2\epsilon))$  intersects $\sigma_{T(d)}^{d}$ of ${\cal B}(f+(t-1)(3-2\epsilon))$. As a result, ${\cal B}(f)$, ${\cal B}(f+(3-2\epsilon)), \ldots$, ${\cal B}(f+(m-1)(3-2\epsilon))$ forms a path of PC-block  of length $m$. Thus the lemma follows.

\end{proof}

\begin{theorem} \label{lm:MCDS_hyper_cube_lb}
The competitive ratio of every deterministic online algorithm for minimum connected dominating set (MCDS) of $d$-dimensional unit hyper-cubes in the online setup is at least $7\times 2^{d-2}-1$, if $opt_{cds}=1$; otherwise it is at least~$\frac{7\times 2^{d-2}}{3}$.
\end{theorem}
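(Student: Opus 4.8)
The plan is to mimic the lower-bound argument of Theorem~\ref{thm:mcds_con} for translated copies, but now built on PC-blocks of the specific hyper-cube construction rather than on C-blocks. The key structural ingredient is already in hand: by Lemma~\ref{lm:single_group} there is a PC-block of length $2^{d-2}$ and size $8$ whose underlying intersection graph is a path of cycles (length $2^{d-2}$, size $8$), and by Lemma~\ref{lm:path_of_PC_block} these PC-blocks can be chained into a path of PC-blocks of any length $m$. Writing $T(d)=2^{d-2}$, a single PC-block carries a path of cycles on $T(d)$ cycles of size $8$, and its objects are all dominated by one core hyper-cube.

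For the $opt_{cds}=1$ case I would feed the online algorithm the objects of a \emph{single} PC-block $\mathcal B_d$ in cyclone-order, then append its core as the last object. By Lemma~\ref{lemma:path-of-cycles}, a path of cycles of size $k=8$ and length $m=T(d)$ forces any deterministic online algorithm to report a CDS of size at least $(k-1)m-1 = 7\,T(d)-1 = 7\cdot 2^{d-2}-1$. Since the single appended core dominates the entire block, the offline optimum is $1$, giving competitive ratio at least $7\cdot 2^{d-2}-1$ as claimed.

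For the general case I would take a path of PC-blocks of length $m$ (Lemma~\ref{lm:path_of_PC_block}) and present all its objects in cyclone-order. Concatenating $m$ PC-blocks gives a path of cycles of size $8$ and total length $m\,T(d)$, so Lemma~\ref{lemma:path-of-cycles} again lower-bounds the online CDS; the care needed is the exact count once the final cycle is ``completed'' and the $m$ cores are appended, paralleling the bookkeeping in the proof of Theorem~\ref{thm:mcds_con}. I would append an extra object touching only the head $\sigma$ of the last cycle (forcing the full $7T(d)$ contribution from the last block too), then append one core per PC-block; the online cost becomes at least $7\,T(d)\,m$, while the offline optimum is a connected dominating set of size $3m-1$ (a head, a tail, and a core per block, dropping the redundant tail of the first block, exactly as before). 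The ratio then tends to $\tfrac{7\,T(d)\,m}{3m-1}\to \tfrac{7\cdot 2^{d-2}}{3}$, establishing the asymptotic bound.

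The main obstacle I anticipate is the exact accounting in the general case: making sure the cyclone-order across block boundaries genuinely yields a path of cycles of the stated total length, that the connectivity requirement forces inclusion of each head-tail bridge, and that the optimal offline solution is simultaneously connected and of size $3m-1$. These connectivity and parity details are precisely where the earlier translated-copies proof spent its effort, and the hyper-cube construction of Lemmas~\ref{lm:single_group} and~\ref{lm:path_of_PC_block} must be invoked carefully to guarantee that only the intended pairs of objects intersect across block boundaries, so that no shortcut dominating set of smaller size exists.
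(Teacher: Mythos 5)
Your proposal is correct and follows essentially the same route as the paper: the paper's proof simply says to repeat the argument of Theorem~\ref{thm:mcds_con}, replacing the C-block of size $2\zeta$ by the PC-block of size $8$ and length $2^{d-2}$ from Lemma~\ref{lm:single_group}, and the path of C-blocks by the path of PC-blocks from Lemma~\ref{lm:path_of_PC_block}. Your accounting (online cost $7\cdot 2^{d-2}m$ after the extra object touching the last head, offline optimum $3m-1$ from cores plus bridging heads and tails) is exactly the bookkeeping the paper inherits from Theorem~\ref{thm:mcds_con}.
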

\begin{proof}
Similar to the proof of Theorem~\ref{thm:mcds_con}, we can prove this theorem  with the following refinements.

\begin{itemize}
    \item Instead of C-block of size $2\zeta$, here we have PC-block of size $8$ and length $2^{d-2}$ (due to Lemma~\ref{lm:single_group}).
    \item Instead of path of C-block, we have path of PC-block (due to Lemma~\ref{lm:path_of_PC_block}).
   
\end{itemize}
\end{proof}

 \section{Arbitrary Oriented Unit Hyper-cubes in $\IR^d$}\label{4}
 \subsection{Independent Kissing Number}\label{3.2}
 
  \begin{figure}[h]
     \begin{subfigure}[b]{0.45\textwidth}
\centering
\includegraphics[scale=0.34]{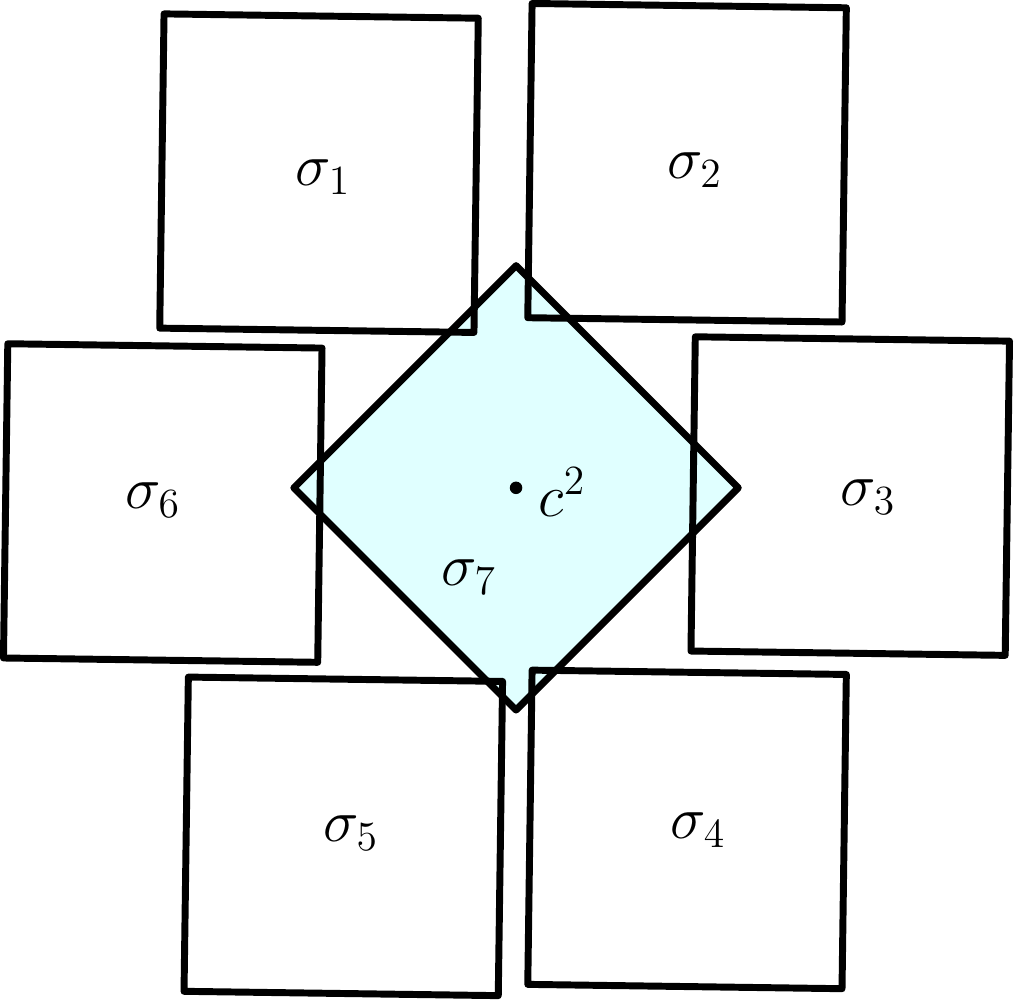}
\caption{}
\label{fig:arb_sq}
 \end{subfigure}
  \begin{subfigure}[b]{0.45\textwidth}
\centering
\includegraphics[scale=0.5]{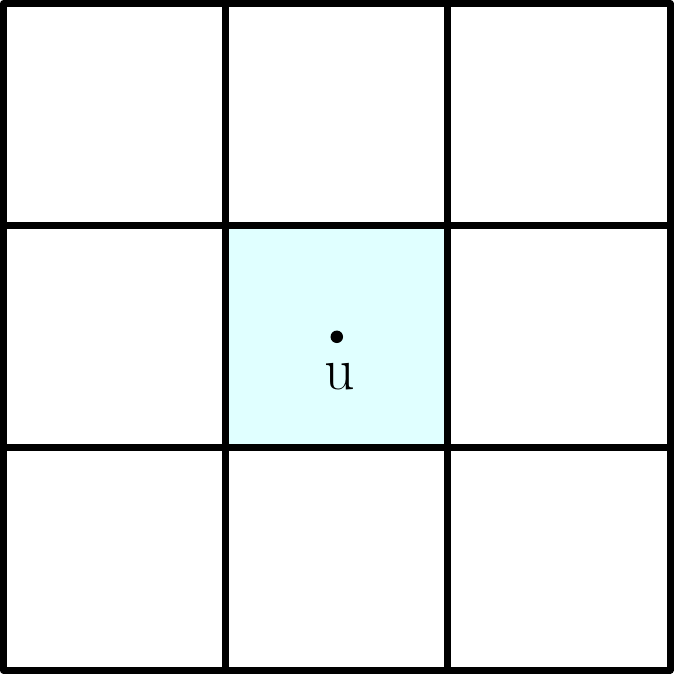}
\caption{}
\label{fig:kiss_sq}
\end{subfigure}
\label{}
\caption{(a) An independent kissing configuration for arbitrary oriented unit squares, (b) The optimal kissing  configuration for squares.}
  \end{figure}
First, we consider the arbitrary oriented unit squares in $\IR^2$.
Youngs~\cite{Youngs} proved that  the kissing number for squares is 8. Later Klamkin~\cite{KlamkinLL} proved that the Figure~\ref{fig:kiss_sq} is the unique optimal configuration for this. Since in this configuration, the neighbours are touching, the independent kissing number $\zeta$ for family of arbitrary oriented unit squares is strictly less than eight. In Figure~\ref{fig:arb_sq}, we give an example of independent kissing configuration for arbitrary oriented unit squares where  size of the independent set is six.   Therefore, we raise the following question.
 
 
 \begin{problem}
 What is the independent kissing number  for geometric intersection graph of  arbitrary oriented squares: six or seven?
 \end{problem}


Now, we give a lower bound for the general case.
 \begin{lemma}\label{lm:single_group_arb}
 The  value of independent kissing number $\zeta$ for geometric intersection graph of  arbitrary oriented $d$-dimensional unit hyper-cubes is at least~$6 \times 2^{d-2}$, where $d\geq 2$ is an integer.
 \end{lemma}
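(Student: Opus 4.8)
The plan is to prove the bound by induction on $d$, reusing the doubling idea already behind Lemma~\ref{hyper-cube} and Lemma~\ref{lm:single_group}: lift an optimal configuration in dimension $d-1$ to two parallel copies in dimension $d$, and exhibit a single core cube that dominates both copies. The base case $d=2$ is exactly the six-square configuration displayed in Figure~\ref{fig:arb_sq}, which already gives $\zeta\ge 6=6\cdot 2^{2-2}$, so no separate argument is needed there.

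For the inductive step I would assume an independent kissing configuration $K_{d-1}$ of arbitrarily oriented unit hyper-cubes in $\IR^{d-1}$ whose core $u^{d-1}$, centred at a point $o^{d-1}$, intersects $6\cdot 2^{d-3}$ pairwise non-touching cubes $\sigma_1^{d-1},\dots,\sigma_{6\cdot 2^{d-3}}^{d-1}$. Fix a scalar $a$ with $\tfrac12<a<1$ (for instance $a=\tfrac34$). Each object is lifted to $\IR^d$ as the Cartesian product of its footprint with the unit interval in the new $d$th coordinate; since the new axis stays axis-aligned while the first $d-1$ coordinates carry an arbitrary rotation, each lifted object is a genuine (arbitrarily oriented) unit $d$-cube. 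I would put the core at $(o^{d-1},0)$, a first copy $B_1$ of the independent set at $d$th coordinate $+a$, and a second copy $B_2$ at $d$th coordinate $-a$, yielding $2\cdot 6\cdot 2^{d-3}=6\cdot 2^{d-2}$ candidate objects.

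The three verifications all factor through the product structure of the lifted cubes. First, the core meets every lifted cube, because in the first $d-1$ coordinates it still intersects each $\sigma_i^{d-1}$ by the hypothesis, while in the $d$th coordinate the interval $[-\tfrac12,\tfrac12]$ meets $[a-\tfrac12,a+\tfrac12]$ precisely because $a<1$. Second, two cubes in the same copy remain non-touching since their first $d-1$ coordinate footprints are disjoint by the induction hypothesis. Third, a cube of $B_1$ and a cube of $B_2$ are non-touching because their $d$th coordinate intervals $[a-\tfrac12,a+\tfrac12]$ and $[-a-\tfrac12,-a+\tfrac12]$ are disjoint exactly when $a>\tfrac12$. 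Together these give an independent kissing configuration of size $6\cdot 2^{d-2}$, closing the induction.

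The main obstacle is pinning down the offset $a$ in the narrow window $(\tfrac12,1)$: it must be large enough that the two copies separate in the new coordinate (so cross-copy pairs are non-touching) yet small enough that the core still reaches each copy. Once this single scalar is fixed, every intersection and non-intersection claim reduces to the dimension-$(d-1)$ comparison $u^{d-1}\cap\sigma_i^{d-1}$, so no fresh geometric estimate is required beyond the base case. A minor point to record is that the lift of a rotated $(d-1)$-cube is itself a rotated $d$-cube, the rotation living in the $SO(d-1)$ block of $SO(d)$, which is admissible since arbitrary orientations are permitted.
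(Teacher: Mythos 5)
Your proposal is correct and follows essentially the same route as the paper's proof: induction on $d$ with the base case given by the six-square configuration of Figure~\ref{fig:arb_sq}, then lifting two copies of the $(d-1)$-dimensional independent set to $d$th coordinates $\pm a$ with a core at $0$, where your window $\tfrac12<a<1$ is exactly the paper's choice $\tfrac12+\epsilon$ for small $\epsilon>0$. The product-structure verification you give (intersection in the new coordinate iff $a<1$, cross-copy separation iff $a>\tfrac12$) is the same argument the paper leaves implicit, so nothing further is needed.
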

 \begin{proof}
 We prove this by induction on $d$.
For $d=2$, the base case of the induction,  consider the   independent kissing configuration given in Figure~\ref{fig:arb_sq} where size of the independent set is 6.


Let us assume that the induction hypothesis holds for $d-1$ dimension. Let  $K_{d-1}$ be an independent kissing configuration where size of the independent set is $M(d-1)=6\times 2^{d-3}$ for $(d-1)$-dimensional unit hyper-cubes. Let the core hyper-cube in $K_{d-1}$ is centered at $c^{d-1}$, and other  hyper-cubes  are centred at $\sigma_1^{d-1}, \sigma_2^{d-1},\ldots, \sigma_{M(d-1)}^{d-1}$. To construct independent kissing configuration $K_d$, we take two copies independent set from $K_{d-1}$ and append $d$th coordinate to each of them. We append  $-(\frac{1}{2}+\epsilon)$ as the   $d$th coordinate to all the centers of the first copy, whereas we append $-(\frac{1}{2}+\epsilon)$ to the other. More formally, 

\begin{equation}\label{eq:mcd_arb_hypercube1}
\sigma_i^d(x_j) =
    \begin{cases}
      \sigma_i^{d-1}(x_j), & \text{ if $j<d$ and $i\leq M(d)$}\\
      \frac{1}{2}+\epsilon,& \text{ if $j=d$ and $i\leq M(d-1)$}\\
      -(\frac{1}{2}+\epsilon), & \text{ if $j=d$ and $i > M(d-1)$.}
    \end{cases}      
\end{equation}
The center $c^d$ is obtained by appending $0$ as the  $d$th coordinate to $c^{d-1}$, and the hyper-cube centered at $c^d$ is axis parallel in every dimension except first two coordinates. Therefore, we have.
\begin{equation}\label{eq:mcd_arb_hypercube2}
c^d(x_j)=
     \begin{cases}
     $0$,& \text{ if $j=d$}\\
     c^{d-1}(x_j), & \text{ if $j<d$.}
     \end{cases} 
\end{equation}
Here, all the $d$-dimensional unit hyper-cubes centred at $\sigma_1^d,\sigma_2^d,\ldots,\sigma_{M(d)}^d$ are mutually non-touching and all of them can be dominated by the unit hyper cube centred at $c^{d}$. It is straightforward to see that the value of $M(d)= 2M(d-1)$, where $M(d-1)=6\times 2^{d-3}$. 
 Hence the lemma follows.
 \end{proof}

Now, we raise following question.
 \begin{problem}
 What is the independent kissing number $\zeta$  for geometric intersection graph of  arbitrary oriented $d$-dimensional unit hyper-cubes?
 \end{problem}
 

\subsection{Lower Bound of MCDS}

\begin{figure}[h]
\centering
\includegraphics[width=155mm]{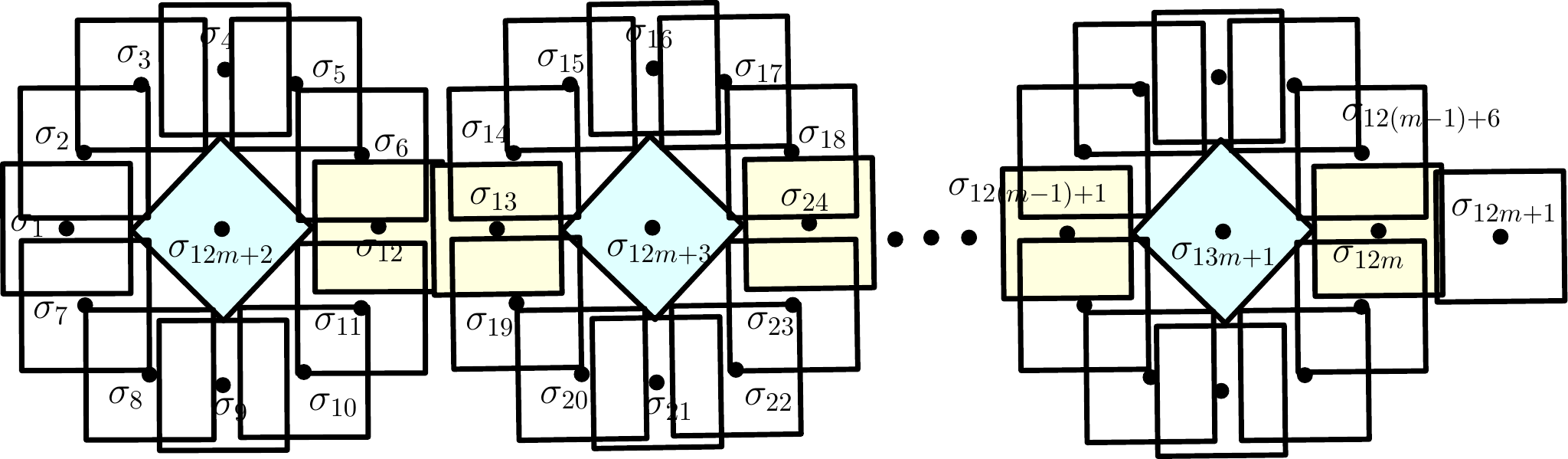}
\caption{Path of PC-blocks for arbitrary oriented unit squares.}
\label{fig:arb_sq1}
\end{figure}
 

 \begin{lemma}\label{lm:single_group_arb}
 For a family of arbitrary oriented unit hyper-cubes in $\IR^d$, there exists a PC-block of  length $2^{d-2}$ and size $12$, where $d\geq 2$ is an integer.
 \end{lemma}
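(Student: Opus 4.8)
The plan is to mirror the inductive construction used in Lemma~\ref{lm:single_group} for the axis-parallel case, but starting from the arbitrary-oriented base configuration of size $12$ in $\IR^2$ instead of size $8$. First I would establish the base case $d=2$: I need a PC-block of length $1$ and size $12$ for arbitrary oriented unit squares, together with a core square dominating all $12$ of them. The construction suggested by Figure~\ref{fig:arb_sq1} should provide twelve squares whose geometric intersection graph is a single cycle, all intersected by one core square; the extra freedom of rotation (compared to the fixed-oriented case, which only gave $8$) is exactly what lets us pack $12$ squares around one core in a cyclic intersection pattern. I would verify that consecutive squares in the cyclic order intersect, non-consecutive ones do not, and the core meets every square.

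For the inductive step, I would assume a PC-block ${\cal B}_{d-1}$ of length $2^{d-3}$ and size $12$ exists in $\IR^{d-1}$, with a cyclone-order $S_{d-1}:\sigma_1^{d-1},\ldots,\sigma_{T'(d-1)}^{d-1}$ of its centers (where here $T'(d)=12\cdot 2^{d-2}$ counts vertices) and a core centered at $c^{d-1}$. Exactly as in Lemma~\ref{lm:single_group}, I would take two copies $B_1,B_2$ of ${\cal B}_{d-1}$, append a $d$th coordinate, and form the new cyclone-order $S_d$ by concatenating $S_{d-1}$ in original order with $S_{d-1}$ in reverse order. The $d$th coordinate is assigned so that the two copies sit on opposite sides of a hyperplane at distance $f\mp(\tfrac12+2\epsilon)$, except for the two ``joining'' endpoints $\sigma_{T(d-1)}^{d-1}$, which are pushed to $f\mp(\tfrac12-\epsilon)$ so that precisely those two squares intersect across the copies. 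This glues the two paths-of-cycles of length $2^{d-3}$ into one of length $2^{d-2}$, keeping size $12$. The core $c^d$ is obtained by appending $f$ to $c^{d-1}$, and one checks $|c^d(x_d)-\sigma_i^d(x_d)|\le\tfrac12+2\epsilon<1$ while the first $d-1$ coordinates inherit domination from the hypothesis, so $c^d$ dominates all of ${\cal B}_d$.

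The main obstacle, and the only place this differs substantively from Lemma~\ref{lm:single_group}, is ensuring that appending identical $d$th coordinates to a whole copy does not destroy the intra-copy intersection structure when the base objects are arbitrarily oriented rather than axis-parallel. For axis-parallel cubes, two cubes intersect iff they intersect coordinatewise, so adding a common $d$th coordinate trivially preserves all intersections and non-intersections within a copy. For arbitrarily oriented hyper-cubes this coordinatewise characterization fails in general, so I would keep the newly appended $d$th axis aligned with a genuine axis of each copy's cubes (i.e.\ orient every cube in $B_1,B_2$ to share the standard $e_d$ direction, rotating only within the first $d-1$ coordinates, as Equation~\eqref{eq:mcd_arb_hypercube2} already hints for the core). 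With the $d$th axis being a common axis of all cubes in a copy, intersection again factors as (intersection in the first $d-1$ coordinates) and (overlap of the $d$th intervals), so equal $d$th coordinates preserve every intra-copy edge and non-edge, and the carefully chosen $d$th offsets create exactly the single inter-copy edge needed. Verifying this factorization and that the inter-copy gap $\tfrac12-\epsilon$ versus $\tfrac12+2\epsilon$ yields precisely one joining edge is the crux; the rest is bookkeeping identical to the fixed-oriented proof.
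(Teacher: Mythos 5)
Your proposal is correct and takes essentially the same route as the paper: induction on $d$ mirroring Lemma~\ref{lm:single_group}, with Figure~\ref{fig:arb_sq1} supplying the size-$12$ base case in the plane and the $d$th-coordinate offsets gluing two copies of length $2^{d-3}$ into one of length $2^{d-2}$. The paper resolves your ``main obstacle'' exactly as you propose, by confining all rotation to the first two coordinates --- indeed it keeps every cube of the PC-block fully axis-parallel and rotates only the core --- so that intersections factor coordinatewise and the fixed-oriented argument carries over verbatim.
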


 
   
    

\begin{proof}
The proof of this lemma is similar to Lemma~\ref{lm:single_group}.
For $d$=2, the base of the induction is followed from  Figure~\ref{fig:arb_sq1} where  we have  PC-block of length 1 and size 12. Note that the core hyper-cube of the PC-block is arbitrary oriented only.

The PC-block in higher dimension is constructed in a similar way to Lemma~\ref{lm:single_group} where all the hyper-cubes are axis-parallel. 
Similar way, we define the core hyper-cube of the PC-block in higher dimension. The only difference is that this core hyper-cube is also axis-parallel, except the first two coordinates.  
\end{proof}

Similar to Lemma~\ref{lm:path_of_PC_block}, using Figure~\ref{fig:arb_sq1} instead of Figure~\ref{fig:CDS_Square},  we can prove the following.

\begin{lemma}\label{lm:path_of_PC_block_arbit}
For a family of arbitrary oriented unit hyper-cubes in $\IR^d$, there exists a path of PC-block of length $m$ where each PC-block is of length $2^{d-2}$ and size $12$, where $m \geq 1, d\geq 2$ are integers.
\end{lemma}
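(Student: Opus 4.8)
The plan is to mirror the proof of Lemma~\ref{lm:path_of_PC_block} almost verbatim, replacing the axis-parallel building block with the arbitrary-oriented PC-block guaranteed by Lemma~\ref{lm:single_group_arb}. Concretely, I would start from a single PC-block ${\cal B}_d$ of length $2^{d-2}$ and size $12$ whose existence is provided by Lemma~\ref{lm:single_group_arb}, together with a fixed cyclone-order $S_d:\sigma_1^{d},\sigma_2^{d},\ldots,\sigma_{T(d)}^{d}$ of the centers of its hyper-cubes and a core object centered at $c^d$. The goal is to lay down $m$ translated copies so that consecutive copies touch in exactly one object-to-object contact, and so that the overall geometric intersection graph is a path (of paths of cycles), which is precisely condition~(a)--(b) of Definition~\ref{def:path-PC-block}.

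First I would single out the two ``endpoint'' objects of the block---the tail $\sigma_1^{d}$ and the head $\sigma_{T(d)}^{d}$ of the cyclone-order---and nudge them outward (as in Lemma~\ref{lm:path_of_PC_block}, by shifting one coordinate to $f-(1-\epsilon)$ and $f+(1-\epsilon)$ respectively) so that each protrudes slightly past the rest of its block; I would verify that this small perturbation keeps the intersection graph of the block an unchanged path of cycles of length $2^{d-2}$ and size $12$, call the modified block ${\cal B}(f)$. Then I would translate copies ${\cal B}(f+t\,\delta)$ for $1\le t\le m-1$ along a single fixed direction, with a spacing $\delta$ chosen exactly large enough that the only surviving cross-block contact is between $\sigma_1^{d}$ of ${\cal B}(f+t\delta)$ and $\sigma_{T(d)}^{d}$ of ${\cal B}(f+(t-1)\delta)$, and so that all other objects of distinct copies remain non-touching. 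Chaining these copies yields the required path of PC-blocks of length $m$.

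The only genuine divergence from Lemma~\ref{lm:path_of_PC_block} is that here the copies are arbitrary-oriented squares/hyper-cubes rather than axis-parallel ones, so the endpoint contacts and the inter-block separation must be argued in the appropriate orientation rather than by comparing a single coordinate. For $d=2$ this is already witnessed by the explicit picture, so I would invoke Figure~\ref{fig:arb_sq1} in place of Figure~\ref{fig:CDS_Square} as the base case exactly as the paragraph preceding the statement suggests, reading off directly that two adjacent copies meet in a single object and that the separation lines through the contact points are parallel (this parallelism is what forces the global graph to be a genuine path and not something that closes up or branches). For $d>2$, since Lemma~\ref{lm:single_group_arb} builds the block by the same doubling-and-appending construction as the axis-parallel case, the coordinate I append in the stacking direction behaves identically, and the separation argument carries over unchanged.

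The step I expect to require the most care is confirming that the single-coordinate (or single-direction) spacing $\delta$ truly isolates the intended contact: in the arbitrary-oriented setting one must check that no non-endpoint object of one copy accidentally overlaps an object of the neighboring copy once the blocks are brought close enough for the endpoints to touch. This is exactly the role played by the outward nudging of $\sigma_1^{d}$ and $\sigma_{T(d)}^{d}$ and by choosing $\delta$ as the smallest translation realizing endpoint contact; because the blocks are translates along a fixed direction, the separating lines $L_t$ through the contact points are mutually parallel, and as in the proof of Lemma~\ref{lm:path-of-C-block} this parallelism guarantees the intersection graph is a path. Hence the construction goes through and the lemma follows.
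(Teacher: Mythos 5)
Your proposal matches the paper's intended argument exactly: the paper's own proof is just the one-line remark that the result follows as in Lemma~\ref{lm:path_of_PC_block}, with Figure~\ref{fig:arb_sq1} replacing Figure~\ref{fig:CDS_Square} as the $d=2$ base case, and your write-up simply fills in the same endpoint-nudging, fixed-direction translation, and parallel-separation details that that earlier proof uses. Your added care about isolating the single cross-block contact in the arbitrary-oriented setting is a faithful (and slightly more explicit) rendering of what the paper leaves implicit, so there is no gap and no genuine divergence.
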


Now, similar to Theorem~\ref{lm:MCDS_hyper_cube_lb},  we have the following lower bound result.

\begin{theorem} \label{lm:MCDS_arb_hyper_cube_lb}
The competitive ratio of every deterministic online algorithm for minimum connected dominating set (MCDS) of $d$-dimensional arbitrary oriented unit hyper-cubes in the online setup is at least $11\times 2^{d-2}-1$, if $opt_{cds}=1$; otherwise it is at least~$\frac{11\times 2^{d-2}}{3}$.
\end{theorem}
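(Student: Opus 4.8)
The plan is to mirror the proof of Theorem~\ref{lm:MCDS_hyper_cube_lb}, substituting the arbitrary-oriented building blocks for the axis-parallel ones. The entire lower-bound machinery of Section~\ref{sec:lb_MCDS} is generic: it needs only (i) a single block whose intersection graph is a path of cycles of prescribed size and length, together with a core dominating all of it, and (ii) a way to chain such blocks into a path. For the arbitrary-oriented case these are exactly Lemma~\ref{lm:single_group_arb} (a PC-block of length $2^{d-2}$ and size $12$) and Lemma~\ref{lm:path_of_PC_block_arbit} (a path of $m$ such PC-blocks). So the first step is simply to invoke these two lemmas in place of Lemma~\ref{lm:single_group} and Lemma~\ref{lm:path_of_PC_block}.

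Next I would run the two cases of Theorem~\ref{thm:mcds_con} with the new parameters. For the case $opt_{cds}=1$, take a single PC-block $\cal B$ of size $12$ and length $2^{d-2}$; its intersection graph is a path of cycles of size $k=12$ and length $m'=2^{d-2}$. Feeding the objects in cyclone-order, Lemma~\ref{lemma:path-of-cycles} forces any deterministic online algorithm to report a CDS of size at least $(k-1)m'-1 = 11\cdot 2^{d-2}-1$, while the single core object (appended last) dominates the whole block, so $opt_{cds}=1$. This yields the competitive ratio $11\cdot 2^{d-2}-1$.

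For the general case, take a path $\cal P$ of $m$ PC-blocks, each of size $12$ and length $2^{d-2}$, as in Lemma~\ref{lm:path_of_PC_block_arbit}. The intersection graph is again a path of cycles, now of size $12$ and length $2^{d-2}m$. Presenting the objects in cyclone-order and then, exactly as in Theorem~\ref{thm:mcds_con}, appending one extra object $\sigma^*$ touching only the head of the last cycle (to force the algorithm to commit on the final cycle as well) forces a reported CDS of size at least $(12-1)\cdot 2^{d-2}m = 11\cdot 2^{d-2}m$. The offline optimum, as in Theorem~\ref{thm:mcds_con}, consists of the heads, tails and cores of the blocks and has size $\Theta(m)$ (specifically $3m-1$ in the translated-copies argument, or a constant multiple of $m$ here), so the ratio tends to $\tfrac{11\cdot 2^{d-2}}{3}$ as $m\to\infty$. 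The only nontrivial point is to confirm that the optimal offline CDS for the path of PC-blocks really has size linear in $m$ with the right constant, so that the quotient gives exactly $\tfrac{11\cdot 2^{d-2}}{3}$; this is the main obstacle, but it follows by choosing, within each PC-block, the core together with the head and tail objects needed for connectivity across blocks, precisely paralleling the counting already carried out in the proof of Theorem~\ref{thm:mcds_con}. Everything else is a direct transcription of the earlier argument with $2\zeta$ replaced by $12$ and the length parameter $2^{d-2}$ inherited from the PC-block construction.
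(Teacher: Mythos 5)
Your proposal is correct and takes essentially the same route as the paper: the paper's (only sketched) proof is precisely to substitute Lemma~\ref{lm:single_group_arb} and Lemma~\ref{lm:path_of_PC_block_arbit} for their axis-parallel counterparts and rerun the argument of Theorems~\ref{thm:mcds_con} and~\ref{lm:MCDS_hyper_cube_lb}, giving $(12-1)\cdot 2^{d-2}-1$ in the $opt_{cds}=1$ case and $(12-1)\cdot 2^{d-2}m$ against an offline optimum of $3m-1$ (cores, heads and tails) in the general case. Your explicit two-case verification, including the $\sigma^*$ trick and the $3m-1$ count, is exactly the paper's intended argument.
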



\section{Congruent Balls in $\IR^3$}\label{5}

\subsection{Independent Kissing Number}\label{4.1}

  


First, we are going to present a lower bound.

\begin{lemma}\label{lm:mds_ball}
The independent kissing number  for geometric intersection graph of  unit balls in $\IR^3$ is at least~12. 
\end{lemma}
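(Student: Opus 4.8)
The plan is to exhibit a single explicit configuration of twelve pairwise non-touching unit balls, all intersected by a common core ball, by reusing the classical icosahedral kissing arrangement in $\IR^3$ and exploiting the well-known angular slack in that configuration (the same slack that made the Newton--Gregory kissing problem subtle). I would place the core unit ball $u$ with center at the origin, and place twelve unit balls whose centers are the twelve vertices of a regular icosahedron inscribed in the sphere of radius $2$ about the origin. Since each such center lies at distance exactly $2$ from the origin and both balls have radius $1$, every one of the twelve balls touches $u$; hence all of them are intersected (dominated) by $u$, exactly as required by Definition~\ref{def1}. The remaining content is to verify that the twelve balls are pairwise non-touching, after which we obtain an independent kissing configuration with independent set of size $12$, proving $\zeta\geq 12$.

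For the non-touching check I would use a clean angular criterion. Two unit balls whose centers both lie at distance $2$ from the origin and subtend an angle $\theta$ there have inter-center distance $4\sin(\theta/2)$; they touch precisely when this equals $2$, i.e. when $\theta=60^\circ$, and are non-touching whenever $\theta>60^\circ$. Thus it suffices to show that every pair of distinct icosahedron vertices subtends an angle strictly greater than $60^\circ$ at the center, the minimum being attained by an edge, where the angle equals $\arccos(1/\sqrt{5})\approx 63.43^\circ>60^\circ$.

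To make this fully concrete and avoid citing a numerical fact, I would fix explicit golden-ratio coordinates: the cyclic permutations of $(0,\pm 1,\pm\varphi)$ with $\varphi=(1+\sqrt{5})/2$ form a regular icosahedron. A direct computation (e.g. $(0,1,\varphi)$ to $(0,-1,\varphi)$ and $(0,1,\varphi)$ to $(1,\varphi,0)$, using $\varphi^2=\varphi+1$) shows that every edge has length $2$ and that this is the minimum pairwise distance among the twelve points, while the circumradius is $\sqrt{2+\varphi}\approx 1.902$. Scaling all coordinates by $2/\sqrt{2+\varphi}$ moves the centers onto the radius-$2$ sphere (so each touches $u$) while scaling the minimum pairwise distance to $4/\sqrt{2+\varphi}\approx 2.103>2$. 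Since every pair of centers is then more than $2$ apart, the twelve unit balls are pairwise non-touching, completing the construction.

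The only real obstacle is pinning down this single metric fact about the regular icosahedron---that its edge length strictly exceeds its circumradius (equivalently, the minimum angular separation exceeds $60^\circ$)---since everything else is immediate from the definitions. I would therefore carry out the short coordinate computation explicitly rather than appeal to a quoted constant, so that the strict inequality $4/\sqrt{2+\varphi}>2$ is transparent.
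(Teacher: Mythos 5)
Your proposal is correct and takes essentially the same route as the paper: both place the twelve unit balls at the vertices of a regular icosahedron and exploit the fact that the icosahedron's edge length strictly exceeds its circumradius. The only differences are cosmetic --- the paper normalizes the edge length to $2+\epsilon$ and checks that the circumradius is below $2$ by quoting the formula $R=e\sin(\frac{2\pi}{5})$, whereas you normalize the circumradius to exactly $2$ (so the outer balls touch the core, which still counts as intersecting under the paper's definitions) and verify the edge length exceeds $2$ via explicit golden-ratio coordinates, making the key inequality self-contained.
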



\begin{proof}
Let $\epsilon >0$ (in particular, one can choose $\epsilon$=$0.001$), and $e$=$2+\epsilon$ be the length of each edge of a regular icosahedron $C$. Choose corner points of $C$ as the centre $\sigma_1,\sigma_2,\ldots,\sigma_{12}$ of the  unit balls (having unit radius). 
Since the edge length of the icosahedron $C$ is greater than 2, so all of these balls are non-touching.  It is a well-known fact that if a regular  icosahedron has edge length $e$, then the radius $R$ of the circumscribed ball is $R$=$e \sin(\frac{2\pi}{5})$. For our case, it is easy to see that $R<2$ for the icosahedron $C$.
Thus all the unit balls centred at $\sigma_1,\sigma_2,\ldots,\sigma_{12}$ can be dominated by a unit ball whose center $\sigma_{13}$ coincides with the centre of the icosahedron $C$. This implies that the lower bound of $\zeta$ for unit balls is 12.
\end{proof}

 Since the kissing number for balls in $\IR^3$ is 12~\cite{BrassMP,Schutte},  we have the following.

\begin{theorem}
 The independent kissing number  for geometric intersection graph of  unit balls in $\IR^3$ is~12.
 \end{theorem}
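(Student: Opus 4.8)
The plan is to sandwich $\zeta$ between $12$ and $12$. The lower bound $\zeta \ge 12$ is already in hand from Lemma~\ref{lm:mds_ball}, whose icosahedron construction exhibits an independent kissing configuration with $12$ pairwise non-touching unit balls all dominated by a central one. So the entire content of the theorem is the matching upper bound $\zeta \le 12$, for which I would invoke the general principle recorded in the preliminaries: for congruent copies of a convex body $u$ one has $\zeta \le K(u)$, together with the known value $K(B_3) = 12$~\cite{BrassMP,Schutte}. Combining the two inequalities yields $\zeta = 12$.

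The one step deserving an actual argument is the reduction $\zeta \le K(B_3)$, since the planar standard-form argument of Lemma~\ref{lm:standardIndepndentKC} relies on a cyclic ordering of the surrounding objects that has no analogue in $\IR^3$. For balls, however, a direct metric computation suffices. Fix any independent kissing configuration with core $u$ of center $o$ and radius $1$, and surrounding unit balls centered at $c_1,\dots,c_m$. The hypotheses read $r_i := \norm{c_i - o} \le 2$ (each surrounding ball meets $u$) and $\norm{c_i - c_j} > 2$ for $i \ne j$ (they are pairwise non-touching). If some $r_i = 0$, then every other center lies within distance $2$ of $o$, forcing $m = 1$; so I may assume all $r_i > 0$ and set $d_i = (c_i - o)/r_i$.

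The key step is to push each ball radially outward onto the sphere of radius $2$, replacing $c_i$ by $c_i' = 2 d_i$, and to verify that this produces a genuine kissing configuration. Writing $\theta_{ij}$ for the angle between $d_i$ and $d_j$, the law of cosines turns $\norm{c_i - c_j}^2 > 4$ into $\cos\theta_{ij} < (r_i^2 + r_j^2 - 4)/(2 r_i r_j)$. Since the elementary inequality $r_i^2 - r_i r_j + r_j^2 \le 4$ holds throughout $[0,2]^2$ (its maximum on that square, attained only at the corners, is $4$), the right-hand side is at most $\tfrac12$, whence $\cos\theta_{ij} < \tfrac12$ and $\theta_{ij} > 60^\circ$. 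Consequently $\norm{c_i' - c_j'} = 4\sin(\theta_{ij}/2) > 4\sin 30^\circ = 2$, so the pushed unit balls are pairwise non-overlapping while all touching $u$ externally: a valid kissing configuration of size $m$. Hence $m \le K(B_3) = 12$, and taking the maximum over configurations gives $\zeta \le 12$.

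The main obstacle is precisely this reduction to a kissing configuration: everything hinges on showing that the non-touching hypothesis forces the angular separation $\theta_{ij}$ to exceed $60^\circ$ even when the two balls meet the core at different depths. Once the scalar inequality $r_i^2 - r_i r_j + r_j^2 \le 4$ is verified on $[0,2]^2$, the radial push is guaranteed to preserve non-overlap and the remainder is bookkeeping. I would therefore foreground that inequality and the angular bound, and simply cite Lemma~\ref{lm:mds_ball} and the value $K(B_3)=12$ for the two ends of the sandwich.
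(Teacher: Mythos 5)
Your proposal is correct and follows the same route as the paper: the lower bound is the icosahedron construction of Lemma~\ref{lm:mds_ball}, and the upper bound comes from the known kissing number $K(B_3)=12$. The only difference is that the paper simply asserts (in its preliminaries) that $\zeta \le K(u)$ for congruent copies is easy to observe, whereas you actually justify this reduction for balls — pushing each center radially out to the sphere of radius $2$ and using $r_i^2 - r_i r_j + r_j^2 \le 4$ on $[0,2]^2$ to force angular separation exceeding $60^\circ$ — a worthwhile addition, since the surrounding balls in an independent kissing configuration may overlap the core, so the passage to a genuine kissing configuration is not entirely immediate.
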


\subsection{Lower Bound of MCDS}\label{4.2}
\begin{theorem}\label{thm:mcds_ball}
The  competitive ratio of every deterministic online algorithm  for minimum connected dominating set (MCDS) of unit balls is at least 17, if $opt_{cds}$ is one; otherwise it is at least~$6$.
\end{theorem}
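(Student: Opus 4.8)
The plan is to follow the two-part template of Theorem~\ref{thm:mcds_con} almost verbatim, substituting the planar C-block of size $2\zeta$ from Lemma~\ref{lm:C_block} (unavailable for balls) by an explicit C-block of size $19$ for unit balls in $\IR^3$. Recall that in that proof a single C-block of size $k$, presented in cyclone order, forces any deterministic online algorithm to keep at least $k-2$ objects while the offline optimum is the single core, and a path of $m$ such blocks forces at least $(k-1)m$ objects while the optimum is $3m-1$; hence the $opt_{cds}=1$ bound is $k-2$ and the asymptotic bound is $\frac{k-1}{3}$. Plugging $k=19$ gives exactly $19-2=17$ and $\frac{19-1}{3}=6$, which are the two claimed numbers. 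So the whole theorem reduces to two geometric facts for unit balls: the existence of a C-block of size $19$, and of a path of such C-blocks of arbitrary length $m$.

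The crux, and the step I expect to be the main obstacle, is the first construction: exhibiting a core ball $u$ at the origin together with $19$ unit balls whose centers $c_0,\dots,c_{18}$ lie in the closed ball of radius $2$ about the origin (so that $u$ meets every one of them) and whose intersection graph is exactly the $19$-cycle, i.e.\ $\norm{c_i-c_{i+1}}\le 2$ for consecutive indices (mod $19$) and $\norm{c_i-c_j}>2$ for every non-consecutive pair. A single spherical shell of radius just below $2$ appears insufficient: there the non-edge condition would force all but two of the remaining $16$ centers outside a roughly $60^\circ$ cap around each center, which over-packs the sphere beyond its kissing capacity (only about $12$ points can sit pairwise beyond $60^\circ$). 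Consequently the centers must be spread through the interior at \emph{varying} distances from the core, using radial separation to realize the many required non-intersections while keeping cyclically adjacent balls overlapping. This is precisely where the gap between $19$ and the independent kissing number $12$ of Lemma~\ref{lm:mds_ball} is exploited — allowing consecutive balls to overlap buys extra balls beyond the $12$ that can sit pairwise non-touching around $u$. Writing down explicit coordinates and verifying the roughly $\binom{19}{2}$ pairwise distance conditions is where the real effort will lie.

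Granting the size-$19$ C-block, the second construction mimics Lemma~\ref{lm:path-of-C-block} in $\IR^3$. I would take $m$ congruent copies ${\cal B}_1,\dots,{\cal B}_m$, let $H_i$ be the convex hull of ${\cal B}_i$, and translate the copies along a fixed direction so that consecutive polytopes $H_i,H_{i+1}$ share a single boundary vertex with disjoint interiors, arranged so that exactly one ball of ${\cal B}_i$ touches exactly one ball of ${\cal B}_{i+1}$. Inserting parallel separating planes between consecutive blocks (the $\IR^3$ analogue of the separating lines of Lemma~\ref{lm:path-of-C-block}) keeps every non-adjacent pair of blocks disjoint, so the intersection graph of the whole arrangement is precisely a path of cycles of size $19$ and length $m$.

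Finally I would feed these objects in cyclone order and invoke Lemma~\ref{lemma:path-of-cycles}. For the $opt_{cds}=1$ claim, present the single size-$19$ C-block; the lemma forces at least $19-2=17$ balls into the connected dominating set, whereas appending the core last dominates the whole block by itself, so the absolute ratio is at least $17$. For the asymptotic claim, present the path of C-blocks, append one extra boundary ball touching only the head of the last cycle (forcing that cycle to contribute $18$ rather than $17$), and then the $m$ cores; since each core arrives already dominated, the online cost stays at least $18m$, while the $3m-1$ objects consisting of the head, tail, and core of each block (dropping the redundant tail of the first) form a connected dominating set. Letting $m\to\infty$ gives $\frac{18m}{3m-1}\to 6$, so no additive constant can keep the ratio below $6$, establishing the second claim.
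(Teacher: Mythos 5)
Your reduction of the theorem to ``two geometric facts'' is sound in its arithmetic, but the first of those facts --- the existence of a C-block of size $19$ for unit balls in $\IR^3$ --- is exactly what you do not prove, and nothing in the paper supplies it for you. Lemma~\ref{lm:C_block} is inherently two-dimensional: it starts from a standard optimal independent kissing configuration (Lemma~\ref{lm:standardIndepndentKC}, proved only for translates of a convex object in $\IR^2$) and exploits the cyclic order of the neighbours around the core, a notion with no analogue in $\IR^3$. Your own discussion concedes that a single spherical shell cannot carry $19$ centers with the required non-adjacencies and that the centers must be spread radially, but this heuristic is not a construction; producing coordinates and verifying the $\binom{19}{2}$ distance constraints \emph{is} the proof, and it is missing. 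It is not even clear that a $19$-cycle C-block exists for unit balls (note, for instance, that balls placed near extended edge-midpoints of the icosahedron with edge length $2+\epsilon$ sit at distance well below $2$ from one another whenever their edges share a vertex, so naive attempts to close a cycle through the icosahedron create chords). So as written the argument has a genuine gap at its central step.

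The paper avoids cycles altogether. Its building block for balls is a \emph{P-block}: the $10$ vertex balls $\sigma_1,\dots,\sigma_{10}$ of the icosahedron of Lemma~\ref{lm:mds_ball} taken along a path of icosahedron edges, interleaved with $9$ balls centered at radially extended midpoints $m_{i,i+1}$ of those edges, all dominated by the central ball $\rho$. The intersection graph is (claimed to be) a $19$-vertex path, and the point is that a path gives the same numbers a cycle would: any connected dominating set of a path on $n$ vertices must contain all $n-2$ internal vertices, so after the $19$ path balls are presented in path order the online solution already has at least $17$ balls, while appending the core makes $opt_{cds}=1$; chaining $m$ translated P-blocks (in the spirit of Definition~\ref{def:path-PC-block}) forces at least $19m-2$ balls against an optimum of $3m-1$, which is at least $6$. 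If you want to salvage your write-up, the economical fix is to replace your hypothetical C-block by such a path-based block --- the cyclone-order machinery and Lemma~\ref{lemma:path-of-cycles} are then not needed, and the step you identified as the ``real effort'' disappears.
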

\begin{figure}[h]
\centering
\includegraphics[scale=0.35]{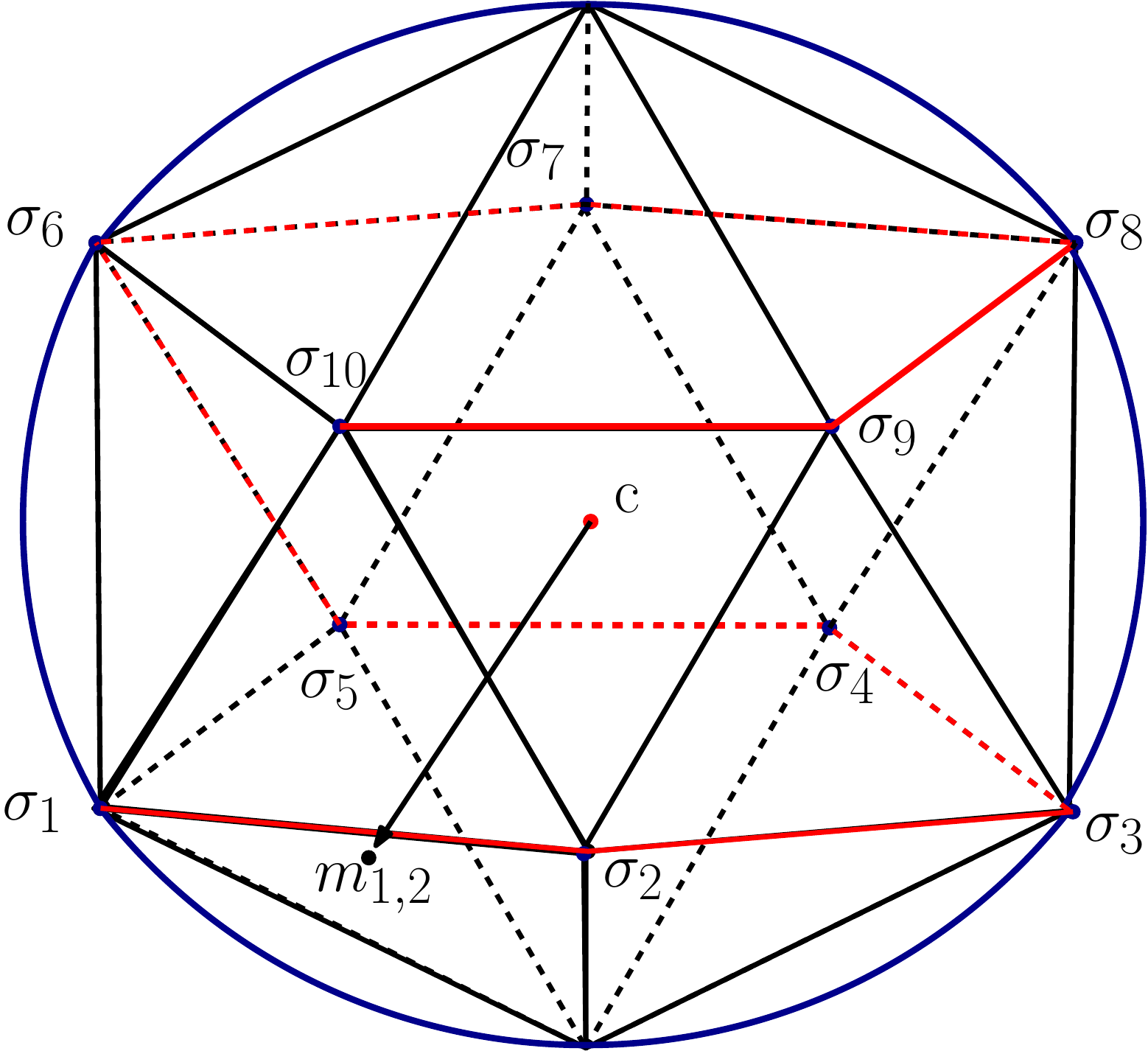}
\caption{Regular Icosahedron.}
\label{fig:icosa}
\end{figure}

\begin{proof}
Refer to Figure~\ref{fig:icosa}. Let the given regular icosahedron  has edge length same as in Lemma~\ref{lm:mds_ball}. Recall that in this icosahedron, all the unit balls centred at the vertices are pair-wise non-touching and all of them are intersected by the unit ball $\rho$ centred at the center $c$ of the icosahedron.  Consider any edge $a,b$  of the icosahedron.  Let $m_{a,b}$ be a point obtained by  shooting a ray from $c$ to the midpoint of $ab$ and extending it upto distance $R$ from $c$. If we place a unit ball centred at $m_{a,b}$ then it is intersected by  the unit ball $\rho$.  Consider the set of corners $\sigma_1,\sigma_2,\ldots,\sigma_{10}$. In a similar fashion, we can place balls in each edge of $i,i+1$ of the icosahedron, where $i\in[9]$. Note that in this way, we have 19  balls whose geometric intersection graph forms a path and each of these balls are dominated by the unit ball $\rho$. We refer these set of 19 balls as a \emph{P-block} and the unit ball $\rho$ as the core of this P-block.   Now consider the sequence of unit balls centred at $\sigma_1,m_{1,2},\sigma_2,m_{2,3}, \ldots, \sigma_9,m_{9,10}, \sigma_{10}$ that appears according to the path. For this sequence of input, any online algorithm requires at least $17$ unit balls to form a connected dominating set. Therefore, if $opt_{cds}=1$, then the   lower bound is $17$. For the general case, we can form a path of P-blocks of length $m>2$ (similar to Definition~\ref{def:path-PC-block}). Consider  the enumeration of unit balls according to the order they appear in the path of P-blocks followed by the $m$ core balls. For this input sequence,  any online algorithm will report CDS of size at least $19m-2$, whereas the size of the optimum is $3m-1$.  
Thus, the theorem follows.  
\end{proof}

\section{Fixed Oriented Unit Triangles}\label{6}
In this section, we are going to refer an equilateral triangle as an unit triangle. Observe that the neighbourhood of an unit triangle would be a hexagon as shown in Figure~\ref{fig:nbd_tri}.
 \subsection{Independent Kissing Number}\label{5.1}

\begin{figure}[h]
  \centering
     \begin{subfigure}[b]{0.30\textwidth}
          \centering
        \includegraphics[scale=0.49]{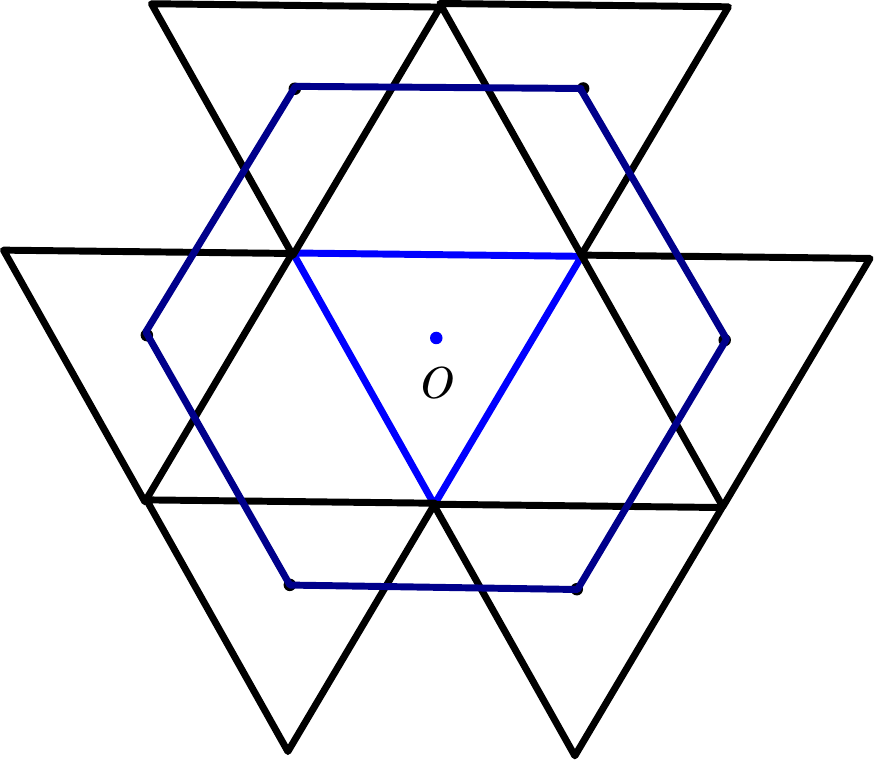}
    \caption{}
    \label{fig:nbd_tri}
     \end{subfigure}
     \hfill
     \begin{subfigure}[b]{0.30\textwidth}
          \centering
        \includegraphics[scale=0.8]{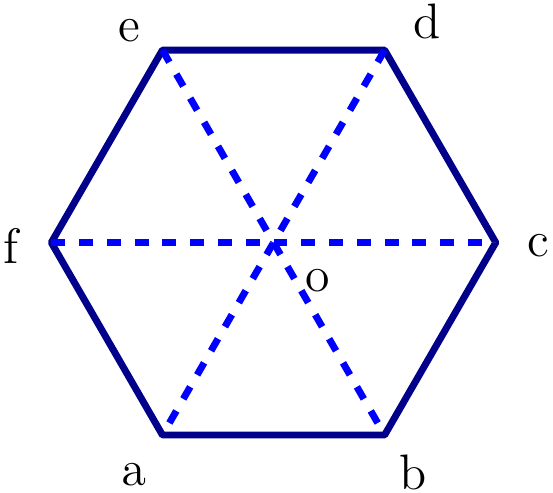}
    \caption{}
    \label{fig:tri_nbd}
     \end{subfigure}
       \hfill
     \begin{subfigure}[b]{0.30\textwidth}
          \centering
        \includegraphics[scale=0.8]{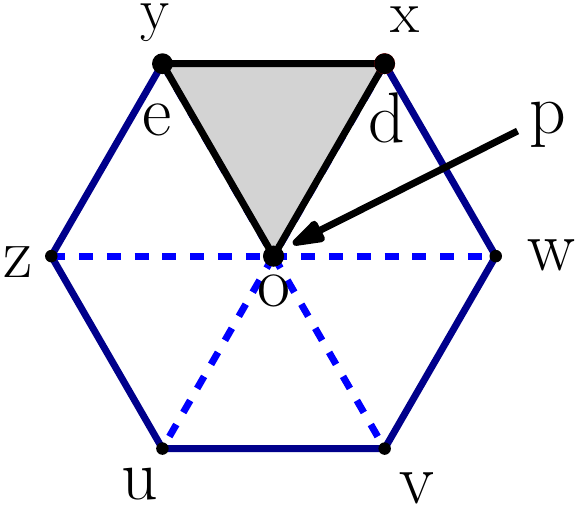}
    \caption{}
    \label{fig:tri_ode}
     \end{subfigure}
     
       \caption{(a) The boundary of the neighbourhood of a triangle (in blue) is marked blue, (b) The neighbourhood $N(\sigma_o)$ partitioned into 6 symmetrical triangles, (c) The point $p$ coincides with the point $o$. }
\end{figure}
 
 \begin{figure}[h]
  \centering
     \begin{subfigure}[b]{0.45\textwidth}
           \centering
        \includegraphics[scale=0.75]{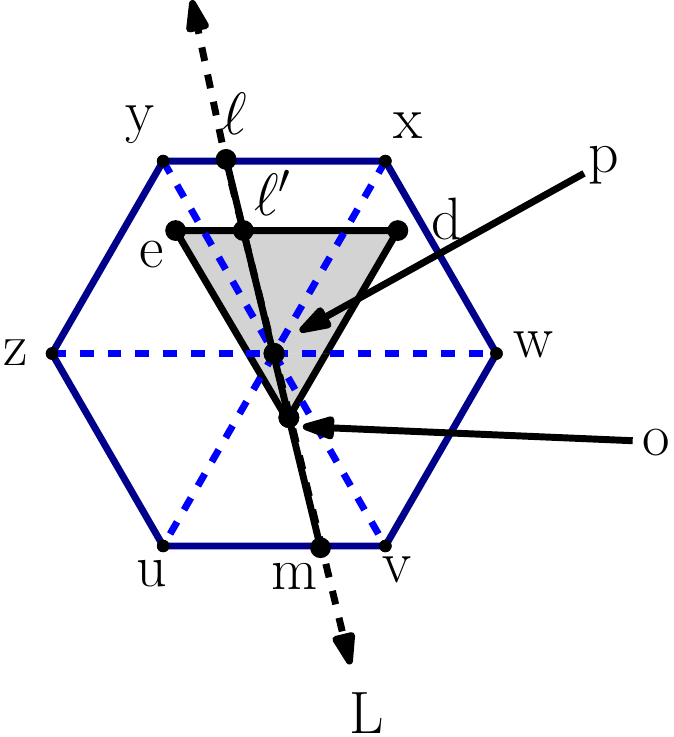}
    \caption{}
     \label{fig:tri_case1}
     \end{subfigure}
     \hfill
     \begin{subfigure}[b]{0.45\textwidth}
          \centering
        \includegraphics[scale=0.75]{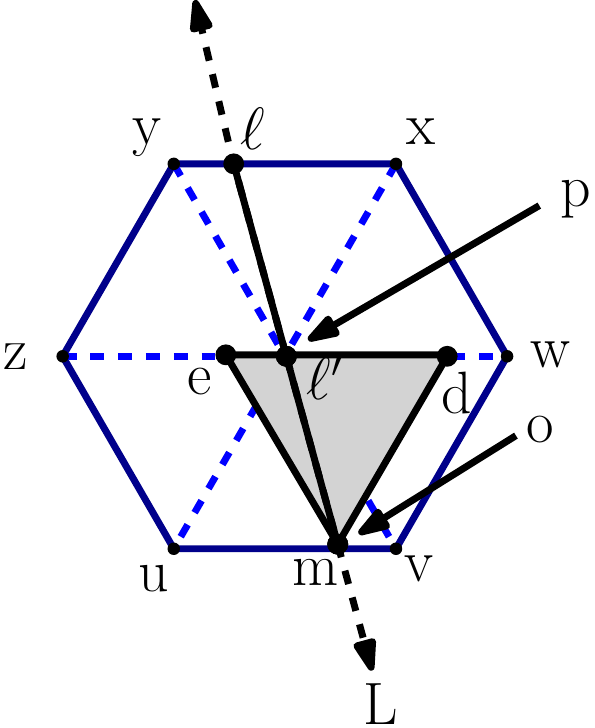}
    \caption{}
    \label{fig:tri_case2}
     \end{subfigure}
     \caption{(a) $\ell,{\ell}'$ and $m$ be the points of intersection of the line $L$ with line segments $xy, de$ and $uv$, respectively, (b) The point $p$ lie inside the $\Delta ode$.}
\end{figure}

\begin{lemma} \label{tri:ub}
The value of independent kissing number $\zeta$ for geometric intersection graph of  fixed oriented unit triangles is at most $6$.
\end{lemma}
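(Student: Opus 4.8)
The plan is to bound the number of pairwise non-touching unit triangles that can simultaneously intersect a fixed core triangle $\sigma_o$, by studying the neighbourhood region $N(\sigma_o)$ and showing it cannot accommodate more than $6$ mutually non-touching centers. As noted in the statement, the neighbourhood of a fixed oriented unit triangle is a hexagon (Figure~\ref{fig:nbd_tri}): it is the Minkowski-type region swept by the centers of all translated unit triangles that touch or overlap $\sigma_o$. First I would partition this hexagonal neighbourhood $N(\sigma_o)$ into $6$ congruent sub-triangles meeting at the center $o$ of $\sigma_o$ (this is exactly the partition drawn in Figure~\ref{fig:tri_nbd}). The goal then reduces to a pigeonhole-style argument: if $7$ or more non-touching triangles had their centers in $N(\sigma_o)$, two of their centers would have to lie in the same sub-triangle, and I would argue that any two centers lying in a common sub-triangle force the corresponding unit triangles to touch, contradicting independence.

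The key geometric lemma is therefore: \emph{two fixed-oriented unit triangles whose centers both lie in one of the $6$ sub-triangles of the partition must intersect.} To prove this I would take two candidate centers $c_1, c_2$ in the same sub-triangle, consider the line $L$ through them, and use the convexity/monotonicity behaviour of the distance from a point to a triangle (the analogue of Observation~\ref{obs:0}) to control the point-to-object distance $d_C(c_1, \sigma_2)$. The figures suggest the intended route: one looks at the intersections $\ell, \ell', m$ of the line $L$ with the relevant edges $xy$, $de$, $uv$ (Figure~\ref{fig:tri_case1}), and then splits into cases according to whether the relevant point $p$ lands outside or inside the triangle $\triangle ode$ (Figures~\ref{fig:tri_ode} and~\ref{fig:tri_case2}). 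In each case I would show the separation between $c_1$ and $c_2$ along $L$ is strictly smaller than the minimal separation needed for two translated unit triangles to be non-touching, so they must overlap or touch.

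I would organize the argument by the location of $p$ (the foot/critical point associated with the sub-triangle geometry): in the generic case $p$ lies outside $\triangle ode$, while the degenerate case is when $p$ coincides with the apex $o$ (Figure~\ref{fig:tri_ode}) or falls inside $\triangle ode$ (Figure~\ref{fig:tri_case2}), which must be handled separately because the monotonicity of the distance function changes regime there. Combining the sub-triangle lemma with the pigeonhole principle over the $6$ sub-cones then yields $\zeta \le 6$.

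The hard part will be the core geometric lemma — verifying rigorously that \emph{every} pair of centers inside a single sub-triangle yields touching triangles, including the boundary and degenerate configurations where a center sits exactly on a partition ray or where $p$ falls inside $\triangle ode$. The case analysis driven by the position of $L$ and $p$ (the content of Figures~\ref{fig:tri_case1} and~\ref{fig:tri_case2}) is where the real work lies; the pigeonhole wrapper on top is routine once that lemma is established. A secondary subtlety is being careful that the partition is into exactly $6$ pieces matching the hexagonal symmetry of $N(\sigma_o)$, so that no center can escape being counted in some sub-triangle.
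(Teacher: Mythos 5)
Your overall skeleton is exactly the paper's: the neighbourhood $N(\sigma_o)$ is a hexagon, it is partitioned into six unit equilateral sub-triangles, and the bound $\zeta\le 6$ follows by pigeonhole once one shows that two centers lying in the same sub-triangle force the corresponding unit triangles to touch or overlap. The paper proves that key lemma by fixing the \emph{second center} $p$ anywhere in the sub-triangle $\Delta ode$ and showing, via a continuous translation argument in three cases ($p=o$; $p$ on the segments $od$ or $oe$; $p$ in the interior), that the hexagonal neighbourhood $N(\sigma_p)$ covers all of $\Delta ode$, so every other center in $\Delta ode$ is a neighbour of $\sigma_p$. Note that $p$ in the figures is simply this second center, not a ``foot/critical point,'' so your proposed case split with a ``generic case $p$ outside $\Delta ode$'' has no counterpart: $p$ lies in $\Delta ode$ by hypothesis.

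The genuine gap is in how you propose to prove the key lemma. You want to show that the separation between $c_1$ and $c_2$ is ``strictly smaller than the minimal separation needed for two translated unit triangles to be non-touching.'' No such scalar threshold can work, because for triangles the touching condition is direction-dependent: the set of difference vectors $c_2-c_1$ that yield touching is the closed regular hexagon $T\oplus(-T)$ of circumradius $1$, whose inradius is $\sqrt{3}/2$. Hence non-touching pairs exist already at separation just above $\sqrt{3}/2$, while two centers in one sub-triangle can be at distance up to $1$ (for instance $o$ and the hexagon vertex $d$); for such extreme pairs your scalar comparison fails, even though those triangles do in fact touch (vertex to vertex), precisely because their difference vector points toward a vertex of the hexagon. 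What must be proved is the directional statement: in every direction, the extent of the sub-triangle does not exceed the reach of the hexagon in that same direction; equivalently, $\Delta ode\subseteq N(\sigma_p)$ for every $p\in\Delta ode$, or $\Delta\oplus(-\Delta)\subseteq T\oplus(-T)$ (in fact these two hexagons coincide, since the difference body of a unit equilateral triangle is exactly this hexagon). Also, ``strictly smaller'' is unattainable: at the extreme pairs equality holds, and one must observe that equality still means touching, which already violates the non-touching requirement. So either carry out the paper's moving/covering argument or verify the Minkowski-difference identity directly; as written, your separation step would fail on the extreme configurations.
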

\begin{proof}
Let $K$ be an optimal independent kissing configuration for  unit triangles. Let the core of the configuration be the triangle $\sigma_o$ centered at $o$. Consider the neighbourhood $N(\sigma_o)$ that contains all the centers of triangles in $K\setminus \{\sigma_o\}$.

Let us partition the neighbourhood $N(\sigma_o)$, denoted by the hexagon $abcdef$, into six sub-regions each consisting of an equilateral triangle with unit side as shown in the Figure~\ref{fig:tri_nbd}.
We will show that each sub-region can contain at most one triangle that belongs to $K\setminus \{\sigma_o\}$.
Now, we are going to prove this for the triangular sub-region $\Delta ode$ (proof for all other triangular sub-regions will be similar).
To prove this, it is enough to show that if we put an equilateral triangle $\sigma_p \in K\setminus \{\sigma\}$  centred at any point $p\in \Delta ode$, $N(\sigma_p)$ will cover the entire triangular sub-region $\Delta ode$.
We denote the hexagon $N(\sigma_p)$ by $uvwxyz$.

\begin{itemize}
\item[(I)] The point $p$ coincides with the point {$o$}. 
 It is easy to observe that $N(\sigma_p)$ is entirely covering \( \Delta ode \) (see Figure~\ref{fig:tri_ode}).

\item[(II)]  The point $p$ lies on the line $od$ or $oe$.
Without loss of generality, assume that $p$ lies on the line $od$. One way to view this is as follows: initially, as in case I, the point $p$ coincides with the point $o$; next, we move the point $p$ along the line $od$ towards the point $d$ (until it reaches its new destination). We can alternatively visualize this as follows: instead of moving the point $p$ on $od$ towards $d$, fixing the point $p$ and its neighbourhood $N(\sigma_p)$, we are moving the \( \Delta ode \) from its initial position  such that the point $o$ always lies on the line $pu$ and $o$ moves towards $u$.
Since  $pu$ and $yz$ are parallel and are of same length,
 $e$ will move along the line $yz$ towards $z$ and when $o$ reaches $u$ the \( \Delta ode \) is moved to the position of \( \Delta upz \).
Hence $N(\sigma_p)$ completely covers the \( \Delta ode \).

\item[(III)] The point $p$ lies inside the \( \Delta ode \). 
Let $L$ be the line obtained by extending the line segment $op$ in both direction. Let $\ell,{\ell}'$ and $m$ be the points of intersection of the line $L$ with line segments $xy, de$ and $uv$, respectively ({see Figure}~\ref{fig:tri_case1}).  One way to visualize the situation is as follows: initially, as in case I, the point $p$ coincides with the point $o$; next, we move the point $p$ along the line $L$ towards the point ${\ell}'$ (until it reaches its new destination). We can alternatively visualize this as follows: instead of moving $p$, we fix the the point $p$ and its neighbourhood $N(\sigma_p)$;  we move the $\Delta ode$ from its initial position $\Delta pxy$ such that the point $o$ moves along the line $L$ towards the point $m$. Observe that, when
$o$ reaches the boundary $uv$, the point
$p$ coincides with the point ${\ell}'$ (see Figure~\ref{fig:tri_case2}). Thus the \( \Delta ode \) is always contained in the neighbourhood $N(\sigma_p)$. This completes the proof.
\end{itemize} 
\end{proof}




\begin{figure}[h]
\centering
\includegraphics[width=40mm]{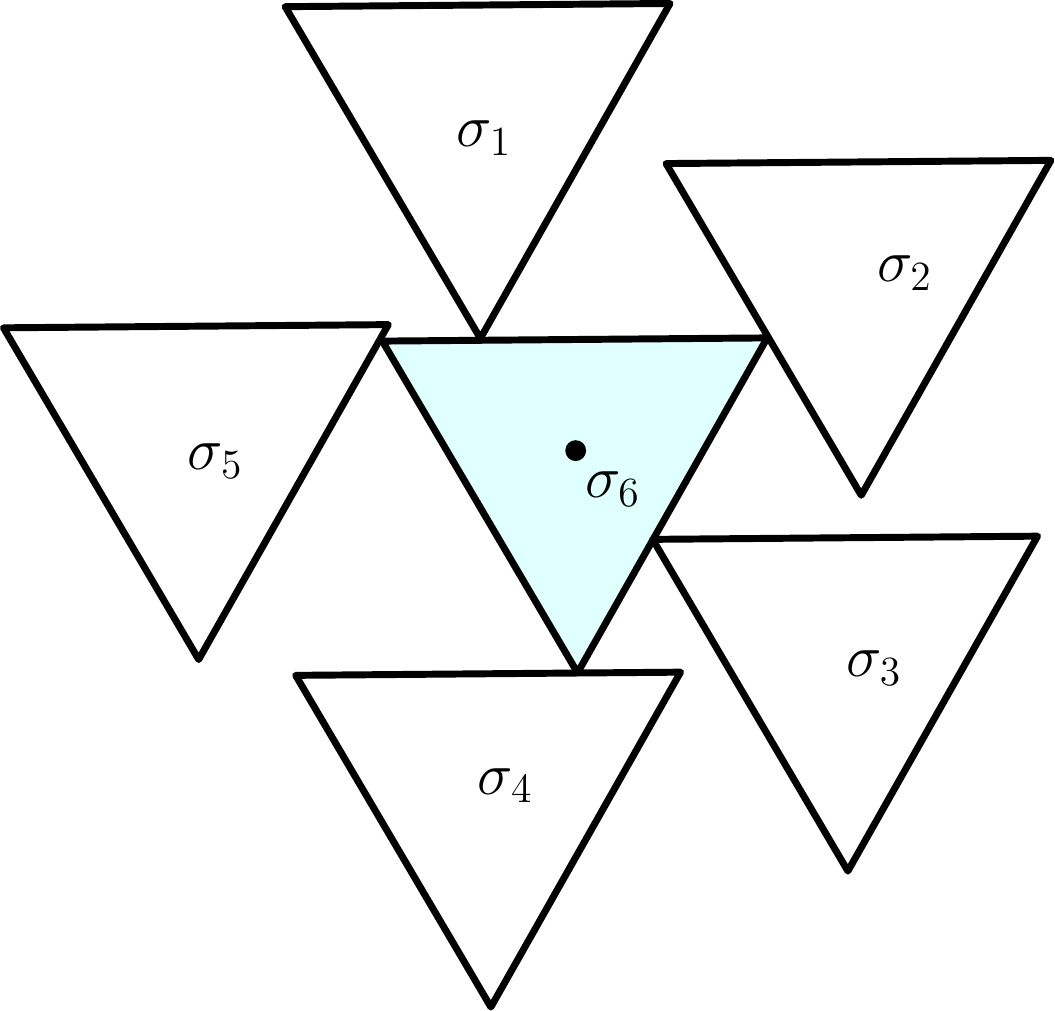}
\caption{An independent kissing configuration for unit triangles.}
\label{fig:triangle}
\end{figure}



In Figure~\ref{fig:triangle}, we have constructed an example of independent kissing configuration for unit triangles where size of the independent set is five. Thus, we raise the following question.

\begin{problem}
What is the   independent kissing number  for geometric intersection graph of  fixed oriented unit triangles- five or six?
\end{problem}

\begin{figure}[h]
\centering
\includegraphics[width=155mm]{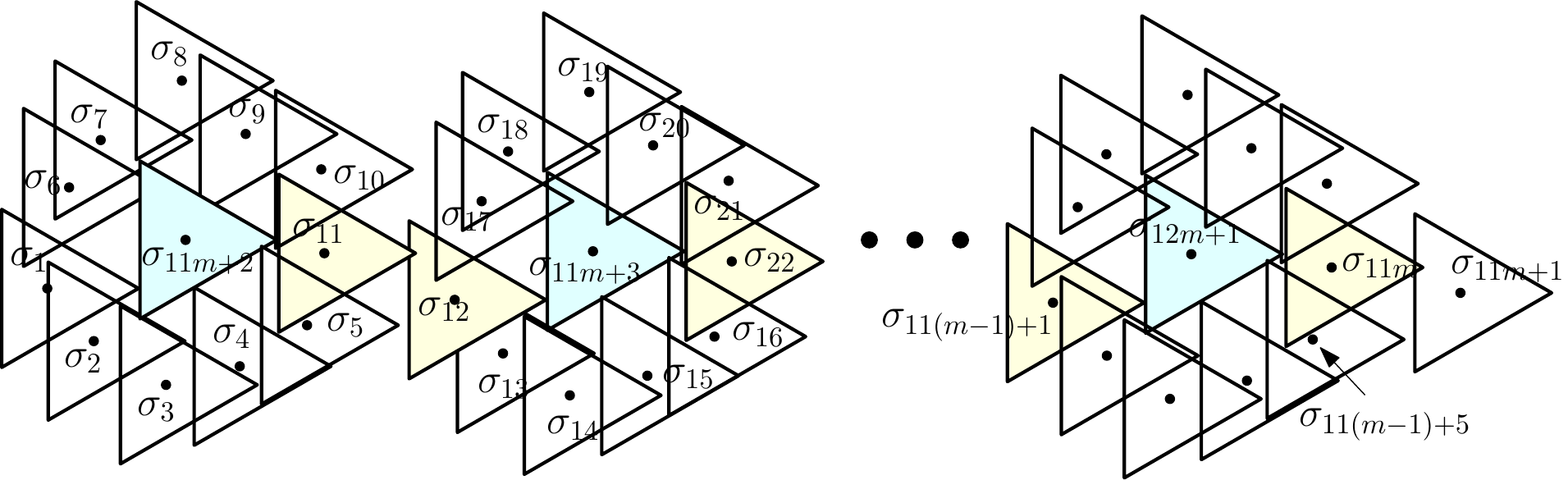}
\caption{Input instance of triangles for connected dominating set.}
\label{fig:CDS_triangle}
\end{figure}

\subsection{Lower Bound of MCDS}\label{5.2}
Since the lower bound of independent kissing number for translates of unit triangle is 5, using Theorem~\ref{thm:mcds_con} we have the lower bound of MCDS problem for unit triangles is at least 8 if $opt_{cds}$ is 1; otherwise it is 3. Here, we have constructed an example (Figure~\ref{fig:CDS_triangle}) of a path of C-blocks for unit triangles of length $m$ and size 11. Thus, similar to the proof of Theorem~\ref{thm:mcds_con}, we can prove  the following.
\begin{theorem}
The  competitive ratio of every deterministic online algorithm  for minimum connected dominating set (MCDS) of triangles is at least $9$, if $opt_{cds}$ is one; otherwise it is at least~$\frac{10}{3}$.
\end{theorem}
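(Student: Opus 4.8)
The plan is to follow the proof of Theorem~\ref{thm:mcds_con} verbatim at the counting level, substituting the generic translated-copy C-block of size $2\zeta$ by the explicit size-$11$ C-block of unit triangles displayed in Figure~\ref{fig:CDS_triangle}. The only genuinely new content is therefore geometric: one must verify that Figure~\ref{fig:CDS_triangle} indeed realizes a C-block of size $11$, i.e.\ eleven fixed-oriented unit triangles whose geometric intersection graph is an $11$-cycle, together with a single core triangle that intersects all eleven of them. Because $2\zeta = 10 < 11$ under the established lower bound $\zeta \ge 5$, this explicit construction is precisely what lets us beat the bounds $8$ and $3$ that Theorem~\ref{thm:mcds_con} yields directly from $\zeta \ge 5$.

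For the case $opt_{cds}=1$, I would present the eleven cycle-objects of a single such C-block in cyclone-order. By Lemma~\ref{lemma:path-of-cycles} with $k=11$ and $m=1$, any deterministic online algorithm must report a connected dominating set of size at least $(k-1)m-1 = 9$. Appending the core triangle as the final object makes one triangle dominate the entire instance, so $opt_{cds}=1$ and the competitive ratio is at least $9$.

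For the general case, I would chain copies of the size-$11$ C-block into a path of C-blocks of length $m$, exactly as in Lemma~\ref{lm:path-of-C-block}: each block is translated so that the unique head--tail contact between consecutive blocks keeps all separation lines parallel, forcing the overall intersection graph to be a path of cycles of size $11$ and length $m$. Feeding the objects in cyclone-order, Lemma~\ref{lemma:path-of-cycles} compels a CDS of size at least $10m-1$; adding one extra triangle $\sigma^{*}$ that touches only the head of the last cycle raises this to $10m$, after which the $m$ core triangles are appended. An offline optimum selects the head, tail, and core of each block, omitting the tail of the first block, for a total of $3m-1$. Hence the competitive ratio is at least $\frac{10m}{3m-1}$, which tends to $\frac{10}{3}$.

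The main obstacle is the geometric verification underlying Figure~\ref{fig:CDS_triangle}: certifying that fixed-oriented equilateral triangles can be placed so that (i) eleven of them form a single cycle in the intersection graph with no spurious chords, (ii) one core triangle simultaneously intersects all eleven, and (iii) consecutive blocks can be abutted through exactly one triangle apiece while preserving parallel separation lines, so that the chained intersection graph is precisely a path of cycles. Once this placement is confirmed, the remainder is the identical bookkeeping as in Theorem~\ref{thm:mcds_con}.
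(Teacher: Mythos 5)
Your proposal matches the paper's own proof: the paper likewise invokes the explicit size-$11$ path of C-blocks from Figure~\ref{fig:CDS_triangle} (noting that Theorem~\ref{thm:mcds_con} with $\zeta\ge 5$ alone would only give $8$ and $3$) and then repeats the counting of Theorem~\ref{thm:mcds_con} via Lemma~\ref{lemma:path-of-cycles}, yielding $9$ when $opt_{cds}=1$ and $\frac{10m}{3m-1}\to\frac{10}{3}$ in general. Your identification of the geometric verification of the figure as the only genuinely new ingredient is also exactly how the paper treats it.
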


\section{Unit Regular $k$-gon ($k \geq 5$)}\label{7}
  
  In this section, we consider translated copies of a regular polygon as a set of input to the online algorithm. 
  
   \subsection{Fixed Oriented }\label{6.1}
\begin{figure}[h]
  \centering
     \begin{subfigure}[b]{0.32\textwidth}
         \centering
\includegraphics[scale=0.35]{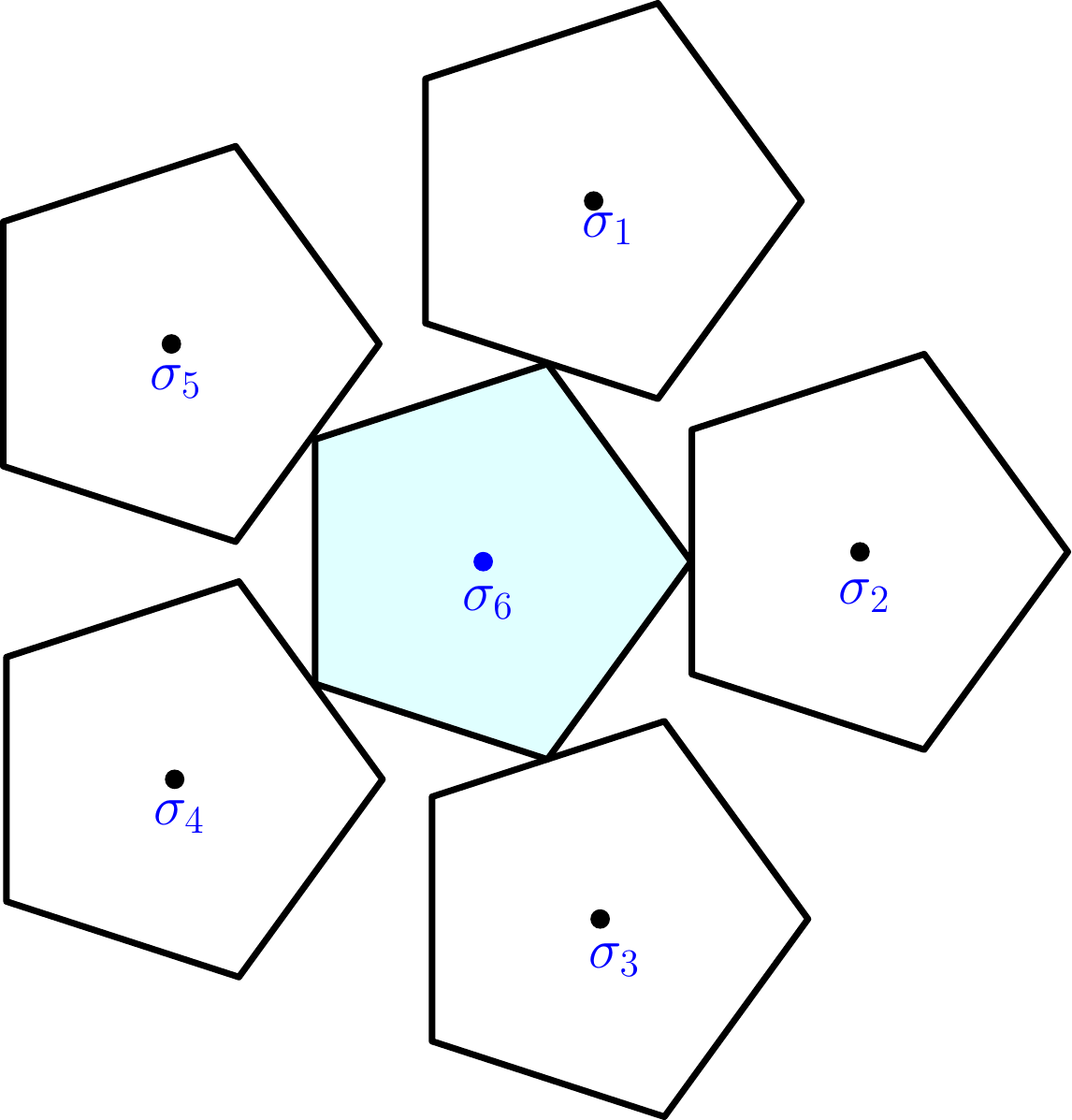} 
\caption{}
\label{fig:pentagon}
     \end{subfigure}
     \hfill
     \begin{subfigure}[b]{0.32\textwidth}
         \centering
\includegraphics[scale=0.35]{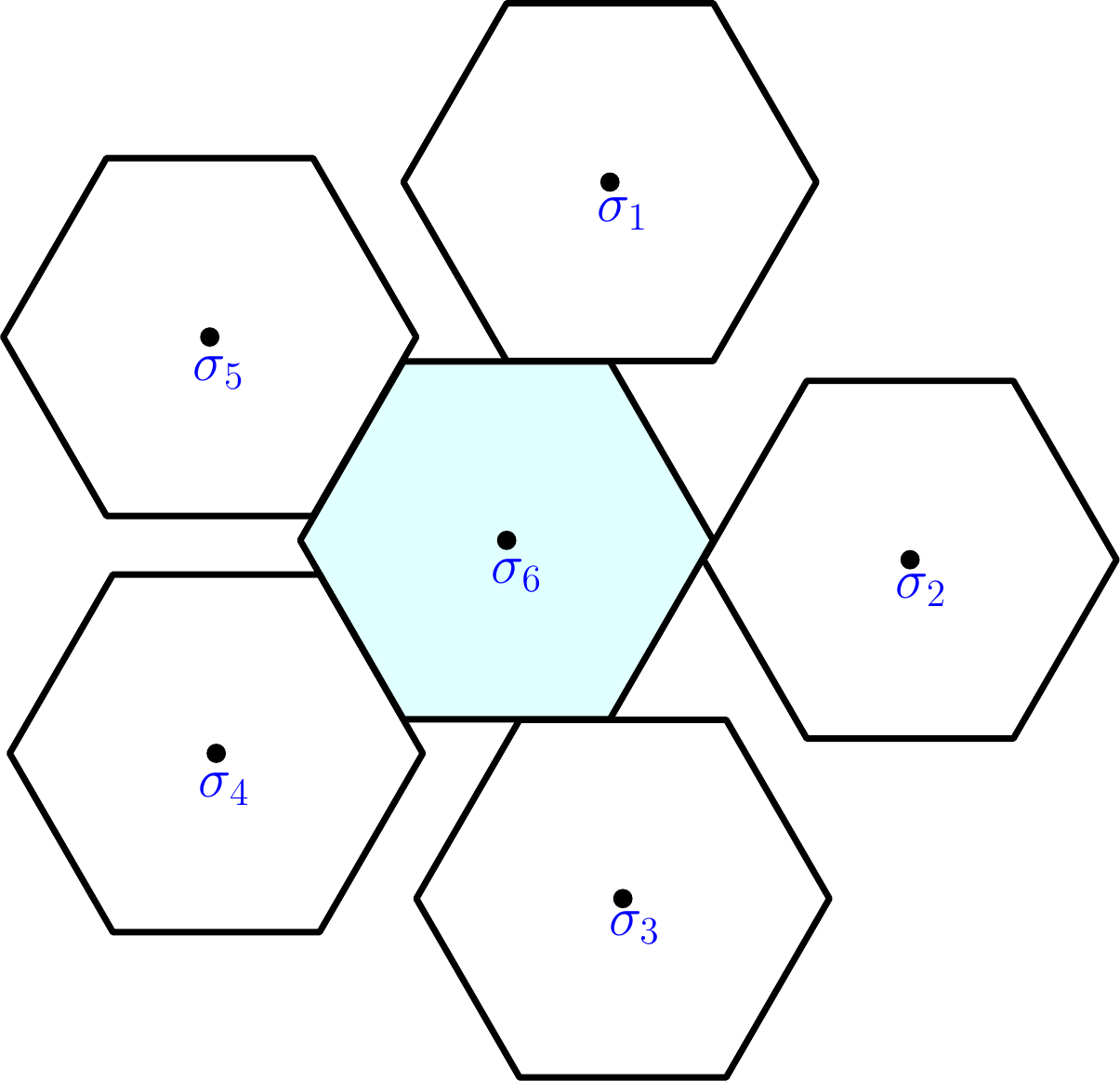}
\caption{}
\label{fig:hexagon}
     \end{subfigure}
     \begin{subfigure}[b]{0.32\textwidth}
         \centering
\includegraphics[scale=0.35]{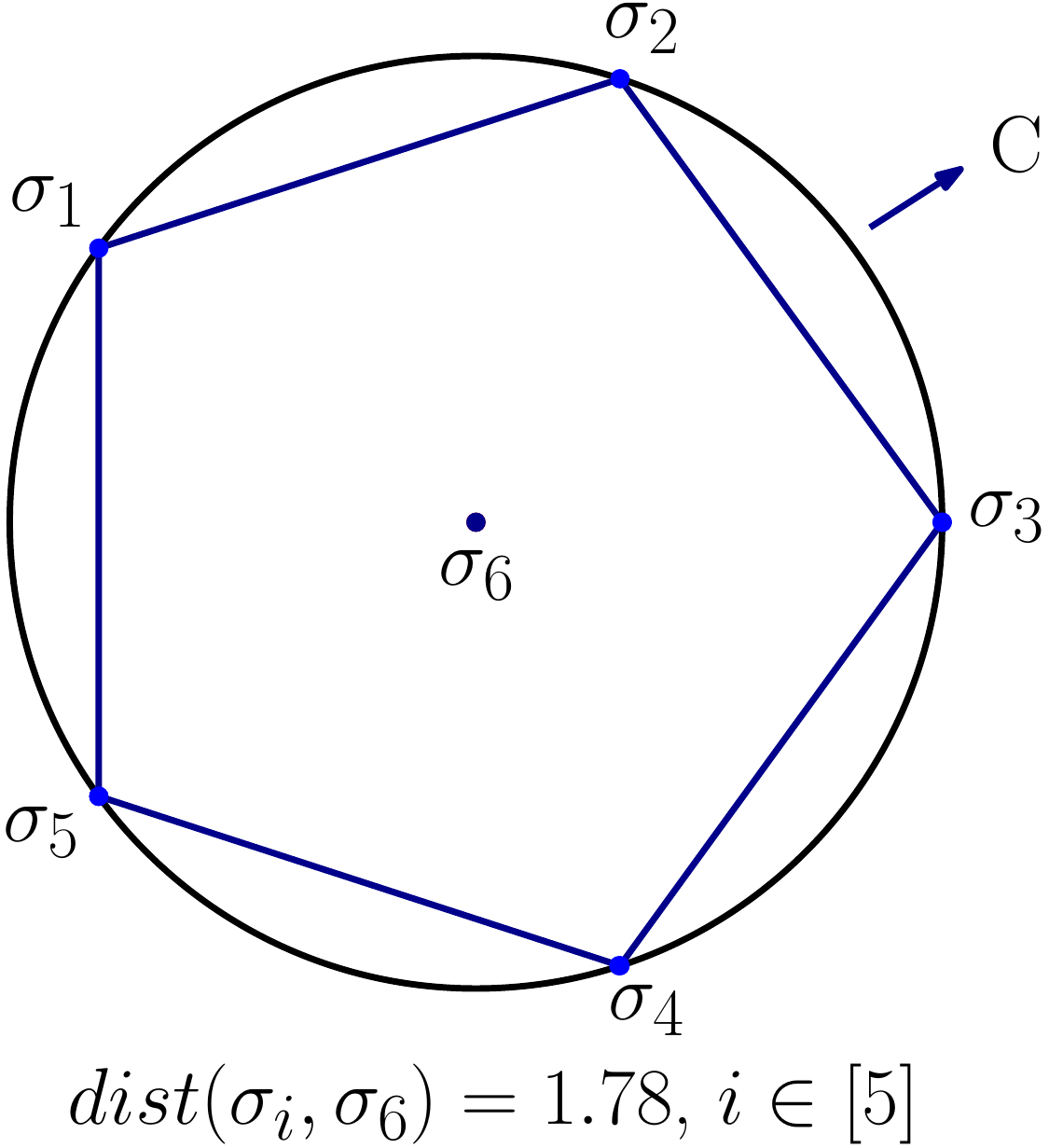}
\caption{}
\label{fig:regular}
     \end{subfigure}
       \caption{Illustration of Lemma~\ref{lm:k_7}: An independent kissing configuration for (a)  pentagon, (b)  hexagon, and (c) $k$-gon, $k \geq 7$, respectively.}
       \label{fig:penta_hexa}
\end{figure}


\begin{lemma}\label{lm:k_7}
 The value of independent kissing number $\zeta$ for geometric intersection graph of $k$-regular polygon is at least $5$, where $k\geq 5$.
\end{lemma}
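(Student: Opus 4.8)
The plan is to establish the lower bound purely by construction: for every $k \geq 5$ I exhibit an independent kissing configuration whose core is a fixed-oriented regular $k$-gon and whose independent set consists of five pairwise non-touching translated copies, each intersecting the core. Since the independent kissing number is the maximum over all choices of core of the largest admissible independent set, producing even one configuration with five neighbours suffices to conclude $\zeta \geq 5$. Following the three illustrations in Figure~\ref{fig:penta_hexa}, I would split the argument into the cases $k=5$, $k=6$, and $k \geq 7$, because the rotational symmetry available for placing the copies conveniently differs in each regime.

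For the pentagon ($k=5$, Figure~\ref{fig:pentagon}) the core has five-fold symmetry, so I would place the five copies along the five symmetric outward directions, pushing each copy outward just until it still overlaps the core; the five-fold symmetry then makes the gaps between consecutive copies equal, and a short computation shows these gaps can be kept strictly positive. For the hexagon ($k=6$, Figure~\ref{fig:hexagon}), six translated copies placed in the hexagonal-tiling directions would each touch the core but consecutive ones would also touch one another; I would therefore use only five of the six directions and slide the copies slightly outward along their rays so that each still meets the core while the pairwise intersections vanish. For $k \geq 7$ (Figure~\ref{fig:regular}) the polygon is ``round enough'': I would inscribe it in its circumscribed disk and place the five copies at angular positions $72^{\circ}$ apart, mirroring the unit-disk configuration that realises $\zeta = 5$ for disks, choosing the common center-to-center distance so that each copy still intersects the core.

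The verification reduces, in every case, to two competing conditions. First, each copy must \emph{intersect} the core, i.e.\ the distance from the core's center to the copy's center must not exceed the contact distance in that direction, which for a unit-side $k$-gon is squeezed between twice the inradius $r_k = \tfrac12 \cot(\pi/k)$ and twice the circumradius $R_k = \tfrac{1}{2\sin(\pi/k)}$. Second, consecutive copies must be \emph{non-touching}, i.e.\ the distance between their centers must strictly exceed the corresponding contact distance. For the two symmetric cases these follow from exact angle computations, while for $k \geq 7$ I would bound the $k$-gon between its inscribed and circumscribed circles and argue that the strict slack present for disks at $72^{\circ}$ separation (where the adjacent-center distance is a factor $2\sin 36^{\circ}$ larger than the radial distance, comfortably above the touching threshold) survives the polygonal perturbation.

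The main obstacle is exactly this simultaneous verification over the uniform range $k \geq 7$: the two constraints pull in opposite directions, since intersecting the core wants the copies close while non-touching wants them far, and the positive margin must be established for \emph{all} $k \geq 7$ at once rather than for a single polygon. I expect the cleanest route is to prove a margin estimate that is uniformly bounded away from zero in $k$, exploiting that $R_k/r_k \to 1$ as $k \to \infty$ so the tightest case occurs at the smallest admissible value $k = 7$, where a direct check closes the argument.
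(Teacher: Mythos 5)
Your overall strategy is the same as the paper's: exhibit an explicit five-neighbour configuration, split into the cases $k=5$, $k=6$, $k\geq 7$, and handle the large-$k$ range by placing the five copies $72^{\circ}$ apart and sandwiching the polygon between its inscribed and circumscribed circles (the paper does exactly this, with the concrete radial distance $1.78$ for circum-radius-one polygons: pairwise center distances $2\cdot 1.78\sin(\pi/5)>2$ give non-touching, while $1.78<2\cos(\pi/7)$ gives intersection with the core). Your $k\geq 7$ margin argument, including the observation that the binding case is $k=7$ because $\cos(\pi/k)$ increases in $k$, is sound. Your $k=5$ sketch also goes through, but note it cannot rely on the inscribed disk: the correct reason is that in the five symmetric directions the contact distance between translates equals the width $1+\cos 36^{\circ}\approx 1.809$ (circum-radius one), while consecutive copies then sit at distance about $2\cdot 1.809\sin 36^{\circ}\approx 2.13$, exceeding the diameter $2\sin 72^{\circ}\approx 1.90$, hence non-touching.

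The genuine gap is your $k=6$ construction. In the hexagonal tiling the six neighbours sit at exactly the contact distance from the core ($\sqrt{3}$ for circum-radius one, in the edge-normal directions): they touch the core without overlapping it. Sliding a copy outward along its ray by any $\epsilon>0$ therefore pushes it strictly beyond the contact distance, and its intersection with the core becomes empty --- this is precisely the monotonicity in the paper's Observation~\ref{obs:0}. So ``each still meets the core while the pairwise intersections vanish'' cannot be achieved by that move; core contact is lost at the same instant the mutual contacts are. The repair is cheap and already contained in your own argument: the disk-sandwich placement covers $k=6$ as well, since one only needs a radial distance $d$ with $1/\sin 36^{\circ}\approx 1.701 < d \leq 2\cos(\pi/6)=\sqrt{3}\approx 1.732$ (e.g.\ $d=1.72$). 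Equivalently, the sandwich works exactly when $1/\cos(\pi/k) < 2\sin 36^{\circ}$, which holds for all $k\geq 6$ and fails only at $k=5$ --- which is why the pentagon genuinely needs the separate width-based computation above. (The paper instead settles both $k=5$ and $k=6$ by the explicit configurations in Figure~\ref{fig:penta_hexa}.)
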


\begin{proof}
For $k=5$ and $6$, refer to  Figure~\ref{fig:penta_hexa}, where we have constructed examples of independent kissing configuration having the  independent set of size five. For $k\geq 7$, we have the following proof.
Consider a circle $C$ of radius $1.78$ centred at $\sigma_6$ (see Figure~\ref{fig:regular}). Let us inscribe a pentagon of largest radius inside $C$. Let $\sigma_1,\sigma_2,\ldots,\sigma_5$ be the five corners of this pentagon. Observe that $dist(\sigma_i,\sigma_j)= 2\times 1.78 \sin(\pi/5)  > 2$, where $i,j\in[5]$ and $i\neq j$. Therefore, all the regular $k$-gons (having circum-radius $1$) centred at $\sigma_1, \sigma_2,\ldots, \sigma_5$ are pair-wise non-touching.

Let $R$ and $r$ be the circum-radius and in-circle radius of a regular $k$-gon, respectively. It is easy to observe that the relation between $R$ and $r$ is $r=R\cos({\frac{\pi}{k}})$. Clearly, $r$ is the minimum distance from the center to any edge of regular $k$-gon. For $k \geq 7$, the minimum distance from the center to the edge of a regular $k$-gon (having circum-radius $1$) is $\cos{\frac{\pi}{k}}$ which is at least $\cos{\frac{\pi}{7}}$ i.e., approx~0.9.
 Since, the  $dist(\sigma_i,\sigma_6)$ is less than $1.8$ for all $i\in[5]$,  the regular $k$-gon centred at $\sigma_6$ intersects all the regular $k$-gon centred at $\sigma_1, \sigma_2,\ldots, \sigma_5$.  Hence, 5 is the lower bound.
\end{proof}

On the other hand, the kissing number for regular $k$-gon 
 ($k \geq 5$) is 6~\cite{BrassMP,Schutte}.  Thus, we have the following question.

 
 \begin{problem}
 What is the independent kissing number  for geometric intersection graph of fixed oriented $k$-gon ($k \geq 5$): five or six?
 \end{problem}




  \subsection{Arbitrary Oriented}\label{8.1}

From Figure~\ref{arb}, we can observe that the independent kissing number for $k$-regular polygon ($5\leq k\leq10$) is at least 6. Zhao and Xu~\cite{ZHAO1998293} proved that the kissing number of $k$-regular polygon ($k \geq 5)$ is 6. As a result, the independent kissing number for $k$-regular polygon ($5\leq k\leq10$) is exactly 6. 
Since independent kissing number of unit disks is $5$. Therefore, we raise the following question:
\begin{problem}
 What is the minimum value of $k$ for which the geometric intersection graph of arbitrary oriented $k$-gon has  the independent kissing number 5?
 \end{problem}

\begin{figure}[h]
  \centering
     \begin{subfigure}[b]{0.32\textwidth}
         \centering
\includegraphics[scale=0.3]{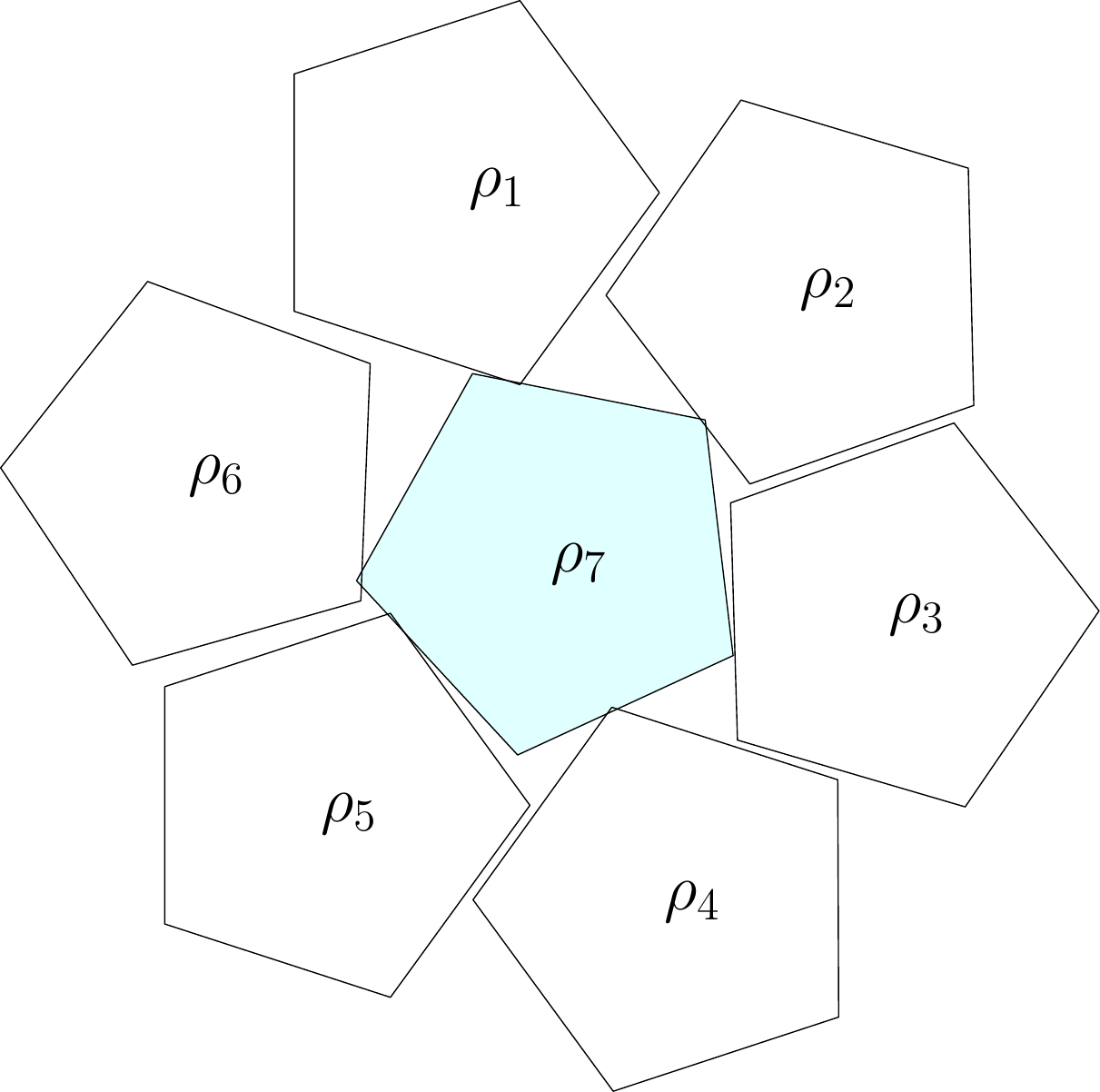} 
\caption{}
\label{fig:arb_pentagon}
     \end{subfigure}
     \hfill
     \begin{subfigure}[b]{0.32\textwidth}
         \centering
\includegraphics[scale=0.3]{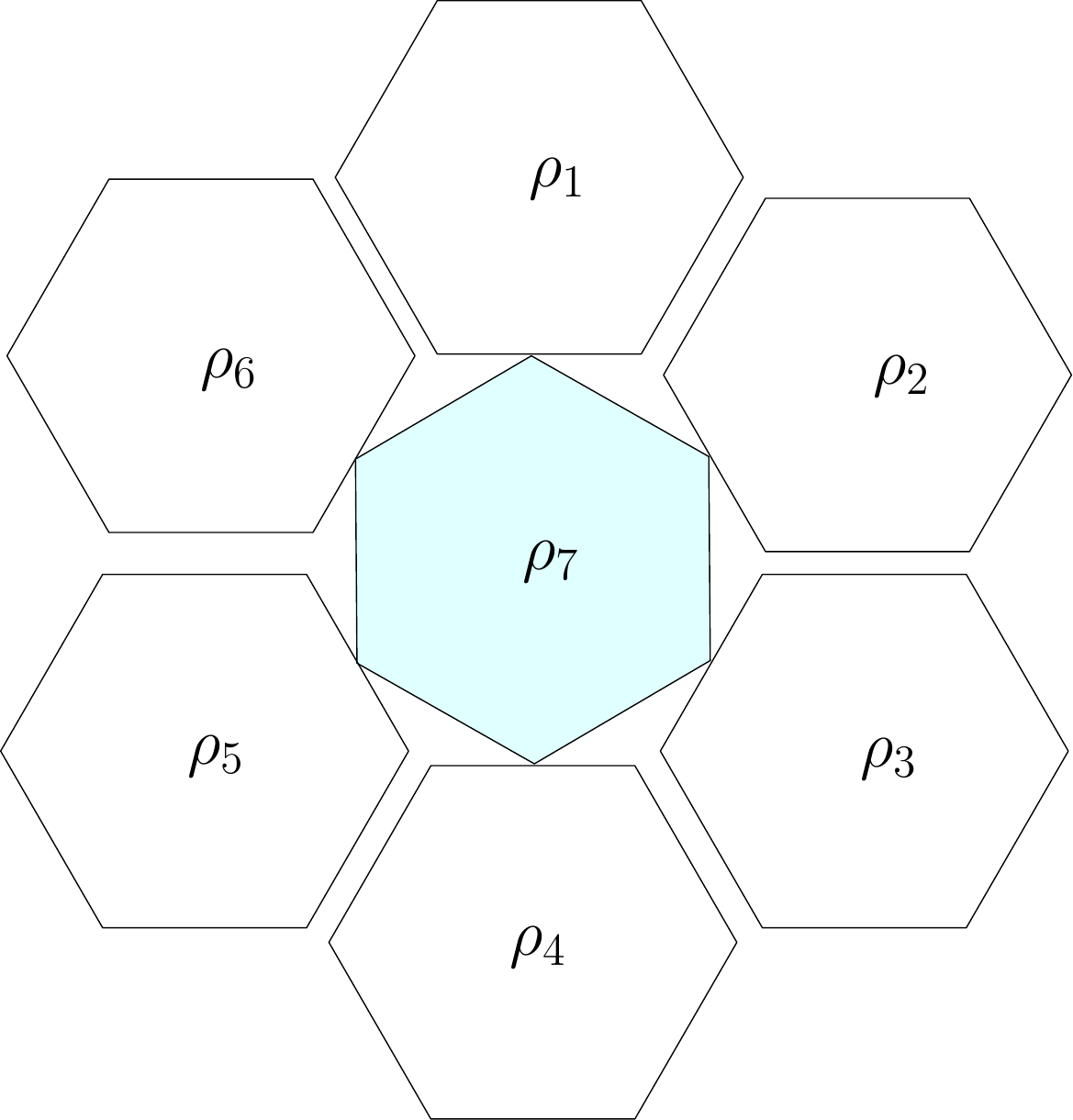}
\caption{}
\label{fig:arb_hexagon}
     \end{subfigure}
     \begin{subfigure}[b]{0.32\textwidth}
         \centering
\includegraphics[scale=0.3]{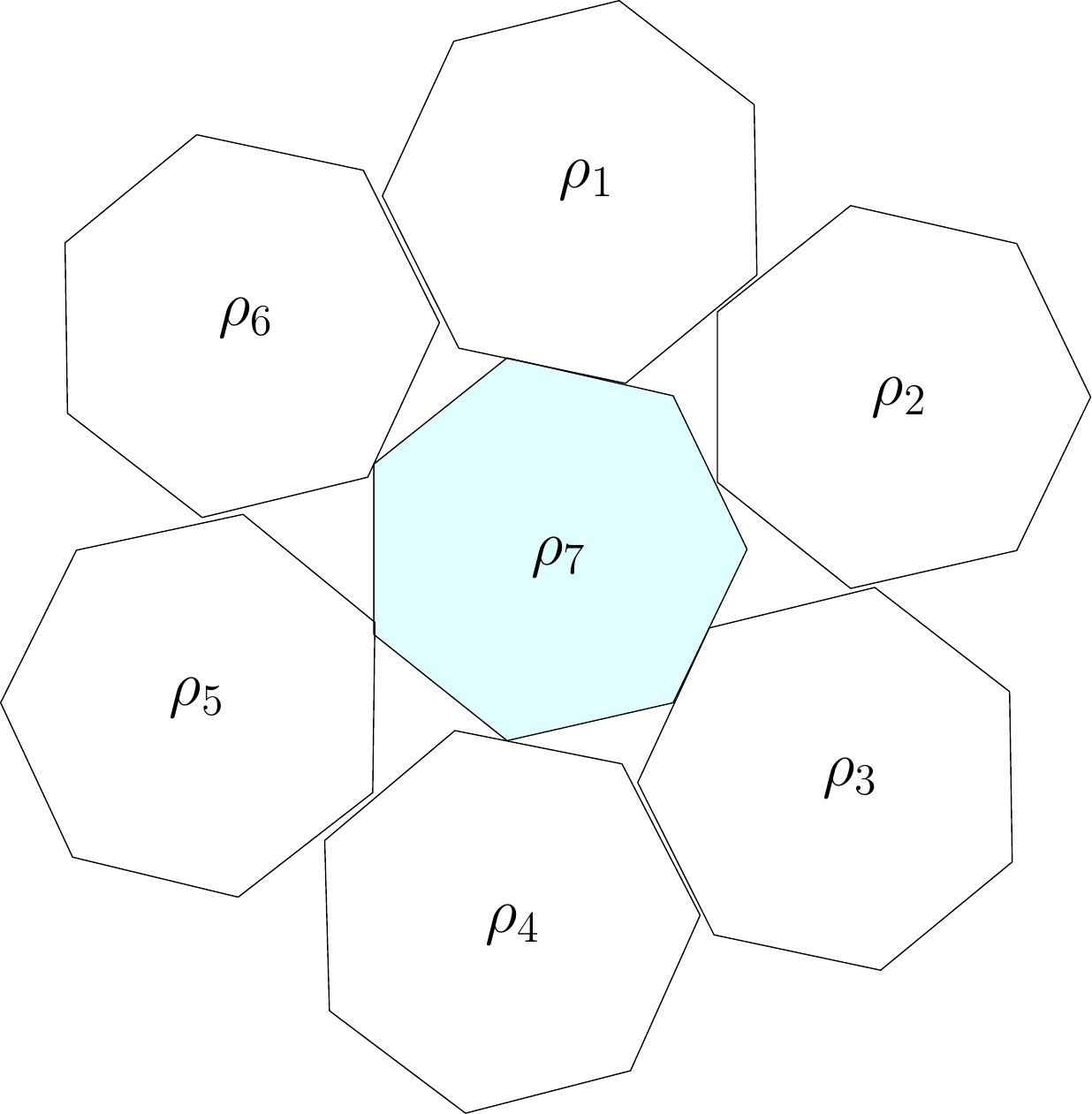}
\caption{}
\label{arb_septa}
     \end{subfigure}
     \begin{subfigure}[b]{0.32\textwidth}
         \centering
\includegraphics[scale=0.3]{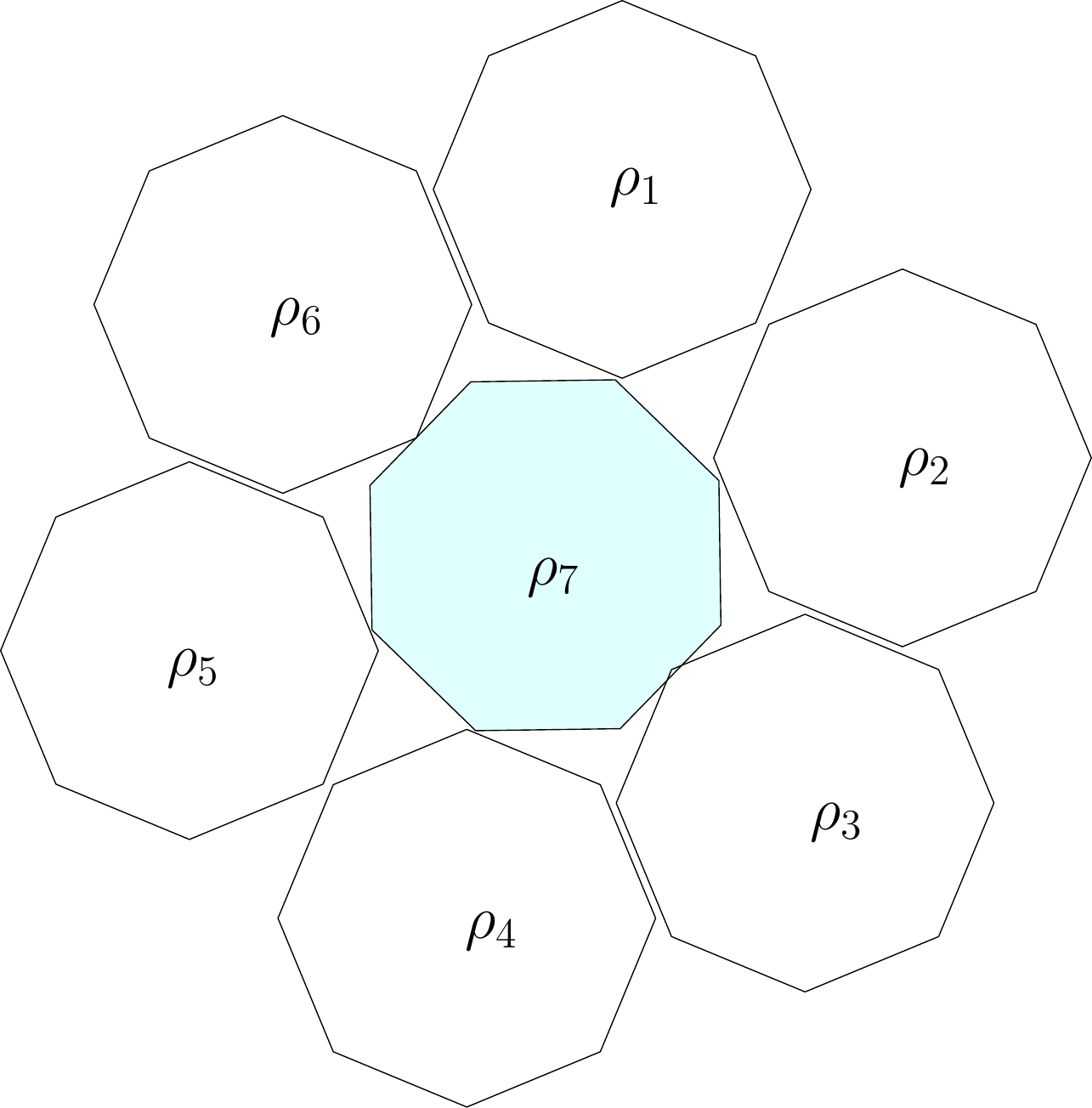}
\caption{}
\label{arb_octa}
\end{subfigure}
\begin{subfigure}[b]{0.32\textwidth}
\centering
\includegraphics[scale=0.3]{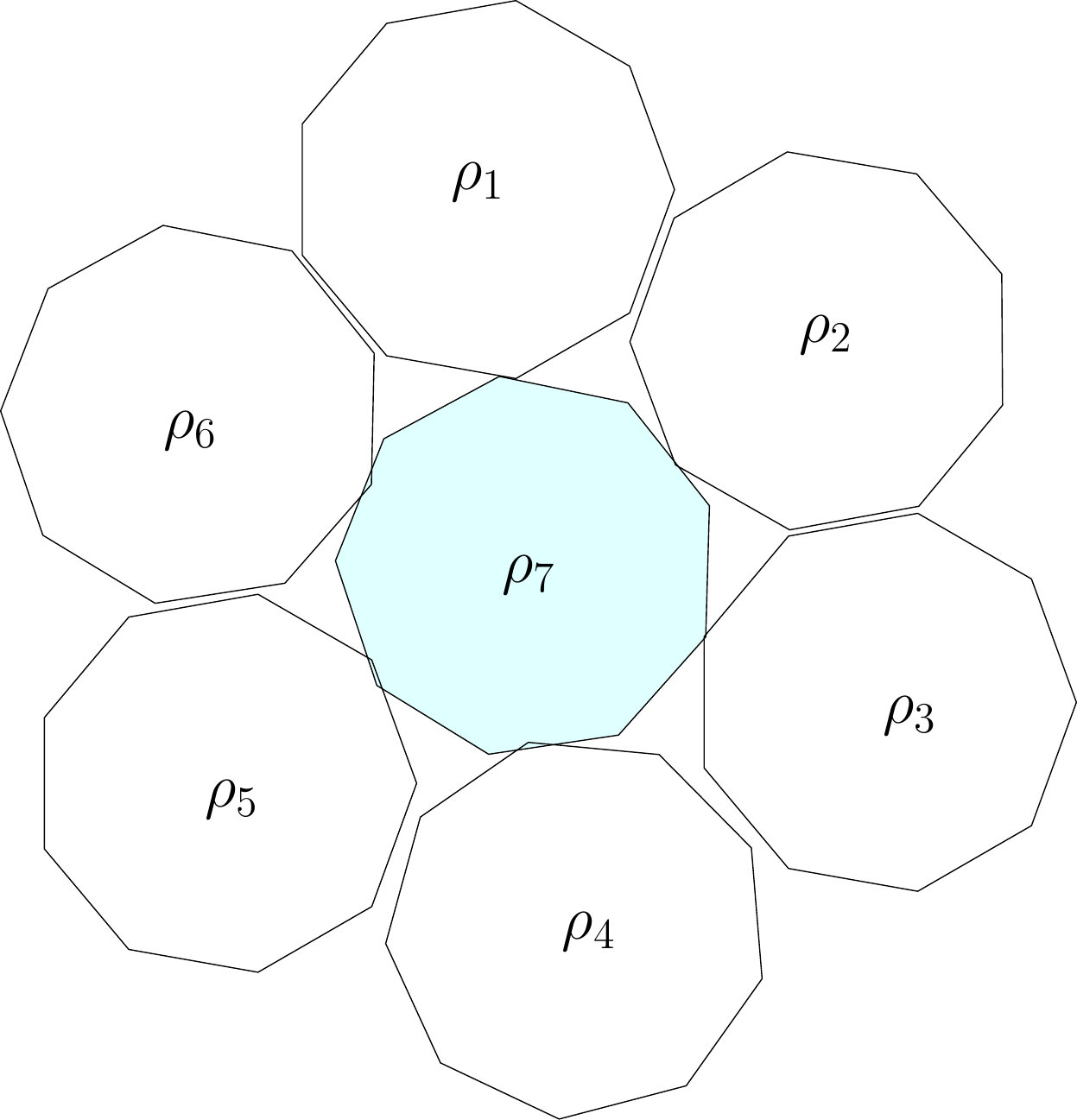}
\caption{}
\label{arb_nona}
\end{subfigure}
\begin{subfigure}[b]{0.32\textwidth}
\centering
\includegraphics[scale=0.3]{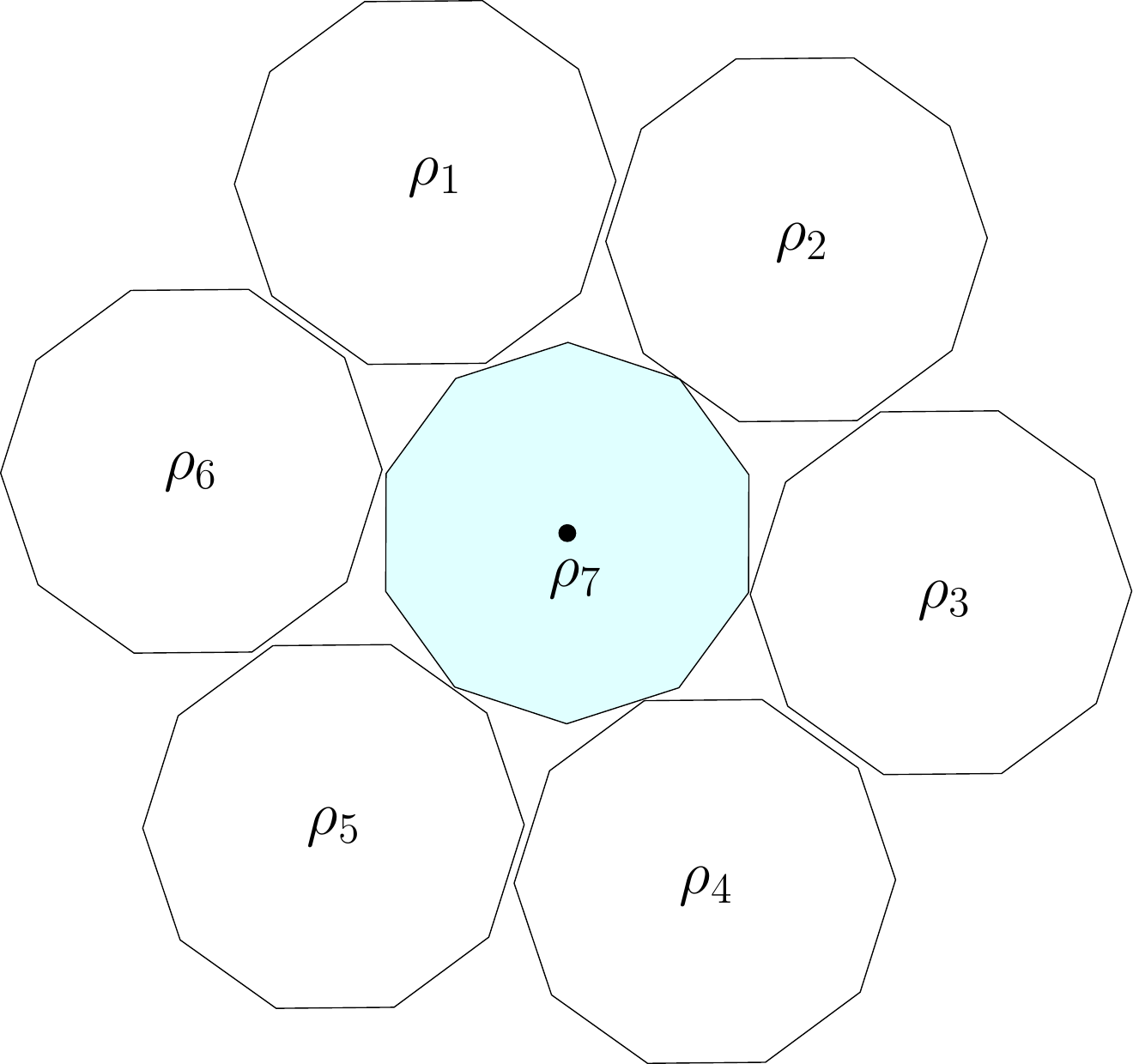}
\caption{}
\label{arb_deca}
\end{subfigure}
       \caption{ Independent kissing configuration for (a)~pentagon (b)~hexagon (c)~septagon (d)~octagon (e)~nonagon (f)~decagon.}
       \label{arb}
\end{figure}




\section{Conclusion}\label{10}
In this paper, we define independent kissing number $\zeta$ of a graph $G$ as $max_{v \in V}  \{\alpha(G[N(v)])\}$. The value of $\zeta$ for geometric objects like fixed oriented squares, unit disks and  fixed oriented $d$-dimensional hyper-cubes are 4, 5 and $2^d$, respectively; whereas the value of $\zeta$ for fixed oriented triangles, regular $k$-gons and arbitrary oriented squares belongs to $\{5,6\}, \{5,6\}$ and $\{6,7\}$, respectively. We show that  for  graphs  with  independent kissing number $\zeta$, the well known online greedy algorithm is optimal for minimum dominating-set, minimum independent dominating set and maximum independent set problems, and it achieves a competitive ratio of $\zeta$. For geometric intersection graph,  the well-known greedy algorithm $GCDS$ for MCDS achieves the asymptotic competitive ratio at most $2(\zeta-1)$; whereas the competitive ratio of every deterministic online algorithm  for MCDS for translates of a convex object  is at least $2(\zeta$-1), if $opt_{cds}$ is one; otherwise it is at least~$\frac{2(\zeta-1)+1}{3}$.
As an implication, it would be interesting combinatorial problem to know for which other graph classes the independent kissing number is constant.

\begin{figure}[h]
  \centering
     \begin{subfigure}[b]{0.45\textwidth}
         \centering
\includegraphics[scale=0.75]{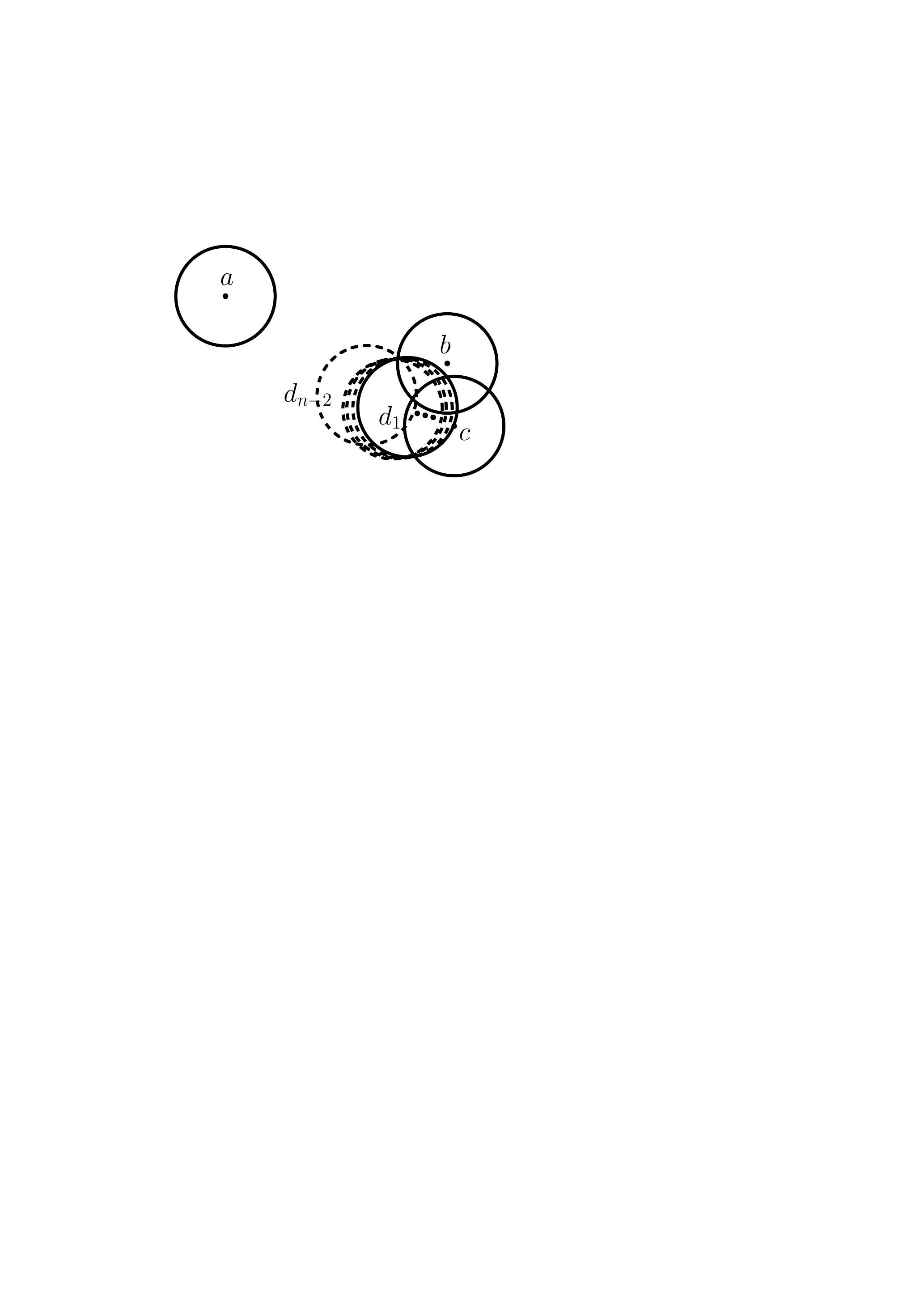} 
\caption{}
\label{fig:}
     \end{subfigure}
     \hfill
     \begin{subfigure}[b]{0.45\textwidth}
         \centering
\includegraphics[scale=0.75]{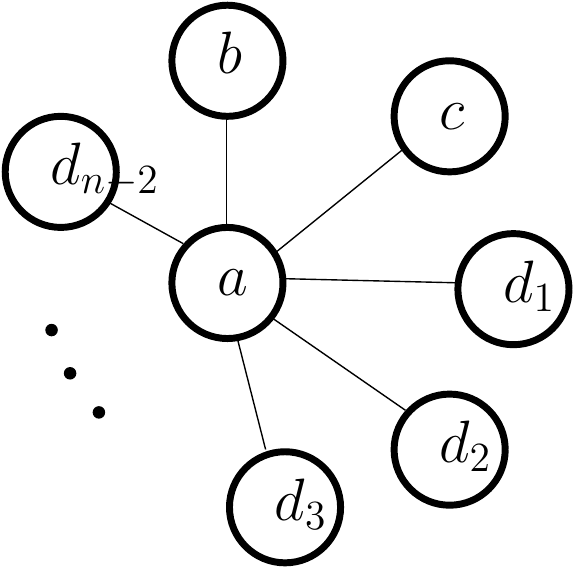}
\caption{}
\label{fig}
     \end{subfigure}
      \caption{(a) Set of unit disks, and (b)the  corresponding complement graph.}
       \label{fig:comparision}
\end{figure}

 Competitive ratio preserving reduction from  the maximum independent set  to the maximum clique for general graphs is known~\cite[Thm 7.3.3]{BorodinP}. As a result,  the lower bound for maximum clique is at least the lower bound of maximum independent set problem for general graphs. But  we cannot apply  this simple approach to graphs having  independent kissing number $\zeta$ because the independent kissing number may not be bounded for the complement graph $G_C$ of a graph $G$. For example, one can observe that $\zeta$ is unbounded (i.e.,  equal to the number of vertices $n$) for the complement graph of an unit disk graph (see Figure~\ref{fig:comparision}).  Whether one can obtain better result for the online maximum clique problem for geometric intersection graphs  remains as an open question.

 \subsection*{Acknowledgement}
The authors wish to thank  an anonymous reviewer  for referring to the article~\cite{1512873} 
that helps to improve the state of the art upper bound of the MCDS problem for unit disk graph in the online setup.





%
\bibliographystyle{siam}
\bibliography{ref}
\end{document}